\newcommand{\kcspip}{\ensuremath{k\operatorname{-CS-PIP}}\xspace}
\newcommand{\pip}{\ensuremath{\operatorname{PIP}}\xspace}
\newcommand{\pips}{\ensuremath{\operatorname{PIPs}}\xspace} 
\newcommand{\ufp}{\m{\operatorname{UFP-TREES}}\xspace}
\newcommand{\lca}{\m{\operatorname{LCA}}}
\newcommand{\bigc}{\ensuremath{\operatorname{big}}\xspace}
\newcommand{\medc}{\ensuremath{\operatorname{med}}\xspace}
\newcommand{\smallc}{\ensuremath{\operatorname{tiny}}\xspace}
\newcommand{\bigb}{\operatorname{BB}\xspace}
\newcommand{\medb}{\operatorname{MB}\xspace}
\newcommand{\smallb}{\operatorname{TB}\xspace}
\newcommand{\RF}{\cR_{F}}
\def \SOL {\ensuremath{\operatorname{SOL}}\xspace}
\newcommand{\func}{\textsc}
\newcommand{\wrt}{{\em w.r.t.~\xspace}}
\newcommand{\sksp}{\ensuremath{\operatorname{SKSP}}\xspace}
\newcommand{\ad}{E}    
\newcommand{\balp}{\bar{\alpha}}  
\newcommand{\bbeta}{\bar{\beta}}  
\def \HM {\ensuremath{\operatorname{HM}}\xspace}
\title{Algorithms to Approximate Column-Sparse Packing Problems\thanks{Extended abstract of this paper appeared in SODA-2018. Full version will appear in ACM Transactions of Algorithms (TALG). Research supported by NSF Awards CNS 1010789, CCF 1422569, CCF-1749864, a gift from Google and research awards from Adobe and Amazon.}
}
\author[1]{Brian Brubach\thanks{\textbf{Email: } \texttt{bbrubach@cs.umd.edu}}}
\author[1]{Karthik A. Sankararaman\thanks{\textbf{Email: }\texttt{karthikabinavs@gmail.com}}}
\author[1]{Aravind Srinivasan\thanks{\textbf{Email: } \texttt{srin@cs.umd.edu}}}
\author[1, 2]{Pan Xu\thanks{\textbf{Email: } \texttt{panxu@cs.umd.edu}}}
\affil[1]{Department of Computer Science, University of Maryland, College Park, USA}
\affil[2]{New Jersey Institute of Technology, Newark, NJ, USA}
\date{}
\begin{document}

\maketitle

\begin{abstract}
		Column-sparse packing problems arise in several contexts in both deterministic and stochastic discrete optimization. We present two unifying ideas, \emph{(non-uniform) attenuation} and \emph{multiple-chance algorithms}, to obtain improved approximation algorithms for some well-known families of such problems. As three main examples, we attain the integrality gap, up to
lower-order terms, for known LP relaxations for $k$-column-sparse packing integer programs (Bansal \etal, \emph{Theory of Computing}, 2012) and stochastic $k$-set packing (Bansal \etal, \emph{Algorithmica}, 2012), and go ``half the remaining distance" to optimal for a major integrality-gap 
conjecture of F\"uredi, Kahn and Seymour on hypergraph matching (\emph{Combinatorica}, 1993). \\

\end{abstract}

\thispagestyle{empty}
  \newpage

\setcounter{page}{1}
\pagenumbering{arabic}

\section{Introduction}

Column-sparse packing problems arise in numerous contexts (e.g., \cite{AGM15,BCNSX15,BGLMNR12,BKNS12,KS04,BS00,CH01,CL12,LLRS01,FKS93,Kl96}). 
We present two unifying ideas (attenuation and multiple-chances) to obtain improved approximation algorithms and/or (constructive) existence results for some well-known families of such problems. These two unifying ideas help better handle the \emph{contention resolution} \cite{CVZ14} that is implicit in such problems. As three main examples, we attain the integrality gap (up to
lower-order terms) for known LP relaxations for $k$-column-sparse packing integer
programs (\kcspip: Bansal \etal~\cite{BKNS12}) and stochastic
$k$-set packing (\sksp: Bansal \etal~\cite{BGLMNR12}), and go ``half
the remaining distance" to optimal for a major integrality-gap conjecture of 
F\"uredi, Kahn and Seymour on hypergraph matching \cite{FKS93}.

Letting $\bR_+$ denote the set of non-negative reals, a general \textit{Packing Integer Program} (\pip) takes the form,
\begin{equation}\label{eqn:pip}
\max \left\{ f(\vec{x}) \, | \, \vec{A} \cdot \vec{x} \leq \vec{b}, \vec{x} \in \{ 0, 1 \}^n \right\}, \text{ where } \vec{b} \in \bR_+^m, \text{ and } \vec{A} \in \bR_+^{m \times n};
\end{equation}	
here $\vec{A} \cdot \vec{x} \leq \vec{b}$ means, as usual, that $\vec{A} \cdot \vec{x} \leq \vec{b}$ coordinate-wise. Furthermore, $n$ is the number of variables/columns, $m$ is the number of constraints/rows, $\vec{A}$ is the matrix of sizes with the $j^{th}$ column representing the size vector $\SI_j \in \mathbb{R}_{+}^m$ of \textit{item} $j$, $\vec{b}$ is the \textit{capacity} vector, and $f$ is some non-decreasing function (often of the form $\vec{w} \cdot \vec{x}$, where $\vec{w}$ is a nonnegative vector of weights). The items' ``size vectors" $\SI_j$ can be deterministic or random. \pips generalize a large class of problems in combinatorial optimization. These range from optimally solvable problems such as classical matching to much harder problems like independent set which is NP-Hard to approximate to within a factor of $n^{1-\epsilon}$ \cite{Zu07}.

A $k$-column-sparse packing program ($k\operatorname{-CS-PP}$) refers to a special case of packing programs wherein each size vector $\SI_j$ (a column of $\vec{A}$) takes positive values only on a subset $\cC(j) \subseteq [m]$ of coordinates with $|\cC(j)| \le k$. The $k\operatorname{-CS-PP}$ family captures a broad class of packing programs that are well studied such as $k$-column-sparse packing integer programs (\kcspip), $k$-uniform hypergraph matching, stochastic matching, and stochastic $k$-set packing (\sksp). While we primarily focus on programs with linear objectives, some of these approaches can be extended to monotone submodular objectives as well from prior work (\eg \cite{BKNS12}, \cite{CVZ14}).

We show randomized-rounding techniques (including \textit{non-uniform attenuation}, \textit{multiple chances}) that, along with the ``nibble method'' \cite{DBLP:journals/jct/AjtaiKS80,DBLP:journals/ejc/Rodl85} in some cases, 
yield improved results for some important families of Packing Integer Programs (\pips). In the case of \kcspip and \sksp, we show \textit{asymptotically optimal} bounds matching the LP integrality gap (as a function of the column-sparsity $k$, which is our asymptotic parameter). For hypergraph matching, we make progress ``half the remaining way'' towards meeting a classic conjecture of F\"uredi \etal \cite{FKS93}. Additionally, we show a simple application of simulation-based attenuation to obtain improved ratios for the Unsplittable Flow Problem on trees (\ufp: Chekuri \etal~\cite{CMS07}) with unit demands and submodular objectives, a problem which admits a natural packing-LP relaxation.



%
%
%
%

\subsection{Preliminaries and Main Results}
\label{subsec:prelim}

%
%

The natural \LP relaxation is as follows (although
additional valid constraints are necessary for \kcspip~\cite{BKNS12}).
%
%
%
\begin{equation}\label{eqn:pip-lp}
\textstyle \max \{\vec{w} \cdot\vec{x} : \vec{A}\cdot\vec{x}\leq\vec{b}, \vec{x} \in [0, 1]^n\}.
\end{equation}	

Typically, a rounding algorithm takes as input an optimal solution $\vec{x}\in [0,1]^{n}$ to LP~\eqref{eqn:pip-lp} -- or one of
its relatives -- and outputs an integral $\vec{X}\in\{0,1\}^n$ which is feasible for \pip~\eqref{eqn:pip} such that the resultant \emph{approximation ratio}, $\frac{\bE[\vec{w} \cdot \vec{X}]}{\vec{w} \cdot \vec{x}}$, is maximized. 
Note that $\bE[\vec{w} \cdot \vec{X}]$ is the expected weight of the solution over all randomness in the algorithm and/or the problem itself. 
For a general weight vector $\vec{w}$, we often seek to maximize $\min_{j: x_j \neq 0} \frac{\bE[X_j]}{x_j}$, as the usual ``local" strategy of maximizing the approximation ratio. As notation, we will denote the support of $\vec{X}$ as the set of \emph{rounded} items. We say item $j$ \emph{participates} in constraint $i$ if and only if $A_{ij} \neq 0$. We say that a variable is \emph{safe} to be rounded to $1$ if doing so would not violate any constraint conditional on the variables already rounded; we call it \emph{unsafe} otherwise.

\xhdr{$k$-Column-Sparse Packing Integer Programs (\kcspip).} Suppose we have $n$ items and $m$ constraints. Each item $j \in [n]$ has a weight $w_j$ and a column $\vec{a}_j \in [0,1]^m$. Suppose we have a capacity vector $\vec{b}=\vec{1}$ (this is w.l.o.g., see \eg Bansal \etal~\cite{BKNS12}) and our goal is to select a subset of items such that the total weight is maximized while no constraint is violated. In addition, we assume each column $\vec{a}_j$ has at most $k$ non-zero entries. The important special case where for all $j$, $\vec{a}_j$ lies in $\{0,1\}^m$ -- and has at most $k$ non-zero entries -- is the classic \emph{$k$-set packing problem}; it is $\text{NP}$-hard to approximate within $o(k / \log k)$ \cite{HSS06}. This special case is generalized in two ways below: by allowing stochasticity in stochastic $k$-set packing, and by allowing the column-sparsity $k$ to vary across columns as in hypergraph matching). Observe that \kcspip can be cast as a special case of \pip shown in~\eqref{eqn:pip} with the $j^{\text{th}}$ column of $\vec{A}$ being $\vec{A}[j]=\vec{a}_j$. The resultant \LP relaxation is as follows (just as in Bansal \etal~\cite{BKNS12}, we will ultimately use a stronger form of this \LP relaxation which incorporates additional valid constraints; see (\ref{eq:addcons}) in Section~\ref{sec:kcspipalg}).
\begin{equation}\label{eqn:kcspip-lp}
\textstyle \max\{\vec{w} \cdot\vec{x} : \vec{A}\cdot\vec{x}\leq\vec{1}, \vec{x} \in [0, 1]^n\}~~~
\text{where }\vec{A}[j]=\vec{a}_j. 
\end{equation}	
For general \pips,  the best-known approximation bounds are shown in Srinivasan~\cite{Sr99}. The problem of \kcspip, in its full generality, was first considered by Pritchard~\cite{Pr09} and followed by several subsequent works such as Pritchard and Chakrabarty~\cite{PC11} and Bansal \etal~\cite{BKNS12}. Chekuri \etal~\cite{CVZ14} defined a contention resolution framework for submodular objectives and showed how the previous algorithms for \kcspip fit into such a framework (and hence, extending the \kcspip algorithms to non-negative submodular objectives by losing a constant factor in approximation)\footnote{In \cite{BKNS12}, the authors also show extensions to non-negative monotone submodular objectives.}. 

Our main result for this problem is described in Theorem~\ref{mainthm:kcspip}.
Bansal \etal~\cite{BKNS12} showed that the stronger \LP (which adds additional valid constraints to the natural \LP relaxation) has an integrality gap of at least $2k - 1$. We consider the same LP, and hence our result shown in \refthm{mainthm:kcspip} is asymptotically optimal \wrt this \LP. The previous best known results for this problem were a factor of $ek + o(k)$ due to Bansal \etal~\cite{BKNS12}, a factor of $O(k^2)$ independently due to Chekuri \etal~\cite{CEK09} \footnote{Note that this work is cited in \cite{BKNS12}.} and Pritchard and Chakrabarty~\cite{PC11}, and a factor of $O(2^k \cdot k^2)$ due to Pritchard~\cite{Pr09}. 

\begin{restatable}{theorem}{kcspipmain}
	\label{mainthm:kcspip}
	There exists a randomized rounding algorithm for \kcspip with approximation ratio at most $2k+ \Theta(k^{0.8} \poly \log (k)) = 2k + o(k)$ for linear objectives.
\end{restatable}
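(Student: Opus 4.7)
The plan is to start from the strengthened LP of Bansal \etal~\cite{BKNS12} obtained by augmenting \eqref{eqn:kcspip-lp} with the valid inequalities that at most one ``large''-coefficient item per constraint can be chosen, and then to round the LP optimum $x^*$ in a way that couples a classical per-constraint priority alteration with the paper's two main ingredients, multiple chances and non-uniform attenuation. On a per-constraint basis, I would classify each item $j$ appearing in constraint $i$ into three buckets according to its coefficient $a_{ij}$: \bigc\ ($a_{ij}>1/2$), \medc\ ($\tau \le a_{ij} \le 1/2$), and \smallc\ ($a_{ij} < \tau$), where $\tau = k^{-\delta}\poly\log k$ for a parameter $\delta \in (0,1)$ to be tuned.

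The rounding scheme first samples each item $j$ independently with probability $(1-o(1))\,x^*_j/(2k)$ (rather than the Bansal \etal\ value $x^*_j/(ek)$). It then processes each constraint $i$ in a random priority order and declares a rounded item \safe\ if its coefficient fits into the slack left by the already-kept higher-priority items; only \safe\ items are retained. The two new ideas enter as follows. \emph{Multiple chances}: each item is given several sequential sampling attempts, each conditioned on the accumulated kept set remaining feasible, so that the effective survival probability of item $j$ rises past the $(1-1/(ek))^k \to e^{-1}$ barrier of the single-shot analysis. \emph{Non-uniform attenuation}: a final per-item Bernoulli filter equalizes all marginals, yielding $\Pr[j \in X] \ge x^*_j/(2k + o(k))$ uniformly in $j$.

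The crux of the analysis is to bound, for each fixed $j$ and each constraint $i \ni j$, the probability that $j$ is rejected because of a blocker in $i$, and the three classes are treated separately. For \bigc\ blockers the strengthened LP restricts their total fractional mass per constraint to at most $1$, so that at most one can ever be present and the random priority costs only a factor of two. For \smallc\ blockers a Chernoff bound on the per-constraint load $\sum_{j'} a_{ij'}\, x^*_{j'}/(2k) \le 1/(2k)$ gives a blocking probability decaying like $\exp(-\Omega(1/\tau))$ per constraint. For \medc\ blockers there are at most $O(1/\tau)$ per constraint, and a direct union bound over them yields a blocking probability of $O(1/(k\tau))$ per constraint. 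Summing over the $\le k$ constraints containing $j$ and balancing the \smallc\ term $k\cdot\exp(-\Omega(1/\tau))$ against the \medc\ term $O(1/\tau)$ fixes $\delta \approx 0.2$ and produces the additive loss $\Theta(k^{0.8}\poly\log k)$.

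The main obstacle is the \medc\ regime: it is out of reach of both the strengthened-LP cut (which only controls \bigc) and the Chernoff concentration (which only controls \smallc). This is precisely where multiple chances are needed to break through $1/e$ in the leading constant, and where non-uniform attenuation is needed to reconcile the different per-item survival probabilities produced by different big/medium/tiny profiles. Verifying that multiple chances and attenuation interact correctly with the priority-based alteration --- in particular, that the conditioning on feasibility across attempts does not degrade the marginal bound --- is where the bulk of the technical work will lie.
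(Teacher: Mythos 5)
Your proposal misattributes the paper's techniques: for $\kcspip$ the paper does \emph{not} use multiple chances or non-uniform attenuation at all --- those are, respectively, the $\sksp$ and hypergraph-matching ingredients. The new idea for $\kcspip$ is a \emph{randomized alteration via graph coloring}: sample at the much more aggressive rate $\alpha x_j/k$ with $\alpha = k^{0.4}$ (so the big-blocker conflicts are plentiful, not rare), build the directed conflict graph on the surviving set with an edge $j \to j'$ whenever $j'$ is a big blocker for $j$, discard the $o(1)$-fraction of vertices whose out-degree exceeds $d := \alpha + \sqrt{\alpha\log\alpha}$, color the undirected version with $2d+1$ colors (possible because maximum out-degree $\le d$ forces some vertex of total degree $\le 2d$, enabling greedy coloring), and keep a uniformly random color class. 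This multiplies $\alpha x_j/k$ by $\approx 1/(2\alpha)$, yielding $x_j/(2k(1+o(1)))$, with the $o(k)$-loss balanced at $\alpha = k^{0.4}$.

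There are concrete problems with your route. First, your claim that the strengthened LP cut ensures ``at most one [big blocker] can ever be present'' is false: $\sum_{j\in\bigc(i)} x_j \le 1$ only bounds expectations, and after independent Bernoulli sampling several big items in the same constraint may coexist with nonzero probability. Second, at your sampling rate $x_j/(2k)$ the expected number of big blockers for an item $j$ is only bounded by $1/2$ across all its constraints, so a Markov-plus-random-priority argument caps the survival probability near $1/2$, giving roughly $4k$, not $2k$; there is no mechanism in your sketch that drives the big-blocker failure probability to $o(1)$ at this rate. Third, your parameter balance does not cohere: with $\tau = k^{-\delta}\poly\log k$ and a per-item union bound of $O(1/(k\tau))$ per constraint over $\le k$ constraints, the medium term $O(1/\tau) = \Omega(k^\delta)$ is not even a probability, and the tiny term $k\exp(-\Omega(1/\tau))$ decays superpolynomially, so neither leads to the $\Theta(k^{0.8}\poly\log k)$ exponent by the balance you describe. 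Finally, you acknowledge that the medium regime is ``out of reach'' of both the LP cut and Chernoff and that multiple chances are needed to save the leading constant, but you give no mechanism for how multiple chances interact with big blockers in $\kcspip$: unlike $\sksp$, a big item that is \emph{not} discarded occupies an entire constraint, and the ``tiny fraction of budget used'' argument underpinning multiple chances does not apply to it. The paper sidesteps all of this by moving the hard work to the coloring lemma (every bounded out-degree-$d$ digraph is $(2d+1)$-colorable) and a concentration bound showing few anomalous vertices, neither of which appears in your plan.
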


\begin{restatable}{corollary}{kcspipcor}
	\label{cor:kcspip}
	There exists a randomized rounding algorithm for \kcspip with approximation ratio at most $(2k+o(k))/\eta_f$ for non-negative submodular objectives, where $\eta_f$ is the approximation ratio for $\max\{F(\vec{x}) : \vec{x} \in \cP_{\cI} \cap \{0, 1\}^n\}$ (here, $F(\vec{x})$ is the multi-linear extension of the sub-modular function $f$ and $\cP_{\cI}$ is the \kcspip polytope); $\eta_f = 1 - 1/e$ and $\eta_f = 0.385$ in the cases of non-negative monotone and non-monotone submodular functions respectively\footnote{To keep consistent with prior literature, we state all approximation ratios for sub-modular maximization (\ie $\eta_f$) as a value less than $1$. This is in contrast to the approximation ratios defined in this paper where the values are always greater than $1$}.
\end{restatable}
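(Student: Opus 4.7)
The plan is to embed the randomized rounding algorithm of Theorem~\ref{mainthm:kcspip} into the contention resolution scheme (CRS) framework of Chekuri, Vondr\'ak, and Zenklusen~\cite{CVZ14}, and then pair it with continuous greedy to handle the submodular objective. Recall that a $(1,c)$-balanced CRS for a down-closed polytope $\cP_{\cI}$ takes any $\vec{y} \in \cP_{\cI}$, forms the random set $R(\vec{y})$ that includes each $j$ independently with probability $y_j$, and outputs a feasible subset $I \subseteq R(\vec{y})$ with $\Pr[j \in I \mid j \in R(\vec{y})] \geq c$ for every $j$. The CVZ framework then guarantees that combining such a CRS (assumed monotone) with continuous greedy, in its standard form for monotone $f$ and its measured variant for non-monotone $f$, produces an integral solution of expected value at least $c\cdot \eta_f \cdot \mathrm{OPT}$.

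The first step is to solve the continuous relaxation $\max\{F(\vec{x}) : \vec{x} \in \cP_{\cI}\}$, where $F$ is the multilinear extension of $f$ and $\cP_{\cI}$ is the strengthened \kcspip polytope that incorporates the additional valid constraints of~\cite{BKNS12}. This yields a fractional point $\vec{x}^\ast$ with $F(\vec{x}^\ast) \geq \eta_f \cdot \mathrm{OPT}$, where $\eta_f = 1-1/e$ in the monotone case and $\eta_f = 0.385$ in the non-monotone case.

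The key technical step is to verify that the rounding algorithm underlying Theorem~\ref{mainthm:kcspip} can be recast as a monotone $(1,c)$-balanced CRS with $c = 1/(2k+o(k))$. Concretely, given a fractional solution $\vec{y}$, the algorithm first performs (non-uniform) attenuation, then samples items independently according to the attenuated marginals; the attenuated sampling is exactly the $R(\vec{y})$ step (up to the scaling that the $(1,c)$-balanced notion absorbs into $c$). The subsequent conflict-resolution stage is viewed as the CRS, and the analysis supporting Theorem~\ref{mainthm:kcspip} already establishes that each item is kept with probability at least $y_j/(2k+o(k))$, which is precisely the $c$-balanced guarantee. Applying the CRS to $\vec{x}^\ast$ then yields $\bE[f(I)] \geq c\cdot F(\vec{x}^\ast) \geq \eta_f/(2k+o(k))\cdot\mathrm{OPT}$, giving the claimed ratio $(2k+o(k))/\eta_f$.

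The main obstacle will be verifying \emph{monotonicity} of the CRS in the CVZ sense: shrinking the sampled set should never decrease the acceptance probability of a fixed element. Since the rounding rule for item $j$ depends only on the other items competing in constraints involving $j$, fewer competitors can only help $j$ survive; however, non-uniform attenuation and the multiple-chances mechanism must be re-examined to confirm that this intuition carries through in a form that the CRS framework accepts, possibly by conditioning on the auxiliary randomness used for attenuation before applying the survival argument. Once monotonicity is confirmed, the corollary follows by plugging the CRS into the monotone continuous greedy (for the monotone case) and the measured continuous greedy of Feldman--Naor--Schwartz (for the non-monotone case), giving the two stated values of $\eta_f$.
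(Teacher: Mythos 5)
Your overall plan — solve the multilinear relaxation over the strengthened \kcspip polytope, then round via the CVZ contention-resolution framework using the rounding scheme of Theorem~\ref{mainthm:kcspip}, which is what the paper does — is the right one, and you correctly flag CVZ-monotonicity as the crux. However, the proposal has a factual error and leaves the crux unresolved. The \kcspip algorithm does not use ``non-uniform attenuation'' or ``multiple-chances''; those are the techniques used for hypergraph matching and \sksp respectively. Algorithm~\ref{alg:kcspip} uses \emph{uniform} attenuation (independent sampling at rate $\alpha x_j/k$) followed by a randomized alteration built on a greedy graph coloring and a uniformly random color choice. Because your sketch is analyzing the wrong mechanism, the monotonicity argument you gesture at (``conditioning on the auxiliary randomness used for attenuation'') does not connect to the actual alteration step that needs checking.

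The real monotonicity question is about the coloring step, and the intuition ``fewer competitors can only help $j$ survive'' is not immediately applicable there: with a smaller $\cR_0$ the graph $G'$ changes, the greedy coloring changes, and $j$'s assigned color changes. The paper resolves this cleanly by factoring $\Pr[j \in \RF \mid \cR_0 = A_i] = \Pr[j \in \RF \mid j \in \cR_2, \cR_0 = A_i]\cdot \Pr[j \in \cR_2 \mid \cR_0 = A_i]$ and observing that the first factor is \emph{exactly} $1/(2d+1)$ independent of $\cR_2$ (the algorithm always uses $2d+1$ colors and picks one uniformly at random, so $j$'s color matches the chosen one with probability $1/(2d+1)$ no matter what the coloring looks like). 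Monotonicity then reduces to showing $\Pr[j \in \cR_2 \mid \cR_0 = A_i]$ is monotone decreasing in $A_i$, which follows because both the ``no medium/tiny blocking event'' and the ``not anomalous'' conditions are monotone in $\cR_0$. Without this constant-conditional-probability observation the argument is incomplete, so you need to supply it explicitly. One smaller note: the paper instantiates the CVZ definition as a $(b,c)$-balanced scheme with $b = \alpha/k$ and $c = (1-o(1))/(2\alpha + o(\alpha))$ rather than folding everything into $c$; since the continuous-greedy side of CVZ optimizes over the down-scaled polytope $b\cP_{\cI}$, keeping $b$ and $c$ separate is the cleaner and more faithful instantiation.
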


\xhdr{Stochastic $k$-Set Packing (\sksp).} The Stochastic $k$-Set Packing problem was first introduced in Bansal \etal~\cite{BGLMNR12} as a way to generalize several stochastic-optimization problems such as Stochastic Matching\footnote{Here, we use the definition from the journal version~\cite{BGLMNR12}; the conference version of \cite{BGLMNR12} defines the problem slightly differently.}. The problem can be defined formally as follows. Suppose we have $n$ items and that each item $j$ has a random non-negative weight $W_j$ and a random $m$-dimensional size vector $\SI_j \in \{0,1\}^m$. The random variables $\{R_j := (W_j, \SI_j) : j \in [n]\}$ are mutually independent\footnote{Note that $W_j$ can be correlated with $\SI_j$.}. Each random vector $R_j \in \bR^+ \times \{0, 1\}^m$ is drawn from some probability distribution: our algorithm only needs to know the values of $u_{i,j} := \bE[\SI_{i, j}]$ for all $i,j$ -- where $\SI_{i, j}$ denotes the $i^{th}$ component of $\SI_j$ -- and $w_j := \bE[W_j]$.  Moreover, for each item $j$, there is a known subset $\cC(j) \subseteq [m]$ of at most $k$ coordinates such that $\SI_{i, j}$ can be nonzero only if $i \in \cC(j)$: all coordinates in $[m] \setminus \cC(j)$ will have value zero with probability $1$. We are given a capacity vector $\vec{b} \in \mathbb{Z}_{+}^{m}$. The algorithm proceeds in multiple steps. At each step, we consider any one item $j$ that has not been considered before, and which is safe with respect to the current remaining capacity, \ie adding item $j$ to the current set of already-added items will not cause any capacity constraint to be violated regardless of what random $\SI_j$ materializes.\footnote{This is called the \emph{safe-policy} assumption. This allows us to handle the correlations between $W_j$ and coordinates of $\SI_j$. A detailed discussion of this model can be found in~\cite{BGLMNR12}.} Upon choosing to probe $j$, the algorithm observes its size realization and weight, and has to irrevocably include $j$. The task is to sequentially probe some subset of the items such that the expected total weight of items added is maximized.

Let $\vec{w}$ denote $(w_1, \ldots,w_n)$ and $x_j$ denote the probability that $j$ is added in the \OPT solution.  Bansal \etal~\cite{BGLMNR12} introduced the following natural \LP to upper bound the optimal performance.
\begin{equation}\label{eqn:sksp-lp}
\textstyle \max\{ \vec{w} \cdot\vec{x} : \vec{A}\cdot\vec{x}\leq\vec{b}, \vec{x} \in [0, 1]^n\}~~~ \text{where } \vec{A}[i,j]= u_{i,j}.
\end{equation}

The previous best known bound for \sksp was $2k + o(k)$ due to Bansal \etal~\cite{BGLMNR12}. Our main contribution (Theorem~\ref{thm:sksp}) is to improve this bound to $k + o(k)$\footnote{In private communication from January 2018, \emph{Marek Adamczyk} has informed us that a result of $k + 1$ here can be obtained as a corollary of their recent work \kaedit{\cite{adamczyk2018random}}.}, a result that is again asymptotically optimal \wrt the natural \LP~\eqref{eqn:sksp-lp} considered (Theorem 1.3 from~\cite{FKS93}).

\begin{theorem}\label{thm:sksp}
	There exists a randomized rounding algorithm achieving an approximation ratio of $k+o(k)$ for the stochastic $k$-set packing problem, where the ``$o(k)$" is a vanishing term when $k \rightarrow \infty$.
\end{theorem}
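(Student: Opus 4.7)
The plan is to adapt the non-uniform attenuation and multiple-chance framework introduced for \kcspip in \Cref{mainthm:kcspip} to the stochastic regime, working from the natural LP~\eqref{eqn:sksp-lp}. Starting from an optimal LP solution $\vec{x}^*$, I would scale down by $\alpha = k(1+\gamma_k)$ with $\gamma_k = o(1)$ to be chosen, setting $y_j = x_j^*/\alpha$. The rounding proceeds over $T = T(k)$ phases, with $T$ growing slowly in $k$. In each phase, independently sample every still-available item $j$ with probability $y_j/T$. When $j$ is sampled, first check the safe-policy condition; if $j$ is currently safe, apply a non-uniform attenuation factor $\phi_j^{(t)}$ depending on the history and include $j$ irrevocably with that probability (triggering a draw of $R_j = (W_j, \SI_j)$). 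The factors $\phi_j^{(t)}$ are engineered so that the unconditional per-phase inclusion probability of $j$ equals $y_j/T$ in every phase, thereby decoupling marginals from the random order and from the realizations of earlier items---a property the $2k+o(k)$ analysis of Bansal \etal~\cite{BGLMNR12} does not exploit.

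The next step is to bound the probability that a sampled $j$ is safe at phase $t$. For each coordinate $i \in \cC(j)$, LP feasibility $\sum_{j'} u_{i,j'} x_{j'}^* \le b_i$ together with uniform per-phase marginals gives expected load at most $b_i/\alpha$ on $i$ from the other items at any time. To shave the factor of two relative to Markov, I would split the coordinates of $\cC(j)$ into two regimes. For ``big'' coordinates with $b_i$ growing in $k$, Chernoff-type concentration yields a per-coordinate blocking probability $o(1/k)$, so the union bound over the at most $k$ coordinates in $\cC(j)$ contributes $o(1)$. For ``small'' coordinates (notably $b_i = 1$), the multi-chance structure is essential: because each phase's attempt uses a fresh, independent sampling at level $y_j/T$, the probability that $j$ is blocked at $i$ in \emph{every} phase is roughly $(k/\alpha)^{T}$ per coordinate, which is $o(1/k)$ provided $\gamma_k \cdot T$ is sufficiently large; a union bound over the $k$ coordinates of $\cC(j)$ then remains $o(1)$.

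Combining the two regimes, every $j$ with $x_j^* > 0$ is ultimately included with probability at least $y_j(1-o(1))$, giving an approximation ratio of $\alpha/(1-o(1)) = k + o(k)$; linearity of expectation then yields the claimed bound on $\bE[\vec{w}\cdot \vec{X}]$. The calibration of $\gamma_k$ and $T$ should follow the same template as in the proof of \Cref{mainthm:kcspip}, with both being of order $\Theta(k^{0.8}\poly\log k)$-like quantities whose product is tuned to drive the small-$b_i$ multi-chance term to $o(1)$ without inflating the attenuation loss.

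The main obstacle, and the point on which the entire improvement from $2k$ to $k$ hinges, is the analysis at coordinates with small $b_i$: direct Markov-based bounds already cost a factor of two, and one cannot freely increase $T$ since that would also inflate the attenuation loss $\phi_j^{(t)}$. Beyond the parameter trade-off, ensuring that non-uniform attenuation can actually enforce uniform per-phase marginals in the stochastic model---where $W_j$ may be correlated with $\SI_j$ and where the safe-policy test depends on the realized history---requires a careful conditioning argument that updates the $\phi_j^{(t)}$ as a function of the observed realizations. Establishing this consistency jointly across all $T$ rounds, rather than round by round in isolation, is the step I expect to be the most delicate.
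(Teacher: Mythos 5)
Your proposal departs substantially from the paper's actual argument, and the key mechanism you rely on does not hold in the stochastic $k$-set packing model. The central claim -- that for a small-capacity coordinate $i$, ``the probability that $j$ is blocked at $i$ in \emph{every} phase is roughly $(k/\alpha)^T$'' because each phase is a fresh attempt -- misreads what multiple chances buy you. In \sksp, budget consumption is monotone: once other probed items have exhausted resource $i$ so that $j$ is unsafe, $j$ stays unsafe for the rest of the process. A blocked item cannot become unblocked in a later phase. Multiple chances do not give $j$ fresh opportunities to overcome a past blockage; they give $j$ additional opportunities to be \emph{sampled early enough}, before the cumulative budget usage reaches the blocking threshold. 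Also, with your scaling $\alpha = k(1+\gamma_k)$, the quantity $k/\alpha$ is $\approx 1$, so $(k/\alpha)^T$ is not $o(1/k)$ for any sensible $T$.

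The paper's proof (Algorithm~\ref{alg:sksp-T} and Lemmas~\ref{lem:sksp-T}--\ref{lem:sksp-}, Theorem~\ref{thm:skspthm2}) uses no coordinate splitting by $b_i$ at all: a single Markov bound on $\bE[U_{t,i}]$, valid for every $b_i \ge 1$, drives the entire analysis. The improvement from $2k$ to $k$ comes from the recursion $\beta_t^* = \tfrac12(1-\sum_{t'<t}\beta_{t'}^*)^2$, $\alpha_t^* = 1 - \sum_{t'<t}\beta_{t'}^*$, which makes $\sum_t \beta_t^* \to 1$ as $T \to \infty$: the per-chance sampling rates $\alpha_t$ \emph{decrease} as budgets get consumed, so that at every chance the fraction of budget remaining is still enough to keep the per-chance Markov loss at exactly half the remaining probability mass, and the cumulative mass telescopes to $1$. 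Your scheme instead keeps a \emph{uniform} per-phase rate $y_j/T$; if you carry out the Markov accounting honestly, the accumulated usage by phase $t$ grows linearly in $t$, and the summed inclusion probability stabilizes near $y_j/2$, reproducing the $2k$ bound rather than $k$. To recover $k + o(k)$ you need the nontrivial coupling between per-chance rates and accumulated usage that the paper's decreasing $\alpha_t^*$ sequence provides; uniform per-phase rates plus a big/small-$b_i$ dichotomy do not supply it.

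One more point: the paper does require $T \to \infty$ (it uses $T = \log k$) and controls the error $T^2/k$ from the discarded $O(1/k)$ terms, so your instinct that $T$ must grow slowly with $k$ and that a trade-off exists is correct. But the trade-off in the paper is between the accumulation $\gamma_T^* = \sum_{t\le T}\beta_t^* \to 1$ (which requires $T\to\infty$) and the $T^2/k$ error term, not between a Chernoff exponent and attenuation loss as you describe.
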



\vspace{-3mm}
\xhdr{Hypergraph Matching.} Suppose we have a hypergraph $\cH=(\cV, \cE)$ with $|\cV|=m$ and $|\cE|=n$. (This is the opposite of the usual graph notation, but is convenient for us since the \LP here has $|\cV|$ constraints and $|\cE|$ variables.) Each edge $e \in \cE$ has a weight $w_e$. We need to find a subset of edges with maximum total weight such that every pairwise intersection is empty (\ie we obtain a hypergraph matching). Observe that the problem of finding a maximum weighted hypergraph matching can be cast as a special case of \pip. Let $\vec{w}=(w_e)$ and $\vec{e} \in \{0,1\}^m$ be the canonical (characteristic-vector) representation of $e$. 
Then the natural LP relaxation is as follows:
\begin{equation}\label{eqn:hm-lp}
\textstyle \max\{\vec{w} \cdot\vec{x} : \vec{A}\cdot\vec{x}\leq\vec{1}, \vec{x} \in [0, 1]^n\}~~~
\text{where }\vec{A}[j]=\vec{e}_j. 
\end{equation}	

Note that in these natural IP and LP formulations, the number of vertices in an edge $e$, $k_e=|e|$, can be viewed as the column-sparsity of the column associated with $e$. Thus, this again broadly falls into the class of column-sparse packing programs. For general hypergraphs, F\"uredi \etal~\cite{FKS93} presented the following well-known conjecture. 

\begin{conjecture}[F\"uredi \etal \cite{FKS93}]
	\label{conj:fks}
	For any hypergraph $\cH=(\cV,\cE)$ and a weight vector $\vec{w}=(w_e)$ over all edges, there exists a matching $\cM$ such that 
	\begin{equation}
	\label{eqn:fks}
	\textstyle \sum_{e \in \cM}\Paren{k_e-1+\frac{1}{k_e}}w_e \geq \OPT(\cH, \vec{w}).
	\end{equation}
	where $k_e$ denotes the number of vertices in hyperedge $e$ and $\OPT(\cH, \vec{w})$ denotes an optimal solution to the  
	LP relaxation (\ref{eqn:hm-lp}) of hypergraph matching. 
\end{conjecture}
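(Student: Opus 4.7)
Since the statement is a long-standing conjecture, my proposal is aimed at the partial progress advertised in the introduction (``half the remaining distance'') rather than a full resolution. The plan is LP-based randomized rounding of~\eqref{eqn:hm-lp} using the paper's two advertised tools, \emph{non-uniform attenuation} and \emph{multiple-chance rounding}, combined with the nibble method. Starting from an optimal fractional solution $\vec{x}^*$, I would mark each edge $e$ independently with probability $x_e^*/\alpha_{k_e}$, where the attenuation factor depends only on the size $k_e$ of $e$. The role of non-uniform attenuation is to offset the fact that larger hyperedges are blocked at more vertices: by raising their marking rate relative to their LP value, one can equalize per-vertex contention across edges of all sizes. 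Conflicts at each vertex are resolved by keeping a single marked edge.

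To push the per-vertex blocking events toward independence for a given edge $e$, the algorithm should run in many small nibbles of size $\delta$, recomputing the residual LP values between rounds; in the limit $\delta \to 0$, the survival probability of a marked edge factors approximately as $\prod_{v \in e}(1 - \mathrm{load}_v)$, which is the analogue of the uniform bound $(1-1/k)^{k-1}$ driving the $k - 1 + 1/k$ factor in the $k$-uniform case. The multiple-chance ingredient then gives a second source of savings: whenever an edge $e$ is blocked at a single vertex $v$, the remaining $k_e - 1$ vertices typically stay free, so $e$ can be re-entered into the candidate pool on the next round with a boosted probability, recovering a further constant factor in the ratio.

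The main obstacle is the inhomogeneous case, in which edges of very different sizes share vertices: a single attenuation schedule $\alpha_{k}$ cannot simultaneously tune contention at every vertex and match the conjectured $k_e - 1 + 1/k_e$ per edge. I would therefore split the edges by size class (the macros $\bigc, \medc, \smallc$ in the preamble suggest the authors do exactly this) and analyze contention class by class, coupling attenuation parameters so that each class contributes its own $k_e - 1 + o(1)$ factor. Closing the remaining gap between this half-step bound and the full conjecture would presumably require a genuinely new idea beyond attenuation and multi-chance, which is why only partial progress is realistic here.
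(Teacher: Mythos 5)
The paper does not prove Conjecture~\ref{conj:fks}; it is stated as a conjecture, and the paper's contribution is the partial progress recorded in Theorem~\ref{thm:hm} (a $k_e + O(k_e e^{-k_e})$ guarantee per edge). Your proposal correctly identifies that only partial progress is achievable, but the mechanism you describe is not what the paper does, and several of its ingredients would not work here.

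First, the paper's ``non-uniform attenuation'' for hypergraph matching is attenuation by the \emph{LP value} $x_e$, not by the edge size $k_e$. In Algorithm~\ref{alg:hm-f} each edge $e$ is marked with probability $g(x_e)$, and the chosen function is $g(x) = x(1 - x/2)$, which satisfies $g(0)=0$, $g'(0)=1$. The motivation (see Lemma~\ref{lem:hm-basic} and the discussion after it) is that under \emph{uniform} attenuation $g(x)=\alpha x$, the worst-performing edges are those with $x_e \to 0$; the fix is to shade by $x_e$ so that small-$x_e$ edges are not penalized. Attenuating by a size-dependent factor $\alpha_{k_e}$ as you propose does not address this: per-vertex LP mass is already bounded by $1$ uniformly, so there is no size-dependent overload at a vertex to ``equalize,'' and the $k_e$ dependence of the target $k_e - 1 + 1/k_e$ arises from the number of vertices $e$ must simultaneously survive, not from contention intensity per vertex. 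The paper handles that via the integral $\Pr[\ad_e] \geq g(x)\int_0^1 \prod_{e'\sim e}(1-tg(x_{e'}))\,dt$ together with a convexity argument (Lemma~\ref{lem:hm-convex}) showing the adversary's worst case is to split neighbor mass into infinitesimal edges, yielding $\cF(g) = \frac{1}{k_e}(1-e^{-k_e})$.

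Second, the multiple-chance/nibble device does not appear in the paper's hypergraph-matching argument; it is used only for \sksp (Section~\ref{sec:sksp}). More importantly, it would not buy anything here: once a vertex $v$ is matched, any edge $e \ni v$ is permanently infeasible, so ``re-entering $e$ into the candidate pool on the next round'' is vacuous. The re-probing gain in \sksp relies on multi-unit budgets $b_i$, which have no analogue in hypergraph matching where each vertex has capacity exactly $1$. Similarly, the $\bigc/\medc/\smallc$ size-class decomposition you reference is part of the \kcspip analysis (Section~\ref{sec:kcspip}) and plays no role in the hypergraph-matching section. In short, you have transplanted tools from the wrong sections, and the one tool that does carry over — non-uniform attenuation — is applied to the wrong variable ($k_e$ rather than $x_e$).
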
	

The function ``$k_e-1+\frac{1}{k_e}$" is best-possible in the sense that
certain hypergraph families achieve it \cite{FKS93}. 
We generalize Conjecture~\ref{conj:fks} slightly:

\begin{conjecture}[Generalization of Conjecture~\ref{conj:fks}]
	\label{conj:fks-ours}
	For any given hypergraph $\cH=(\cV,\cE)$ with notation as in
	Conjecture~\ref{conj:fks}, let $\vec{x} = (x_e: e \in \cE)$ denote a given optimal solution to the LP relaxation  (\ref{eqn:hm-lp}). Then: (i) there is a distribution $\cD$ on the matchings of $\cH$ such that for each edge $e$, the probability that it is present in a sample from $\cD$ is at least $\frac{x_e}{k_e - 1 + 1/k_e}$, and (ii) $\cD$ is efficiently samplable. 
\end{conjecture}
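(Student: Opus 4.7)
I propose to attack Conjecture~\ref{conj:fks-ours} by combining the two unifying ideas advertised in the paper—\emph{non-uniform attenuation} and \emph{multiple chances}—layered on top of a random-permutation conflict-resolution scheme. First, fix target rates $\rho_e = k_e - 1 + 1/k_e$ and mark each edge $e$ independently with probability $y_e = x_e/\rho_e$. Sample a uniformly random permutation $\pi$ over $\cE$ and scan the edges in this order, adding $e$ to the output matching $\cM$ if and only if $e$ is marked and no previously processed edge shares a vertex with $e$. This is the natural baseline; the analysis reduces to lower-bounding, for every $e$, the conditional survival probability $\Pr[e \in \cM \mid e \text{ marked}]$.

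The survival probability can be analyzed vertex-by-vertex: for each $v \in e$ one wants to bound the probability that no edge $e' \ne e$ with $v \in e'$ is both marked and ordered before $e$. A naive union bound across the $k_e$ vertices yields the factor $k_e$, so the point is to recover the extra $-1+1/k_e$ savings. The $-1$ part arises because $e$ itself contributes $x_e$ to the capacity at each of its vertices, so the foreign LP mass at each $v$ is at most $1-x_e$. The $+1/k_e$ savings is the subtle piece: when another edge $e'$ collides with $e$ at multiple vertices of $e$, the bad events at different $v \in e$ are \emph{positively correlated}, so an FKG-type inequality applied to the joint survival event $\bigcap_{v\in e}\{e \text{ wins at } v\}$ shaves off an additional $1/k_e$ factor, exactly matching the FKS function. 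Making this rigorous requires a careful coupling that separates conflicts into "one-vertex" and "multi-vertex" interactions and uses monotonicity of the first-survivor rule in $\pi$.

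On top of this I would deploy the \emph{multiple chances} mechanism to close any gap left after the first pass. If the pass above delivers survival probability $q_e < 1/\rho_e$ conditional on marking, set the first-round marking probability to $x_e/\rho_e$ and, for each $e$ still unrealized after the pass, run a second independent pass with an attenuated rate $y'_e$ chosen so that the total marginal inclusion probability equals $x_e/\rho_e$ exactly. The attenuation must be \emph{non-uniform} in $k_e$ because edges with different sparsities have different post-first-pass failure profiles, and the second-pass rates must be tuned per edge to restore uniform marginals without violating residual vertex capacities. Efficient samplability is immediate from the algorithm, and the distribution $\cD$ is defined implicitly as the law of $\cM$.

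The hardest step—and the reason the paper honestly claims only "half the remaining distance"—is establishing the full FKG-style $+1/k_e$ improvement \emph{uniformly} over arbitrary non-uniform hypergraphs and arbitrary fractional LP solutions. The positive-correlation argument is cleanest when every pair of edges intersects in at most one vertex (as in linear hypergraphs, where the conjecture is essentially known from nibble-type methods), but general hypergraphs can have edges colliding in arbitrary patterns, and controlling these higher-order intersections is what breaks the analysis. I would therefore expect that the concrete theorem actually provable by the plan above is of the form $\Pr[e \in \cM] \geq x_e / \bigl(k_e - \tfrac{1}{2}(1-1/k_e)\bigr)$—i.e., halfway between the trivial $k_e$ bound and the conjectured $k_e - 1 + 1/k_e$—which would match the paper's advertised partial progress toward Conjecture~\ref{conj:fks}.
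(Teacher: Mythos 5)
The statement you were asked to address is a \emph{conjecture}, and the paper itself does not prove it: the paper's contribution (Theorem~\ref{thm:hm}) is to push the constant from the ``$k_e+1$'' of Bansal \etal~to $k_e + O(k_e e^{-k_e})$, which is roughly half the distance from $k_e+1$ to the conjectured $k_e - 1 + 1/k_e$. You correctly recognize that your plan cannot close the full gap, which is the right instinct. However, the mechanism you describe diverges from the paper's actual route in several concrete ways, and a couple of your intermediate claims do not hold up.

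First, the paper's ``non-uniform attenuation'' means something different from what you implement. You mark $e$ with probability $x_e/\rho_e$ where $\rho_e = k_e - 1 + 1/k_e$, so your attenuation is linear in $x_e$ and only varies with $k_e$. The paper instead marks $e$ with probability $g(x_e)$ for a \emph{nonlinear} function $g(x) = x(1 - x/2)$ that is the same for all edges; the non-uniformity is in $x_e$, not in $k_e$. This choice comes from solving the optimization problem~\eqref{eqn:hm-main}: after showing (via a convexity argument, Lemma~\ref{lem:hm-convex}) that the adversary's worst fractional neighborhood is many tiny edges, they compute $\int_0^1 e^{-\kappa t}\,dt$ exactly and then choose $g$ to make $G(x) = \tfrac{g(x)}{x} \cdot \tfrac{1 - e^{-k_e(1-x)}}{k_e(1-x)}$ nondecreasing, so the minimum is at $x=0$. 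There is no FKG step, no correlation argument, and no multiple-chance pass anywhere in the hypergraph-matching analysis; multiple chances are used only for the stochastic $k$-set-packing result in Section~\ref{sec:sksp}.

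Second, your prediction that the achievable bound is $x_e / \bigl(k_e - \tfrac{1}{2}(1 - 1/k_e)\bigr)$ is not consistent with the paper. The paper's bound $k_e + O(k_e e^{-k_e})$ stays strictly above $k_e$; ``half the remaining distance'' in the paper's sense means going from $k_e+1$ down to roughly $k_e$, not below it. Getting below $k_e$ would require extracting the ``$-1 + 1/k_e$'' savings, which is precisely what the paper cannot do; your proposed FKG argument for the ``$+1/k_e$'' piece is asserted but not justified, and in the worst case (where the LP mass at each vertex of $e$ is spread over many tiny edges with pairwise disjoint conflicts) there is no positive correlation to exploit. Likewise, your claim that the ``$-1$'' piece follows from $e$'s own LP mass is not realized in the paper's bound: although they do use $\sum_{e'\sim e} x_{e'} \le k_e(1-x_e)$, the worst case of the optimization is at $x_e \to 0$, which eliminates that savings. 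In short, the student's plan overpromises on what the known techniques deliver, and the parts that would go beyond the paper's bound (FKG, multi-vertex-collision coupling) are exactly the parts left as sketches.
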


Part (i) of Conjecture \ref{conj:fks-ours} immediately implies Conjecture \ref{conj:fks} via the linearity of expectation. In fact, Part (i) of Conjecture \ref{conj:fks-ours} and Conjecture \ref{conj:fks} are equivalent which can be shown using a LP-duality argument for the standard LP for computing the fractional chromatic number (also see \cite{carrVempala} for such arguments in a general polytope).

F\"uredi \etal~\cite{FKS93} gave (non-constructive) proofs for Conjecture~\ref{conj:fks} for the three special cases where the hypergraph is either uniform, intersecting, or uniformly weighted. Chan and Lau~\cite{CL12} gave an algorithmic proof of Conjecture~\ref{conj:fks} for $k$-uniform hypergraphs, by combining the iterative rounding method and the fractional local ratio method. Using similar techniques, Parekh~\cite{Pa11} and Parekh and Pritchard~\cite{PP15} generalized this to $k$- uniform $b$-hypergraph matching \kaedit{and obtain the optimal integrality gap for this family. Moreover, they do so for the generalized Conjecture~\ref{conj:fks-ours}}. We go ``half the remaining distance" in resolving Conjecture \ref{conj:fks-ours} for \emph{all} hypergraphs, and also do so algorithmically: the work of 
Bansal \etal~\cite{BGLMNR12} gives ``$k_e + 1$" instead of the target
$k_e - 1 + 1/k_e$ in Conjecture \ref{conj:fks-ours}, and we improve this
to $k_e + O(k_e \cdot \exp(-k_e))$. 


\begin{theorem}\label{thm:hm}
	There exists an efficient \talgedit{randomized} algorithm to generate a matching $\cM$ for a hypergraph such that each edge $e$ is added in $\cM$ with probability at least $ \frac{x_e}{k_e+o(1)}$, where $\{x_e\}$ is an optimal solution to the standard \LP~\eqref{eqn:hm-lp} and where the $o(1)$ term is $O(k_e\exp(-k_e))$, a vanishing term when $k_e \rightarrow \infty$.
\end{theorem}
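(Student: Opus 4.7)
My approach would mirror the two techniques advertised by the paper, non-uniform attenuation and multiple chances, applied on top of a baseline rounding whose analysis already gives the weaker bound $k_e + 1$ from Bansal \etal~\cite{BGLMNR12}. Concretely, start from an optimal LP solution $\vec{x}$ to (\ref{eqn:hm-lp}), and consider edges in a random order induced by independent exponential clocks with rates proportional to $x_e$ (or, equivalently, a random permutation weighted by $x_e$). When edge $e$ is examined, try to add it to the tentative matching $\cM$; a standard argument -- taking a union bound over the $k_e$ vertices of $e$ and the fractional loads $\sum_{e' \ni v} x_{e'} \le 1$ at each -- shows $\Pr[e \in \cM] \geq x_e/(k_e+1)$. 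The task is to sharpen the $+1$ into the much smaller quantity $O(k_e e^{-k_e})$.

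The improvement would come by giving each edge $T = \Theta(k_e)$ independent \emph{chances} rather than a single pass, implemented in the Rödl/AKS nibble style \cite{DBLP:journals/jct/AjtaiKS80,DBLP:journals/ejc/Rodl85}. Partition the rounding into many phases, each handling only a small $\delta$-fraction of the LP mass, and after each phase update the residual LP so that an edge not yet killed by conflict still ``sees'' essentially its original fractional value at its incident vertices. In phase $t$, edge $e$ is offered another chance: it is tentatively selected with an attenuation factor $\beta_{e,t}$ that depends on $k_e$ and on the current residual LP, chosen so that the marginal probability accumulated across all phases matches $x_e/(k_e + O(k_e e^{-k_e}))$ exactly -- this is where the \emph{non-uniform} part of the attenuation enters, since $\beta_{e,t}$ must depend on $k_e$ to simultaneously satisfy edges of different sizes participating in the same constraint. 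Across $T$ independent chances, the probability that all of them fail at a particular vertex behaves like $(1 - 1/k_e)^{\Theta(k_e)} = e^{-\Theta(1)}$ per vertex, so iterating over the $k_e$ vertices and integrating over $T$ phases produces the geometric decay that is ultimately responsible for the $O(k_e e^{-k_e})$ slack.

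The analysis then has three parts. First, verify that the residual LP remains (approximately) feasible with high probability after each nibble: for any vertex $v$, $\sum_{e \ni v} x_e^{(t)}$ concentrates around its expectation by a bounded-differences Azuma--Hoeffding argument on the random choices in that phase, which is valid since each single edge's selection changes only the fractional loads at its $k_e \le O(\log n)$-bounded neighborhood. Second, compute the per-phase survival probability of an edge in terms of $\beta_{e,t}$, and verify that the product over $T$ phases equals the claimed $x_e/(k_e + O(k_e e^{-k_e}))$; this is a clean calculation after the attenuation factors are defined. Third, ensure that the output is always a valid matching: edges are only ever kept if they are safe when added, so feasibility is deterministic.

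The step I expect to be the main obstacle is the interaction between \emph{non-uniform} attenuation and the multi-phase residual LP update. Because edges at the same vertex can have very different $k_e$ values, their attenuation factors differ, and this asymmetry has to be balanced so that no single vertex's constraint is over-consumed by small (low-$k_e$) edges while the large-$k_e$ edges are still trying to accumulate their chances. I would handle this by defining $\beta_{e,t}$ implicitly through a fixed-point condition on the residual marginals (as in the contention-resolution framework of Chekuri \etal~\cite{CVZ14}) and then proving, by a stochastic-dominance argument against an idealized independent process, that the realized marginals dominate the target marginals throughout the execution. Once that invariant is established, the final probability bound follows from a straightforward geometric-sum computation.
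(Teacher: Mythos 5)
Your proposal takes a genuinely different route from the paper, and the route as sketched does not actually produce the claimed bound.

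The paper's algorithm for Theorem~\ref{thm:hm} is a \emph{single-pass} scheme, not a multi-chance or nibble scheme: it draws a \emph{uniformly random} permutation of the edges (via independent uniform $[0,1]$ marks) and, crucially, samples each edge $e$ with a \emph{non-uniform attenuation} probability $g(x_e)$ where $g(x)=x(1-x/2)$, normalized so that $g(0)=0$ and $g'(0)=1$. The heart of the argument is then a short analytic step: writing
\[
\Pr[e\in\cM] \ge g(x_e)\int_0^1\prod_{e'\sim e}\bigl(1-t\,g(x_{e'})\bigr)\,dt,
\]
using convexity of $\ln(1-t\,g(x))$ in $x$ (Lemma~\ref{lem:hm-convex}) to show the adversary's worst case splits the LP mass at each vertex into infinitesimal neighbor edges, and evaluating the resulting integral to $\tfrac{1-e^{-k_e}}{k_e}$. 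This is where the $O(k_e e^{-k_e})$ slack comes from. The paper explicitly distinguishes this from weighted random permutations (the exponential-clocks idea you suggest, which it attributes to Adamczyk \etal and Baveja \etal) and never applies the multiple-chances machinery to hypergraph matching; that technique appears only in the \sksp section, where sampling at rate $\Theta(1/k)$ per phase leaves almost all budget intact, a structure absent here because $\HM(\alpha)$ samples at rate $\Theta(x_e)$ and vertex budgets are $1$.

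The concrete gap in your argument is the arithmetic at its center. You claim that with $T=\Theta(k_e)$ chances, failure at a vertex behaves like $(1-1/k_e)^{\Theta(k_e)}=e^{-\Theta(1)}$ and that this ``produces the geometric decay responsible for the $O(k_e e^{-k_e})$ slack.'' But $e^{-\Theta(1)}$ is a constant, not a term vanishing in $k_e$; multiplying by $k_e$ vertices gives $\Theta(k_e)$, not $O(k_e e^{-k_e})$. So even granting the residual-LP concentration and stochastic-dominance steps you defer, the headline decay does not follow from the computation you outline. There is a further structural issue: $T=\Theta(k_e)$ is edge-dependent, while the phases of a nibble are global, so you would need a uniform $T$ together with per-edge attenuation schedules that reconcile edges of very different $k_e$ meeting at the same vertex; you flag this as the ``main obstacle'' but the fixed-point / stochastic-dominance remedy you gesture at is exactly the part that would need to be worked out, and it is not clear it closes. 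By contrast, the paper sidesteps all of this: a single choice of $g$ with $g(0)=0$, $g'(0)=1$, plus a convexity argument, already balances small- and large-$x_e$ edges and yields the bound directly.
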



\vspace{-3mm}
\xhdr{\ufp with unit demands.}  In this problem, we are given a tree $T = ( \cV, \cE )$ with each edge $e$ having an integral capacity $u_e$. We are given $k$ distinct pairs of vertices $(s_1, t_1), (s_2, t_2), \ldots, (s_k, t_k)$ each having unit demand. Routing a demand pair $(s_i, t_i)$ exhausts one unit of capacity on all the edges in the path. With each demand pair $i$, there is an associated weight $w_i \ge 0$. The goal of the problem is to choose a subset of demand pairs to route such that no edge capacity is violated, while the total weight of the chosen subset is maximized. In the non-negative submodular version of this problem, we are given a non-negative submodular function $f$ over all subsets of demand pairs, and aim to choose a feasible subset that maximizes $f$. This problem was introduced by Chekuri \etal~\cite{CMS07} and the extension to submodular version was given by Chekuri \etal~\cite{CVZ14}. We show that by incorporating simple attenuation ideas, we can improve the analysis of the previous best algorithm for the Unsplittable Flow Problem in Trees (\ufp) with unit demands and non-negative submodular objectives.

Chekuri \etal~\cite{CVZ14} showed that they can obtain an approximation of $27/\eta_f$, where $\eta_f$ is the approximation ratio for maximizing a non-negative submodular function, via their contention-resolution scheme (henceforth abbreviated as \emph{CR schemes})\footnote{Please see Section~\ref{sec:kcs-sub} for formal definitions of CR schemes.}. We improve their $1/27$-balanced CR scheme to a $1/8.15$-balanced CR scheme via attenuation and hence achieve an approximation of $8.15/\eta_f$ for non-negative sub-modular objectives.
\begin{theorem}\label{thm:ufp}
	There exists a $8.15/\eta_f$-approximation algorithm to the \ufp with unit demands and non-negative submodular objectives.
\end{theorem}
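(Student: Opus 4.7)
\textbf{Proof plan for Theorem~\ref{thm:ufp}.} The strategy is to improve the contention-resolution (CR) scheme of \cite{CVZ14} for the \ufp-with-unit-demands polytope from $1/27$-balanced to $1/8.15$-balanced, and then invoke their black-box reduction that turns any $c$-balanced monotone CR scheme into a $c\cdot\eta_f$ approximation (in the ``less than $1$'' convention) for non-negative submodular maximization over the polytope. In our ``greater than $1$'' convention this is exactly the claimed $8.15/\eta_f$ ratio, so it suffices to exhibit the stronger CR scheme.

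I would start by recalling the Chekuri--Vondr\'ak--Zenklusen scheme: given a fractional LP solution $\vec x$, tentatively pick each demand pair $i$ independently with probability $x_i$, then for every capacity-violated edge $e$ discard all tentatively picked paths crossing $e$; a path $i$ is kept iff it is tentatively picked and no edge of $P_i$ was discarded. The $1/27$ loss has two sources: the unattenuated marginal $x_i$, and a loose per-edge union bound over $P_i$. The fix is a simulation-based pre-attenuation: tentatively pick $i$ with probability $\alpha x_i$ for a tuning parameter $\alpha \in (0,1)$ (optionally followed by a per-element rescaling using Monte-Carlo estimates of the conditional survival probability, as alluded to in the introduction). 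The expected load at each $e \in P_i$ then drops to at most $\alpha u_e$, and standard Chernoff-type bounds yield an explicit per-edge overflow probability $p(\alpha)$.

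To convert per-edge bounds into a whole-path survival bound I would decompose $P_i$ at the lowest common ancestor of its endpoints into two monotone root-directed sub-paths and argue that along each sub-path the per-edge overflow events are positively correlated (an FKG-type argument using the fact that consecutive edges on such a sub-path differ only by paths branching off at the intermediate vertex). This yields a product-type lower bound $q(\alpha)$ on the probability that $i$ survives every edge of $P_i$ conditional on being tentatively picked, and hence an overall balance factor of $\alpha\cdot q(\alpha)$. A one-parameter numerical optimization then gives the constant $1/8.15$; plugging the resulting scheme into the CVZ submodular framework for both monotone and non-monotone $f$ completes the proof. The main obstacle is precisely this correlation-handling step: a naive union bound over the edges of $P_i$ only reproduces the previous $1/27$ constant, so the roughly threefold improvement rests entirely on the LCA decomposition together with the positive-correlation argument that replaces a union bound by a product bound.
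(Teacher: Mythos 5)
Your high-level outline — pre-attenuate the independent sampling to $\alpha x_i$, add a simulation-based rescaling so each sampled item survives with probability exactly some $\beta$, decompose the path at the LCA into two root-directed sub-paths, and then hand the resulting CR scheme to the Chekuri--Vondr\'ak--Zenklusen black box — matches the paper's strategy. But you have misidentified \emph{where} the numerical gain actually comes from, and the mechanism you propose to replace is in fact retained by the paper.

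You claim ``a naive union bound over the edges of $P_i$ only reproduces the previous $1/27$ constant, so the roughly threefold improvement rests entirely on the LCA decomposition together with the positive-correlation argument that replaces a union bound by a product bound.'' This is not what happens. The LCA decomposition and the union bound over a geometric series of per-edge overflow probabilities (exploiting the reduction to strictly decreasing capacities $u_{e_1}>u_{e_2}>\cdots\ge 1$) are both already present in Chekuri et al., and the paper keeps exactly this union bound. What changes is the \emph{input to the Chernoff bound}: once each sampled demand is attenuated to be kept with probability $\beta=1-2\alpha e/(1-\alpha e)$, the expected load on any edge $e_j$ drops from $\alpha\, u_{e_j}$ to $\gamma\, u_{e_j}$ with $\gamma=\alpha\beta<\alpha$, so the per-edge overflow probability becomes $(\gamma e)^{u_{e_j}}$ rather than $(\alpha e)^{u_{e_j}}$. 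Summing the same geometric series now gives an unsafety bound of $2\gamma e/(1-\gamma e)$, which stays bounded away from $1$ even as $\alpha$ is pushed up toward $1/(3e)$; this relaxation of the constraint on $\alpha$ is what delivers the jump from $1/27$ to roughly $1/(3e)\approx 1/8.15$. So attenuation does not just shave lower-order terms — it is the whole ballgame — and the union bound is untouched.

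Two further issues with the FKG-style replacement you propose. First, it would only strengthen the bound at second order: $\prod_j(1-p_j)\ge 1-\sum_j p_j$ with the gap quadratic in the $p_j$, which is nowhere near a threefold improvement in the balance constant. Second, monotonicity of the per-edge overflow events in the underlying product space is not obvious here, because adding an item to $\cR$ can block another item upstream and thereby \emph{reduce} the load downstream; the paper sidesteps this by a different correlation statement in the opposite direction — it invokes the Panconesi--Srinivasan Chernoff bound for cylindrically \emph{negatively} correlated indicators $X_i=\mathbb{I}[i\in\cR\wedge i\in\cR_{\text{final}}]$ in order to control the per-edge tail, not positive correlation among the $E_j$'s. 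So the correlation technology used is real, but it appears in the Chernoff step for a single edge, not as an FKG product bound across edges.
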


\xhdr{Extension to submodular objectives.} 
Chekuri \etal~\cite{CVZ14} showed that given a rounding scheme for a \pip with linear objectives, we can extend it to non-negative submodular objectives by losing only a constant factor, if the rounding scheme has a certain structure (see \refthm{prelim:submodular}, due to~\cite{CVZ14}). Our improved algorithm for \kcspip and \ufp admits this structure and hence can be extended to non-negative sub-modular functions. See Section~\ref{appx:submodular} in the Appendix for the required background on submodular functions.

A simple but useful device that we will use often is as follows.

\xhdr{Simulation-based attenuation.} We use the term \emph{simulation} throughout this paper to refer to Monte Carlo simulation and the term \emph{simulation-based attenuation} to refer to the simulation and attenuation techniques as shown in \cite{AGM15} and \cite{BSSX17}\footnote{\kaedit{This is called ``dumping factor'' in \cite{AGM15}. See Appendix B in \cite{AGM15} for a formal treatment.}}. At a high level, suppose we have a randomized algorithm such that for some event $E$ (\eg the event that item $j$ is safe to be selected into the final set in \sksp) we have $\Pr[E] \geq c$, then we modify the algorithm as follows: (i) We first use simulation to estimate a value $\hat{E}$ that lies in the range $[\Pr[E], (1 + \epsilon) \Pr[E]]$ with probability at least $1 - \delta$. (ii) By ``ignoring" $E$ (i.e., \emph{attenuation}, in a problem-specific manner) with probability $\sim \talgedit{1-} c / \hat{E}$, we can ensure that the final effective value of $\Pr[E]$ is arbitrarily close to $c$, i.e., in the range $[c/(1 + \epsilon), c]$ with probability at least $1 - \delta$. This simple idea of attenuating the probability of an event to come down  approximately to a certain value $c$ is what we term simulation-based attenuation. The number of samples needed to obtain the estimate $\hat{E}$ is $\Theta(\frac{1}{c \epsilon^2} \cdot \log(\frac{1}{\delta}))$ via a standard Chernoff-bound argument. In our applications, we will take $\epsilon = 1/\mbox{poly}(N)$ where $N$ is the problem-size, and the error $\epsilon$ will only impact lower-order terms in our approximations.

\subsection{Our Techniques}
\label{sec:maintech}

In this section, we describe our main technical contributions of the paper and the ingredients leading up to them.

\xhdr{Achieving the integrality gap of the LP of \cite{BKNS12} for \kcspip.} Our first main contribution in this paper is to achieve the integrality gap of the strenghthened LP of \cite{BKNS12} for \kcspip, up to lower-order terms: we improve the $ek + o(k)$ of \cite{BKNS12} to $2k + o(k)$.  We achieve this by following the same overall structure as in \cite{BKNS12} and improve the alteration steps using randomization. We view the alteration step as a question on an appropriately constructed directed graph. In particular, a key ingredient in the alteration step answers the following question. ``Suppose we are given a directed graph $G$ such that the maximum out-degree is bounded by an asymptotic parameter $d$. Find a random independent set $\cI$ in the undirected version of this graph such that every vertex is added into $\cI$ with probability at least $1/(2d) - o(1/d)$''. It turns out that this question can be answered by looking at the more-general question of finding a good coloring of the undirected version of this graph. The key idea here is to ``slow down" the contention-resolution approach of \cite{BKNS12}, leading to Theorem~\ref{mainthm:kcspip}. However, motivated by works that obtain strong ``negative correlation" properties -- e.g., the papers \cite{CVZ10, PSV17} obtain \emph{negative cylindrical correlation}\footnote{This is sometimes simply called ``negative correlation".} and the even-stronger \emph{negative association} for rounding in matroid polytopes -- we ask next if one can achieve this for \kcspip. (It is well-known that even negative cylindrical correlation yields Chernoff-type bounds for sums of random variables \cite{PS97}; we use this in Section~\ref{sec:otherapp}.)  We make progress toward this in  Theorem~\ref{thm:negativeCorrelation}. 
	
	\xhdr{Achieving the integrality gap of the natural LP for \sksp via a ``multiple chances'' technique.} Our second contribution in is to develop an algorithm that achieves the integrality gap of $k + o(k)$ for \sksp, improving on the $2k$ of \cite{BGLMNR12}. To achieve this, we introduce the ``multiple-chances'' technique. We will now informally describe this technique, which is motivated by the powerful ``nibble'' idea from probabilistic combinatorics (see, e.g., Ajtai, Koml{\'{o}}s, and Szemer{\'{e}}di \cite{DBLP:journals/jct/AjtaiKS80} and R{\"{o}}dl \cite{DBLP:journals/ejc/Rodl85}). 
	
	The current-best ratios for many special cases of $k\operatorname{-CS-PP}$ are $\Theta(k)$; e.g., $ek + o(k)$ for \kcspip~\cite{BKNS12}, the optimal approximation ratio ({\it w.r.t.} the integrality gap) of $k-1+1/k$ for $k$-uniform hypergraph matching~\cite{CL12}, or the $2k$-approximation for \sksp~\cite{BGLMNR12}. Thus, many natural approaches involve sampling items with a probability proportional to $1/k$. Consider a $k\operatorname{-CS-PP}$ instance with budget $\vec{b}$. 
Suppose we have a randomized algorithm $\ALG$ which outputs a solution $\SOL$ wherein each item $j$ is added $\SOL$ with probability exactly \textit{equal} to $x_j/(ck)$ for some constant $c > 0$.\footnote{Note that given an algorithm which adds each item with probability \textit{at least} $x_j/(ck)$, we can use Monte Carlo simulation and attenuation techniques similar to \cite{CVZ14,AGM15,BSSX17} to get an algorithm which adds each item with probability essentially \textit{equal} to $x_j/(ck)$.} After running $\ALG$, the expected usage of each budget $i$ is $b_i/(ck)$; this follows directly from the budget constraint in the \LP. This implies that after running \ALG, we have only used a tiny fraction of the whole budget, in expectation. Thus, we may run \ALG again on the remaining items to further improve the value/weight of $\SOL$. Hence, an item that was previously not chosen, receives a ``second chance'' to be rounded up and included in the solution. The observation that only a tiny fraction of the budget is used can be made after running \ALG for a second time as well. Hence, in principle, we can run \ALG multiple times and we call the overarching approach a \emph{multiple chance algorithm}. The analysis becomes rather delicate as we run for a large number of iterations in this manner. 


\xhdr{FKS conjecture and the non-uniform attenuation approach.}
	Our third contribution is in making significant progress on the well-known Conjecture~\ref{conj:fks} due to F\"uredi, Kahn and Seymour. To achieve this, we introduce a technique of \emph{non-uniform} attenuation. A common framework for tackling $k\operatorname{-CS-PP}$ and related problems is random permutation followed by sampling via {\it uniform attenuation}: follow a random order $\pi$ on the items and add each item $j$ with probability $\alpha x_j$ whenever it is safe, where $x_j$ is an optimal solution to an appropriate \LP and $\alpha$ is the attenuation factor. Typically $\alpha$ is a parameter fixed in the analysis to get the best ratio (\eg see the \sksp algorithm in Bansal \etal~\cite{BGLMNR12}). This method is called {\it uniform attenuation}, since all items share the same attenuation factor $\alpha$. 
    
An alternative strategy used previously is that of {\it weighted random permutations} (see, e.g., Adamczyk \etal~\cite{AGM15} and Baveja \etal~\cite{BCNSX15}): instead of using a uniformly-random permutation, the algorithm ``weights'' the items and permutes them non-uniformly based on their $x_j$ values. We introduce a notion of \emph{non-uniform attenuation}, which approaches the worst-case scenario in a different manner. We still stay within the regime of uniform permutations but will attenuate items non-uniformly, based on their $x_j$ values; a careful choice of attenuation function is very helpful here, as suggested by the optimization problem (\ref{eqn:hm-main}). This is a key ingredient in our improvement.   

\subsection{Other Related Work}

	In this subsection, we list the related work not mentioned in previous sections and yet closely related to the problems we study. Note that packing programs, submodular maximization, and their applications to approximation algorithms have a vast literature. Our goal here is to list some papers in closely relevant areas and this is by no means an exhaustive list of references in each of these closely-aligned areas.
	
	For \kcspip, related problems have been studied in discrepancy theory. In such problems, we have a $k$-column-sparse \LP and we want to round the fractional solution such that the violation (both above and below) of any constraint  is minimized. This study started with the famous work of Beck and Fiala~\cite{BT81} and some of the previous work on \kcspip (\eg \cite{Pr09}) used techniques similar to Beck and Fiala. There has been a long line of work following Beck and Fiala, including \cite{Sr97, Ba10, BS13, LM15, Ro17, HSS14, BG16, LRR16, BDG16}. One special case of \kcspip is the $k$-set packing problem. Many works including \cite{HS89, AH98, CH01, Be00} studied this problem with \cite{Be00} giving the best approximation of $(k + 1)/2 + \epsilon$ for this problem. Closely related to \kcspip is the notion of column-restricted packing introduced by Kolliopoulos and Stein~\cite{KS04}. Many works have studied this version of packing programs, including \cite{CMS07, CEK09_2, BS00}. 
	
	Similar to Bansal \etal~\cite{BKNS12}, our algorithms also extend to submodular objective functions. In particular, we use tools and techniques from Calinescu \etal~\cite{CCPV11} and Chekuri \etal~\cite{CVZ14} for both \kcspip and the UFP problem on trees. Monotone sub-modular function maximization subject to $k$-sparse constraints has been studied in the context of $k$-partition matroids, $k$-knapsacks, and the intersection of $k$ partition matroids in  many works including \cite{FNW78, LMNS09, Wa12, KST09}. Beyond the monotone case, there are several algorithms for the non-negative sub-modular maximization problem including \cite{FNS11, CJV15, BFNS14, EN16, BF16}.
	
	Stochastic variants of \pips have also been previously studied. Baveja \etal~\cite{BCNSX15} considered the following stochastic setting of $k$-uniform hypergraph matching: the algorithm has to probe edge $e$ to check its existence; each edge $e$ is associated with a probability $0<p_e\le 1$ with which it will be present (independently of other edges) on being probed; the task is to sequentially probe edges such that the expected total weight of matching obtained is maximized.  The stochastic version of hypergraph matching can be viewed as a natural generalization of stochastic matching (\eg Bansal \etal~\cite{BGLMNR12}) to hypergraphs. The work of~\cite{BCNSX15} gave an $(k+\epsilon+o(1))$--approximation algorithm for any given $\epsilon>0$ asymptotically for large $k$. Other work on stochastic variants of \pips includes \cite{DGV05, DGV08, LY13, ASW14, GNS16, Ad15, GNS16_2}.

	 Later in this paper, we show yet another application of attenuation: \ufp with unit demands. This problem is a more specific version of column-restricted packing problems mentioned previously. The Unsplittable Flow Problem in general graphs and its various specializations on different kinds of graphs has been extensively studied. Some of these works include \cite{Kl96, Sr97_2, GVY97, Ko03, CKS06, BCES06, AR06, KS06, CCGK07, BFKS14, EGGKNP12}.

\subsection{Outline}

In Section~\ref{sec:kcspip}, we present a randomized rounding algorithm for \kcspip using randomized alteration techniques. We analyze this algorithm to prove Theorem~\ref{mainthm:kcspip} and show an extension to submodular objectives. 
In Section~\ref{sec:sksp}, we apply second-chance techniques to \sksp. After analyzing this algorithm, we show how it can be extended to multiple chances, yielding the improved result of Theorem~\ref{thm:sksp}. 
In Section~\ref{sec:hypermatch}, we present an algorithm for hypergraph matching and analyze it to prove Theorem~\ref{thm:hm}, making progress toward Conjecture~\ref{conj:fks-ours} (and by extension Conjecture~\ref{conj:fks}). 
In Section~\ref{sec:otherapp}, we show how attenuation can lead to an improved contention \-resolution scheme for \ufp, proving Theorem~\ref{thm:ufp}. 
We end with a brief conclusion and discussion of open problems in Section~\ref{sec:conclusion}. Appendix~\ref{sec:techlemmas} contains a few useful technical lemmas used in this paper while Appendix~\ref{appx:submodular} gives a self-contained background on submodular functions.

\section{$k$-Column-Sparse Packing}
	\label{sec:kcspip}
	We describe a rounding algorithm for \kcspip, which 
	achieves the asymptotically \emph{optimal} approximation ratio of $ (2k + o(k)) $ with respect to the strengthened \LP shown in Bansal \etal~\cite{BKNS12} (see (\ref{eq:addcons}) in Section~\ref{sec:kcspipalg}). Theorem~\ref{thm:negativeCorrelation} then develops a near-negative-correlation generalization of this result. 
	
	
	Recall that we have a $ k$-column-sparse matrix $ \vec{A} \in [0, 1]^{m \times n} $ and a fractional solution $ \vec{x} \in [0, 1]^n $ such that $ \vec{A} \cdot \vec{x} \leq \vec{1} $. Our goal is to obtain an integral solution $ \vec{X} \in \{0, 1\}^n $ (possibly random) such that $\vec{A} \cdot \vec{X} \leq \vec{1}$ and such that the expected value of the objective function $\vec{w}\cdot\vec{X}$ is ``large". (We will later extend this to the case where the objective function $f(\vec{X})$ is monotone submodular.) At a very high level, our algorithm performs steps similar to the contention-resolution scheme defined by Chekuri~\etal~\cite{CVZ14}; the main contribution is in the details\footnote{We would like to point out that the work of \cite{BKNS12} also performs similar steps and fits into the framework of \cite{CVZ14}.}. We first perform an independent-sampling step to obtain a random set $\cR$ of variables; we then conduct \emph{randomized} alterations to the set $\cR$ to obtain a set of rounded variables that are feasible for the original program with probability $1$. Note that the work of \cite{BKNS12} uses \textit{deterministic} alterations. Moving from deterministic alteration to careful randomized alteration combined with using a much-less aggressive uniform attenuation in the initial independent sampling, yields the optimal bound.

	\subsection{Algorithm}
	\label{sec:kcspipalg}
	
			Before describing the algorithm, we review some useful notations and concepts, some of which were introduced in~\cite{BKNS12}. 
			\talgedit{Let $\alpha > 0$ be a given parameter.\footnote{we later set it to be a constant (when discussing prior work) and to be $k^{0.4}$ in our algorithm.}} For a row $i$ of $\vec{A}$ and $\ell = \Theta(\log(k/\alpha))$, let $ \bigc(i) := \{j : a_{ij} > 1/2 \} $, $ \medc(i) := \{j : 1/\ell \leq a_{ij} \leq 1/2 \} $, and $ \smallc(i) := \{j : 0 < a_{ij} < 1/\ell\} $, which denote the set of \textit{big}, \textit{medium}, and \textit{tiny} items with respect to constraint $i$. For a given randomly sampled set $\cR$ and an item $j \in \cR$, we have three kinds of \textit{blocking events} for $j$. Blocking events occur when a set of items cannot all be rounded up without violating some constraint. In other words, these events may prevent $j$ from being rounded up. We partition the blocking events into the following three types:

			\begin{itemize}
				\item 
				$\bigb(j)$:  There exists some constraint $i$ with $a_{ij} > 0$ and an item $j' \neq j $ such that $ j' \in \bigc(i) \cap \cR$.
				\item
				$\medb(j)$: There exists some constraint $i$ with $\medc(i) \ni j$  such that $|\medc(i) \cap \cR\,| \ge 3$. 
			
				\item
				$\smallb(j)$: There exists some constraint $i$ with $ \smallc(i)  \ni j$ such that
				\[ \sum_{\mathclap{\substack{j' \neq j: \\  j' \in (\medc(i) \cup \smallc(i)) \cap \cR}}}a_{ij'} > 1- a_{ij} \text{ or } |\medc(i) \cap \cR\,| \ge 2. \]
			\end{itemize}

		Informally, we refer to the above three blocking events as the \emph{big, medium} and \emph{tiny} blocking events for $j$ with respect to $\cR$. 

			\xhdr{The main algorithm of Bansal \etal~\cite{BKNS12}.}  
			As briefly mentioned in Section~\ref{subsec:prelim}, Bansal \etal add certain valid constraints on big items to the natural \LP relaxation in~\eqref{eqn:kcspip-lp} as follows.
			\begin{equation}
				\label{eq:addcons}
				\max\{\vec{w} \cdot\vec{x} \quad \mbox{s.t.} \quad \vec{A}\cdot\vec{x}\leq\vec{1} \quad \mbox{and} \quad \forall i \in [m] \quad \sum_{j \in \bigc(i)}x_j \leq 1, ~\vec{x} \in [0, 1]^n\} \text{ where }\vec{A}[j]=\vec{a}_j. 
			\end{equation}

Algorithm~\ref{alg:bansal_kcspip}, $\func{BKNS}$, gives a formal description of the algorithm of Bansal \etal~\cite{BKNS12}, in which they set $\alpha=1$. 

			\IncMargin{1em}
			\begin{algorithm}[!h]
				\caption{$ \func{BKNS}(\alpha) $}
				\label{alg:bansal_kcspip}
				\DontPrintSemicolon
				\textbf{Sampling}: Sample each item $j$ independently with probability $(\alpha x_j)/k$ and let $\cR_0$ be the set of sampled items.\;\label{ln:bkns:sampling}
					\textbf{Discarding low-probability events}: Remove an item $j$ from $\cR_0$ if either a medium or tiny blocking event occurs for $j$ with respect to $\cR_0$. Let $\cR_1 \subseteq \cR_0$ be the set of items not removed.\;\label{ln:bkns:discard}
					\textbf{Deterministic alteration}: Remove an item $j$ from $\cR_1$ if a big blocking event occurs for $j$ with respect to $\cR_1$. \;\label{ln:bkns:alteration}
					Let $\RF \subseteq \cR_1$ be the set of items not removed; return $\RF$. 
			\end{algorithm}			
			\DecMargin{1em}
			
			\begin{theorem}[Bansal \etal~\cite{BKNS12}]
				\label{thm:bansal_kcspip}
By choosing $\alpha=1$, Algorithm~\ref{alg:bansal_kcspip} yields a randomized $ek +o(k)$-approximation for \kcspip.
			\end{theorem}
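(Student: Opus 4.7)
The plan is to show that every item $j$ satisfies $\Pr[j \in \RF] \ge x_j/(ek + o(k))$; linearity of expectation then gives the claimed $ek+o(k)$-approximation. With $\alpha = 1$, independent sampling in line~\ref{ln:bkns:sampling} gives $\Pr[j \in \cR_0] = x_j/k$ exactly, so it suffices to establish $\Pr[j \in \RF \mid j \in \cR_0] \ge 1/e - o(1)$. I would decouple the two alteration steps by introducing the event $A$ that neither $\medb(j)$ nor $\smallb(j)$ occurs with respect to $\cR_0$ (so line~\ref{ln:bkns:discard} keeps $j$) and the event $B$ that no item of $S := \bigcup_{i:\,a_{ij}>0}(\bigc(i)\setminus\{j\})$ is sampled into $\cR_0$ (which, since $\cR_1 \subseteq \cR_0$, suffices to prevent line~\ref{ln:bkns:alteration} from removing $j$). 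Conditioning on $j \in \cR_0$ does not affect the other items by independence, and both $A$ and $B$ are decreasing events on the remaining product Bernoulli space, so Harris--FKG gives $\Pr[A \cap B \mid j \in \cR_0] \ge \Pr[A]\Pr[B]$.

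The big-blocking event $B$ is where the factor $1/e$ appears. By independence, $\Pr[B] = \prod_{j' \in S}(1 - x_{j'}/k)$. The strengthened LP~\eqref{eq:addcons} enforces $\sum_{j' \in \bigc(i)} x_{j'} \le 1$ for each row, and $j$ participates in at most $k$ rows, so $\sum_{j' \in S} x_{j'} \le k$. Using $1 - y \ge e^{-y-y^2}$ for small $y$, this yields $\Pr[B] \ge \exp(-1 - O(1/k)) = 1/e - o(1)$.

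For $A$, I would union-bound $\Pr[\overline A]$ over the $\le k$ rows $i$ containing $j$. When $j \in \medc(i)$, the event $\medb(j)$ demands two additional items of $\medc(i)$ to enter $\cR_0$; since each medium item has $a_{ij'} \ge 1/\ell$ and $\sum_{j'} a_{ij'} x_{j'} \le 1$, one gets $\sum_{j' \in \medc(i)} x_{j'} \le \ell$, and a pairwise union bound gives a per-row probability of at most $\tfrac{1}{2}(\ell/k)^2$. When $j \in \smallc(i)$, the event $\smallb(j)$ splits into the same ``two medium items sampled'' subevent (identical bound) and the subevent that the sampled $a$-mass from $(\medc(i)\cup\smallc(i))\setminus\{j\}$ exceeds $1 - 1/\ell$; since this is a sum of independent $[0,1/2]$-valued summands with expectation at most $1/k$, a multiplicative Chernoff bound delivers a per-row bound of $O(1/k^2)$. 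Summing over $\le k$ rows yields $\Pr[\overline A] = O(\ell^2/k)$, which is $o(1)$ once $\ell = \Theta(\log k)$ as chosen. Combining with the bound on $B$ gives $\Pr[j \in \RF] \ge (x_j/k)(1 - o(1))(1/e - o(1)) = x_j/(ek + o(k))$.

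The most delicate ingredient is the Chernoff tail for the tiny-blocking subevent: the deviation is of order $k$ times the mean and must still beat the $k$-fold union bound. The split at threshold $1/\ell$ between ``medium'' and ``tiny'', together with the choice $\ell = \Theta(\log(k/\alpha))$, is engineered precisely so that the Chernoff tail is superpolynomially small in $k$ while the ``medium'' slab is still thin enough that the second-moment bound $O(\ell^2/k)$ is $o(1)$. The remaining ingredients---independent sampling, the strengthening of the LP on big items, and FKG on decreasing events---are routine.
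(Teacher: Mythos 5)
This theorem is stated as a cited result from Bansal \etal~\cite{BKNS12}; the paper does not include a proof of it, so there is no in-paper argument to compare against. That said, your reconstruction is correct and follows the standard BKNS line of reasoning, so let me record a few checks and minor remarks.

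Your decomposition into the events $A$ (survives step~\ref{ln:bkns:discard}) and $B$ (no neighbor in $S:=\bigcup_{i:\,a_{ij}>0}(\bigc(i)\setminus\{j\})$ is even sampled) is valid: $B$ is strictly stronger than the condition actually tested in step~\ref{ln:bkns:alteration}, but for a lower bound that is fine, and using $\cR_0$ instead of $\cR_1$ is precisely what makes $B$ a clean decreasing product event. The bound $\Pr[B] \ge \prod_{j'\in S}(1-x_{j'}/k)$ combined with $\sum_{j'\in S} x_{j'} \le k$ (by the strengthened constraints over the at most $k$ rows touching $j$) gives $\Pr[B] \ge (1-1/k)^k = 1/e - O(1/k)$, as you say. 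Your per-row bounds for $\overline A$ are also right: the pairwise union bound gives $O((\ell/k)^2)$ for the medium subevent, and the Chernoff bound on the rescaled $[0,1]$-valued variables with mean $O(1/k)$ and threshold $2(1-1/\ell)$ gives $(e/(k(1-1/\ell)))^{2(1-1/\ell)}$, which with $\ell=\Theta(\log k)$ is $O(1/k^2)$; summed over $k$ rows this is $o(1)$. Two small remarks. First, the FKG appeal is unnecessary: since $\Pr[\overline A\mid j\in\cR_0]=o(1)$, one can simply write $\Pr[A\cap B\mid j\in\cR_0]\ge \Pr[B]-\Pr[\overline A\mid j\in\cR_0]=1/e-o(1)$, which avoids any correlation argument. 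Second, a complete proof of the theorem should also remark that $\RF$ is feasible with probability one, which follows from the definitions of the three blocking events (the BKNS paper's ``alteration" lemma); you implicitly take this for granted, which is reasonable given the algorithm is designed around it, but it is part of the claim.
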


      \xhdr{Our algorithm for \kcspip via randomized alterations.}
			  \label{sec:kcspip:subsec:alg}
				Our pre-processing is similar to $\func{BKNS}$ with the crucial difference that $\alpha \gg 1$ (but not too large), i.e., we do not attenuate too aggressively; furthermore, our alteration step is quite different. Let $[n] = \{1, 2, \ldots, n\}$ denote the set of items. We first sample each item independently using an appropriate product distribution over the items (as mentioned above, we crucially use a different value for $\alpha$ than $\func{BKNS}$). Let $\cR_0$ denote the set of sampled items. We remove items $j$ from $\cR_0$ for which either a \emph{medium or tiny} blocking event occurs to obtain a set $\cR_1$. We next perform a randomized alteration, as opposed to a deterministic alteration such as in step \ref{ln:bkns:alteration} of $\func{BKNS}$. We then randomly and appropriately shrink $\cR_1$ to obtain the final set $\RF$.

			We now informally describe our randomized alteration step. We construct a directed graph $G = (\cR_1, E)$ from the constraints as follows. For every item $j \in \cR_1$, we create a vertex. We create a directed edge from item $j$ to item $j' \not= j$ in $G$ iff $j'$ causes a big blocking event for $j$ (\ie there exists a constraint $i$ where $j$ has a non-zero coefficient and $j'$ is in $\bigc(i)$). We claim that the expected \kaedit{out-}degree of every vertex in this graph constructed with $\cR_1$ is at most $\alpha$. If any vertex $j$ has \kaedit{out-}degree greater than $d := \alpha + \sqrt{\alpha \log(\alpha)}$, we will remove $j$ from $\cR_1$. Hence, we now have a directed graph with every vertex having \kaedit{out-}degree of at most $d$. We claim that we can color the undirected version of this directed graph with at most $2d + 1$ colors. We choose one of the colors $c$ in $[2d+1]$ uniformly at random and add all vertices of color $c$ into $\RF$. Algorithm~\ref{alg:kcspip} gives a formal description of our approach.

				\xhdr{Example.} Before moving to the analysis, we will show an example of how 
				the randomized alteration (\ie steps \ref{ln:alteration}(a-d) of Algorithm~\ref{alg:kcspip}) works. We will illustrate this on the integrality-gap example considered in \cite{BKNS12}. In this example, we have $n=2k-1$ items and $m=2k-1$ constraints. The weights of all items are $1$. For some $0 < \epsilon \ll 1/(nk)$, the matrix $\vec{A}$ is defined as follows. $\forall i, j \in [2k-1] $ we have,
				\[   		
							a_{ij} := \begin{cases}
								 1 & \quad \text{if $i=j$}\\
								 \epsilon &  \quad \text{if $j \in \{i+1, i+2, \ldots, i+k-1 \; (\bmod~n)\}$}\\
								 0 &  \quad \text{otherwise}  
							\end{cases}
				\]
				As noted in \cite{BKNS12}, setting $x_j = (1-k\epsilon)$ for all $j \in [n]$ is a feasible LP solution, while the optimal integral solution has value $1$. After running step \ref{ln:smapling} of the algorithm, each item $j$ is selected with probability $ (1-o(1)) \alpha/k$ independently. For simplicity, we will assume that there are no medium or tiny blocking events for every $j$ (these only contribute to the lower-order terms). Note that in expectation the total number of chosen items will be approximately $2\alpha$; with high probability, the total number of vertices in the graph will be $n_1 : =2\alpha + o(\alpha)$. Let $b_1, b_2, \ldots, b_{n_1}$ denote the set of items in this graph. The directed graph contains the edge $(b_i, b_j)$ for all distinct $i, j$; for simplicity, assume that the graph has no anomalous vertices. Since the undirected counterpart of this graph is a complete graph, every vertex will be assigned a unique color; thus the solution output will have exactly one vertex with probability $1 - o(1)$.

		\subsection{Analysis}
		We prove the following main theorem using Algorithm~\ref{alg:kcspip} with 
		\[ \alpha = k^{0.4}. \]
		\kcspipmain*
		
		We will divide the analysis into three parts. The parameters in the algorithm, namely $\ell, \alpha, d$ were chosen such that the $o(k)$ term in the final theorem is minimized. At a high-level the three parts prove the following.

		\begin{itemize}
				\item
						\textbf{Part 1} (Proved \footnote{\kaedit{Similar to the main theorem in \cite{graphcoloring}.}} in \cref{lemma:Color}). For directed graphs with maximum out-degree at most $d$, there exists a coloring $\chi$ and a corresponding algorithm such that the number of colors used, $|\chi|$, is at most $2d + 1$.
						
					\item
						\textbf{Part 2} (Proved in \cref{lemma:Anomalous}).					
								For any item $j \in \cR_1$, the event that the corresponding vertex in $G$ has an out-degree larger than $d$ occurs with probability at most $o(1)$. This implies that conditional on $j \in \cR_1$, the probability that $j$ is present in $G'$ is $1- o(1)$.
								
					\item
						\textbf{Part 3} (Proved in \cref{lemma:LowProbability}). For each item $j \in \cR_0$, either a medium or a tiny blocking event occurs with probability at most $o(1)$ (again, for our choice $\alpha= k^{0.4}$). This implies that for each $j \in \cR_0$, it will be added to $\cR_1$ with probability $1-o(1)$. 

				\end{itemize}

\hspace{-0.7cm}
\begin{minipage}[c]{0.97\linewidth}
	\IncMargin{1em}
	\begin{algorithm}[H]
		\caption{The Algorithm for \kcspip}
		\label{alg:kcspip}
		\DontPrintSemicolon
 
 	\textbf{Sampling}: Sample each item $j$ independently with probability $\alpha x_j/k$ (where $\alpha = k^{0.4}$) and let $\cR_0$ be the set of sampled items.\;\label{ln:smapling}
 	\textbf{Discarding low-probability events}: Remove an item $j$ from $\cR_0$ if either a medium or tiny blocking event occurs for $j$ with respect to $\cR_0$. Let $\cR_1$ be the set of items not removed.\;\label{ln:discard1}
 	\textbf{Randomized alteration}: 
 		{
 		\begin{addmargin}[0pt]{0.05\linewidth}
 		\begin{enumerate}[label=(\alph*)]
 			\item 
 					\label{ln:G'}\textbf{Create a directed graph}: For every item in $\cR_1$, create a vertex in graph $G$. Add a directed edge from item $j$ to item $j' \neq j$ if there exists a constraint $i$ such that $a_{ij} > 0$ and $a_{ij'} \talgedit{>} 1/2$.
 				\item
 					\label{ln:discard2}\textbf{Removing anomalous vertices}: For every vertex $v$ in $G$, if the out-degree of $v$ is greater than $d:=\alpha + \sqrt{\alpha \log(\alpha)}$, call $v$ \emph{anomalous}. Remove all anomalous vertices from $G$ to obtain $G'$ and let the items corresponding to the remaining vertices in $G'$ be $\cR_2$.
 				\item 
 					\label{ln:color}\textbf{Coloring $G'$}: Assign a coloring $\chi$ to the vertices of $G'$ using $2d+1$ colors as described in the text such that for any edge $e$ (ignoring the direction), both end points of $e$ receive different colors.
 				\item
 					\label{ln:IS}\textbf{Choosing an independent set}: Choose a number $c \in [2d + 1]$ uniformly at random. Add all vertices $v$ from $G'$ into $\RF$ such that $\chi(v) = c$. 
 		\end{enumerate}
 	\end{addmargin}
 	}\label{ln:alteration}
 	Return $\RF$. \;
\end{algorithm}	
\DecMargin{1em} 
\end{minipage}


We assume the following lemmas which are proven later in this section.

\begin{lemma}
	\label{lemma:Color}
		Given a directed graph $G = (V, E)$ with maximum out-degree at most $d$, there is a polynomial-time algorithm that finds a coloring $\chi$ of $G$'s undirected version such that $|\chi|$, the number of colors used by $\chi$, is at most $2d+1$.
\end{lemma}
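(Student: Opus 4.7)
\textbf{Proof Plan for Lemma~\ref{lemma:Color}.}

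The plan is to argue that the undirected counterpart $H$ of $G$ is $2d$-degenerate, and then apply the standard greedy coloring of degenerate graphs, which uses $2d+1$ colors and runs in polynomial time. Recall that a graph is $2d$-degenerate if every (vertex-)induced subgraph contains a vertex of degree at most $2d$; such graphs admit a proper $(2d+1)$-coloring via reverse greedy on a ``smallest-last'' ordering.

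First, I would establish the degeneracy bound by a simple edge-counting argument. Fix any nonempty $S \subseteq V$ and consider the subgraph of $H$ induced on $S$. Every undirected edge of $H[S]$ arises from at least one directed edge of $G$ with both endpoints in $S$. Since each vertex of $G$ has out-degree at most $d$, the number of directed edges of $G$ inside $S$ is at most $d\,|S|$, so $H[S]$ has at most $d\,|S|$ edges. Hence the sum of degrees in $H[S]$ is at most $2d\,|S|$, and therefore some vertex of $S$ has $H[S]$-degree at most $2d$. This is exactly $2d$-degeneracy of $H$.

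Next, I would give the algorithm. Iteratively find a vertex $v$ of minimum degree in the current graph (which is at most $2d$ by the above), put $v$ at the end of an ordering, and remove it; this produces an ordering $v_1, v_2, \ldots, v_n$ with the property that for every $i$, the vertex $v_i$ has at most $2d$ neighbors among $\{v_1,\ldots,v_{i-1}\}$. Then color $v_1, v_2, \ldots, v_n$ in order, at each step assigning $v_i$ any color in $[2d+1]$ not already used by its at most $2d$ previously-colored neighbors; such a color always exists. Both passes run in polynomial time.

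I do not expect a serious obstacle: the only non-trivial step is the counting argument showing $|E(H[S])| \le d\,|S|$, which is forced because directed out-degree caps the number of directed edges inside $S$ and each undirected edge is witnessed by at least one of them. Everything else (smallest-last ordering and greedy coloring) is textbook, and note that we never need to bound in-degrees — it is essential that we count directed edges inside $S$ from the tail side so that the out-degree bound applies cleanly.
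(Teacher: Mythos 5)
Your proof is correct and is essentially the same as the paper's: the paper also picks a minimum-total-degree vertex, recurses, and then colors it, justifying this with the same counting argument that the number of directed edges is at most $d\,|V|$ so some vertex has undirected degree at most $2d$. Phrasing it as $2d$-degeneracy plus a smallest-last greedy coloring is just a cleaner packaging of the identical idea.
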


\begin{lemma}
	\label{lemma:Anomalous}
		\kaedit{For any item $j$ we have $\Pr[j \in \cR_2 \given{j \in \cR_1}] = 1-o(1)$.}
\end{lemma}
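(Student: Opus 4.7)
\textbf{Proof plan for Lemma~\ref{lemma:Anomalous}.}

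The event $\{j \in \cR_2\}$ conditional on $\{j \in \cR_1\}$ is exactly the event that the out-degree of $j$ in $G$ is at most $d = \alpha + \sqrt{\alpha \log \alpha}$. Since $\cR_1 \subseteq \cR_0$, this out-degree is bounded above by the random variable
\[
Z \;:=\; \bigl|\, B(j) \cap \cR_0 \,\bigr|, \qquad B(j) \;:=\; \bigcup_{i:\, a_{ij} > 0} \bigc(i) \setminus \{j\}.
\]
So the plan is to show (a) $Z \le d$ with probability $1 - o(1)$ unconditionally, and (b) conditioning on $j \in \cR_1$ does not hurt this bound.

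For step (a), I first bound $\bE[Z]$. By independent sampling, $\bE[Z] = \sum_{j' \in B(j)} \alpha x_{j'}/k$. The key input is the strengthened LP constraint (\ref{eq:addcons}), which gives $\sum_{j' \in \bigc(i)} x_{j'} \le 1$ for every row $i$. Summing over the at most $k$ rows $i$ with $a_{ij} > 0$, and using a union bound for items appearing in multiple $\bigc(i)$'s, we get $\sum_{j' \in B(j)} x_{j'} \le k$, whence $\bE[Z] \le \alpha$. Since $Z$ is a sum of independent Bernoullis (one per $j' \in B(j)$), a standard Chernoff bound with deviation parameter $\delta = \sqrt{\log \alpha / \alpha}$ yields
\[
\Pr\!\bigl[Z > \alpha + \sqrt{\alpha \log \alpha}\bigr] \;\le\; \exp\!\bigl(-\Omega(\log \alpha)\bigr) \;=\; \alpha^{-\Omega(1)}.
\]
With $\alpha = k^{0.4}$, this is $o(1)$ as $k \to \infty$.

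For step (b), the conditioning event $\{j \in \cR_1\}$ is $\{j \in \cR_0\} \cap \{\text{no medium/tiny blocking for } j\}$. The first component is independent of the sampling of all other items, so it does not affect the distribution of $Z$. The second component is a \emph{decreasing} event in the sampling indicators $(\mathbf{1}[j' \in \cR_0])_{j' \ne j}$, since adding more sampled items can only create more blocking. The event $\{Z \le d\}$ is also decreasing in these same indicators. Since sampling is a product measure, the FKG inequality implies that these two decreasing events are positively correlated, so
\[
\Pr\!\bigl[Z \le d \,\big|\, j \in \cR_1\bigr] \;\ge\; \Pr[Z \le d] \;\ge\; 1 - o(1),
\]
which is what we need.

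The main obstacle is the conditioning in step (b): an item $j'$ can be ``big'' with respect to one constraint $i$ (contributing to $Z$) while being ``medium'' or ``tiny'' with respect to another constraint $i'$ containing $j$ (entering the blocking event for $j$), so the two events are not on disjoint coordinates of the randomness. Handling this cleanly is exactly what the monotonicity/FKG argument buys us, avoiding a more delicate direct coupling. The Chernoff step and the LP-based mean bound are then routine given the strengthened LP.
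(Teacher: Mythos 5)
Your proof is correct and the core of it (bound $\bE$ of the out-degree via the strengthened LP constraint $\sum_{j' \in \bigc(i)} x_{j'} \leq 1$ together with $k$-column-sparsity, then apply a Chernoff bound with deviation $\sqrt{\alpha\log\alpha}$) is the same argument the paper uses. The difference is that you are more careful on two technical points that the paper glosses over. First, the paper takes the $X_i$ to be indicators of membership in $\cR_1$, which are \emph{not} mutually independent (membership in $\cR_1$ depends on blocking events involving many items), so a Chernoff bound does not directly apply; you instead dominate the out-degree by $Z = |B(j) \cap \cR_0|$, a sum of genuinely independent Bernoullis, which is the clean way to do it. Second, the lemma is a statement conditioned on $j \in \cR_1$, and the paper never addresses why that conditioning does not distort the distribution of the out-degree; your FKG step (splitting $\{j \in \cR_1\}$ into the coordinate-$j$ event $\{j \in \cR_0\}$ and the decreasing event ``no medium/tiny blocking'', then observing $\{Z \le d\}$ is also decreasing) supplies exactly the missing justification. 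So: same approach, but your write-up patches real gaps in the paper's proof rather than introducing a different method.
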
		

\begin{lemma}
	\label{lemma:LowProbability}
	\kaedit{For each item $j$ we have the following.
	\[
			\Pr[\text{Medium or tiny blocking event occurs for $j$} \given{j \in \cR_0}] \leq O(\alpha^2 k^{-1} \log^2(k/\alpha)) = o(1).
	\]
	}
\end{lemma}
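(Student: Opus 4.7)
\emph{Proof plan.} Fix item $j$ and condition on $j\in\cR_0$. By column-sparsity, $j$ participates in at most $k$ rows, so I will union-bound over these. Since sampling is independent across items, conditioning on $j\in\cR_0$ does not change the distribution of the remaining memberships in $\cR_0$. Throughout, write $p_{j'} := \alpha x_{j'}/k$.

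For a row $i$ with $j \in \medc(i)$, the only relevant (medium) blocking event asks that $|\medc(i)\cap\cR_0|\geq 3$, i.e., at least two items of $\medc(i)\setminus\{j\}$ are sampled. Since $a_{ij'}\geq 1/\ell$ on medium items, row-$i$ LP feasibility gives $\sum_{j'\in\medc(i)}x_{j'}\leq\ell$ and hence $\sum_{j'\in\medc(i)}p_{j'}\leq \alpha\ell/k$. A pair-union bound over unordered pairs in $\medc(i)\setminus\{j\}$ then yields probability at most $\tfrac{1}{2}(\alpha\ell/k)^2$.

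For a row $i$ with $j\in\smallc(i)$, I split the two disjunctive clauses of the tiny blocking event. The clause $|\medc(i)\cap\cR_0|\geq 2$ is bounded by $\tfrac{1}{2}(\alpha\ell/k)^2$ exactly as above (and is independent of the event $\{j\in\cR_0\}$, since $j\notin\medc(i)$). For the other clause, I condition on the complement of the first, so at most one medium item is sampled, contributing at most $1/2$ to the sum; hence, to violate the second clause, the tiny contribution $T=\sum_{j'\in\smallc(i)\setminus\{j\},\,j'\in\cR_0}a_{ij'}$ must exceed $1/2-1/\ell$ (using $a_{ij}<1/\ell$). Each summand lies in $[0,1/\ell]$ and $\bE[T]\leq\alpha/k$ by LP feasibility, so a standard multiplicative Chernoff upper-tail bound for independent $[0,1/\ell]$-valued variables yields $\Pr[T>1/2-1/\ell]\leq (2e\alpha/k)^{\Omega(\ell)}=\exp\bigl(-\Omega(\log^2(k/\alpha))\bigr)$ for $\ell=\Theta(\log(k/\alpha))$, which is negligible compared to the $(\alpha\ell/k)^2$ term.

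Summing the per-row bounds over the $\leq k$ rows in which $j$ participates gives a total of $O\!\bigl(k\cdot(\alpha\ell/k)^2\bigr)=O\!\bigl(\alpha^2\log^2(k/\alpha)/k\bigr)$, matching the claim; with $\alpha=k^{0.4}$ this simplifies to $O(k^{-0.2}\log^2 k)=o(1)$. The only non-routine step is the Chernoff tail bound in the tiny case: it is precisely the choice of the ``tiny threshold'' $1/\ell$ with $\ell$ logarithmic in $k/\alpha$ that makes the per-item contributions small enough for the concentration estimate to be dominated by the $(\alpha\ell/k)^2$ pairwise-medium term. Everything else is independence, pair-union bounds, and LP feasibility.
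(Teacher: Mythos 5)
Your proof is correct and follows essentially the same route as the paper: per-row analysis, the $1/\ell$ tiny/medium threshold, LP feasibility to bound $\sum_{j'\in\medc(i)}x_{j'}\le\ell$ and $\bE[T]\le\alpha/k$, Chernoff for the tiny sum, and a union bound over the $\le k$ rows in which $j$ participates.

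Two minor organizational differences are worth noting, both harmless. First, for the medium blocking event you use a pairwise union bound to bound $\Pr[\text{two or more medium sampled}]\le\tfrac12(\alpha\ell/k)^2$, whereas the paper invokes a Chernoff estimate (its Theorem on Chernoff-Hoeffding bounds in the appendix) to obtain the same $O((\alpha\ell/k)^2)$; your version is a bit more elementary. Second, for the tiny blocking event the paper splits into three sub-cases according to whether zero, one, or at least two medium items appear, while you merge the first two: conditioning on at most one medium item sampled, you lower-bound the tiny contribution by $1/2-1/\ell$ directly, which subsumes the paper's Cases (1) and (2) in one Chernoff application. Since the $1/2-1/\ell$ threshold is the harder of the two and the medium and tiny items within a row are disjoint (so the conditioning does not disturb the tiny sum), this merge is sound and is arguably cleaner. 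The remaining ingredient—that the Chernoff term $\exp(-\Omega(\log^2(k/\alpha)))$ is dominated by the $O(\alpha^2\ell^2/k)$ pairwise-medium term once $\alpha=k^{0.4}$ and $\ell=\Theta(\log(k/\alpha))$—matches the paper's computation.
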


We can now prove the main theorem, Theorem \ref{mainthm:kcspip}. 

\begin{proof}
First we show that $\RF$ is feasible for our original IP. We have the following observations about Algorithm~\ref{alg:kcspip}: (i) from step \ref{ln:discard1}, ``Discarding low-probability events'', we have that no item in $\RF$ can be  blocked by either medium or tiny blocking events; (ii) from the ``Randomized alteration'' steps in step \ref{ln:alteration}, we have that no item in $\RF$ has any neighbor in $G'$ that is also included in $\RF$. This implies that no item in $\RF$ can be blocked by any big blocking events. Putting together the two observations implies that $\RF$ is a feasible solution to our IP.

We now show that the probability of $j$ being in $\RF$ can be calculated as follows.
\begin{align*}
\Pr[j \in \RF] &= \Pr[j \in \cR_0] \cdot \Pr[j \in \cR_1 \given{j \in \cR_0}] \cdot
 \Pr[j \in \cR_2 \given{j \in \cR_1}] \cdot \Pr[j \in \RF \given{j \in \cR_2}],\\
&\geq \frac{\alpha x_j}{k} \cdot (1-o(1)) \cdot (1-o(1)) \cdot \frac{1}{2\alpha+2\sqrt{\alpha \log \alpha} + 1},\\
& = \frac{x_j}{2k(1+o(1))}. 
\end{align*}
The first inequality is due to the following. From the sampling step we have that $\Pr[j \in \cR_0] = \alpha x_j/k$. From Lemma~\ref{lemma:LowProbability} we have that $\Pr[j \in \cR_1 \given{j \in \cR_0}] = 1 - o(1)$. Lemma~\ref{lemma:Anomalous} implies that $\Pr[j \in \cR_2 \given{j \in \cR_1}] = 1 - o(1)$. Finally from Lemma~\ref{lemma:Color} we have that the total number of colors needed for items in $\cR_2$ is at most $2d+1$, and hence the probability of picking $j$'s color class is $1/(2d+1)$. Thus, $\Pr[j \in \RF \given{j \in \cR_2}] = 1/(2\alpha+2\sqrt{\alpha \log \alpha} + 1)$ (recall that $d:= \alpha + \sqrt{\alpha \log \alpha}$).

	\kaedit{We obtain Theorem 1 by using linearity of expectation. In other words, $\mathbb{E}[\vec{w} \cdot \vec{X}] \geq \left( \vec{w} \cdot \vec{x} \right) \tfrac{1}{2k(1+o(1))}$. Note that $\vec{w} \cdot \vec{x}$ is the optimal value to the LP~\eqref{eq:addcons}.}
\end{proof}

\xhdr{Proof of \cref{lemma:Color}.} We will prove this Lemma by giving a coloring algorithm that uses at most $2d + 1$ colors and prove its correctness. Recall that we have a directed graph such that the maximum out-degree $\Delta \leq d$. The algorithm is a simple greedy algorithm, which first picks the vertex with minimum \emph{total} degree (\ie sum of in-degree plus out-degree). It then removes this vertex from the graph and recursively colors the sub-problem. Finally, it assigns a color to this picked vertex not assigned to any of its neighbors. Algorithm~\ref{alg:directedColoring} describes the algorithm formally.

\let\oldnl\nl
\newcommand\nonl{%
	\renewcommand{\nl}{\let\nl\oldnl}}

\IncMargin{2em}
\begin{algorithm}[!h]
	\caption{Greedy algorithm to color bounded degree directed graph}
	\label{alg:directedColoring}
	\DontPrintSemicolon
	\SetKwFunction{greedyColor}{Color-Directed-Graph}
	\Indm\nonl\greedyColor{$G, V, d, \chi$}\\
	\Indp
	\If{$V = \phi$}{\KwRet{$\chi$}}
	\Else{
		Let $v_{\min}$ denote the vertex with minimum total degree.\;
		$\chi$ = \greedyColor{$G, V \setminus \{v_{\min}\}, d, \chi$}.\;
		Pick the smallest color $c \in [2d + 1]$ that is not used to color any of the neighbors of $v_{\min}$. Let $\chi(v_{\min}) = c$. \; \label{ln:dag:color}
		\KwRet{$\chi$}\;
	}
\end{algorithm}			
\DecMargin{2em}

We will now prove the correctness of the above algorithm. In particular, we need to show that in every recursive call of the function, there is always a color $c \in [2d + 1]$ such that the assignment in line \ref{ln:dag:color} of the algorithm is feasible. We prove this via induction on number of vertices in the graph $G$. 

\xhdrem{Base Case}: The base case is the first iteration is when the number of vertices is $1$. In this case, the statement is trivially true since $v_{\min}$ has no neighbors.

\xhdrem{Inductive Case}: We have that $\Delta \leq d$ for every recursive call. Hence, the sum of \emph{total} degree of all vertices in the graph is $2nd$ (Each edge contributes $2$ towards the total degree and there are $nd$ edges). Hence, the average total degree is $2d$. This implies that the minimum total degree in the graph is at most $2d$. Hence, the vertex $v_{\min}$ has a total degree of at most $2d$. From inductive hypothesis we have that $V \setminus \{v_{\min}\}$ can be colored with at most $2d + 1$ colors. Hence, there exists a color $c \in [2d + 1]$, such that $\chi(v_{\min}) = c$ is a valid coloring (since $v_{\min}$ has at most $2d$ neighbors). 

\xhdr{Proof of \cref{lemma:Anomalous}.} Consider an item $j \in \cR_1$. We want to show that the $\Pr[\delta_j > \alpha + \sqrt{\alpha \log \alpha}] \leq o(1)$, where $\delta_j$ represents the out-degree of $j$ in the directed graph $G$. Recall that from the construction of graph $G$, we have a directed edge from item $j$ to item $j'$ if and only if there is a constraint $i$ where $a_{ij'} > 1/2$ and $a_{ij} > 0$. For sake of simplicity, let $\mathcal{S} := \{1, 2, \ldots, N_j\}$ denote the set of 
	\kaedit{items that satisfy the following. For every $j' \in \mathcal{S}$, there is some constraint $i$ such that $a_{ij} > 0$ and $a_{ij'} > 1/2$.} Let $\{X_1, X_2, \ldots, X_{N_j}\}$ denote the corresponding indicator random variable for them being included in $\cR_1$. Hence, for every $i \in [N_j]$ we have $\mathbb{E}[X_i] = \alpha (1-o(1)) x_i/k$. From the strengthened constraints in \LP~\eqref{eq:addcons} \kaedit{and the $k$-column sparsity assumption}, we have that $\mathbb{E}[\delta_j] = \mathbb{E}[X_1 + X_2 + \ldots + X_{N_j}] \leq \alpha(1-o(1))$. Hence we have,
\begin{equation*}
	\Pr[\delta_j > \alpha + \sqrt{\alpha \log \alpha}] \leq \Pr[\delta_j > \mathbb{E}[\delta_j] + \sqrt{\alpha \log \alpha}]  \leq e^{-\Omega(\log \alpha)} = o(1).
\end{equation*}
The last inequality is from the Chernoff bounds, while the last equality is true for $\alpha = k^{0.4}$.

\xhdr{Proof of \cref{lemma:LowProbability}.} Consider the medium blocking event $\medb(j)$. Let $i$ be a constraint that causes $\medb(j)$ and let $j_1, j_2, \kaedit{\ldots, j_h} \neq j$ be the other variables \kaedit{in constraint $i$} such that $j_1, j_2, \kaedit{\ldots, j_h} \in \medc(i)$. Denote $X_{j}$, $X_{j_1}, X_{j_2}, \kaedit{\ldots,X_{ j_h}}$ to be the indicators that $j \in \cR_0, j_1, j_2, \kaedit{\ldots, j_h} \in \cR_0$ respectively.
	
	We know that $a_{ij}x_{j} + a_{ij_1}x_{j_1} + a_{ij_2}x_{j_2} \kaedit{+ \ldots + a_{ij_h}x_{j_h}} \leq 1$ and since $j, j_1, j_2, \kaedit{\ldots, j_h} \in \medc(i)$, we have 
	$x_{j} + x_{j_1} + x_{j_2} + \kaedit{\ldots + x_{j_h}} \leq \ell$ for some fixed value $\ell$. Scenario $\medb(j)$ is ``bad'' if $X_{j_1} + X_{j_2} + \kaedit{\ldots + X_{j_h}} \geq 2$. Note that $\mathbb{E}[X_j + X_{j_1} + X_{j_2} + \kaedit\ldots + {X_{j_h}}] \leq \frac{\alpha \ell}{k}$.
	
	Using the the Chernoff bounds in the form denoted in Theorem \ref{appx:chernoff} of Appendix, we have,
	\begin{equation*}
	\Pr[X_{j_1} + X_{j_2} + \kaedit{\ldots + X_{j_h}} \geq 2 \bigm| X_{j} = 1] = \Pr[X_{j_1} + X_{j_2} + \kaedit{\ldots + X_{j_h}} \geq 2] \leq O\Paren{\frac{\alpha^2 \ell^2}{k^2}}.
	\end{equation*}
	Note that the first equality is due to the fact that these variables are independent. Using a union bound over the $k$ constraints $j$ appears in, the total probability of the ``bad'' event is at most $O(\frac{\alpha^2 \ell^2}{k})$. And since $\alpha = k^{0.4}$ and $\ell = \Theta(\log(k/\alpha))$, this value is $o(1)$.
	
	For scenario $\smallb(j)$ we will do the following. If $j$ is tiny, \kaedit{a sufficient condition for the} ``bad'' event for constraint $i$ is \kaedit{if} one of the following occurs.
\kaedit{
	\begin{enumerate}
		\item \textbf{Blocked by other tiny items.} The sum of all coefficients of items $j'$ in $\cR_0$ where $a_{i, j'} \leq 1/\ell$ is greater than $1-1/\ell$.
		\item \textbf{Blocked by one medium item and other tiny items.} There exists an item $j'$ in $\cR_0$ such that $1/\ell < a_{i, j'} \leq 1/2$ and sum of coefficients of all items $j''$ such that $a_{i, j''} \leq 1/\ell$ is greater than $1/2-1/\ell$.
		\item \textbf{Blocked by two medium items.} There exists at least two items $j'$ and $j''$ in $\cR_0$ such that $1/\ell < a_{i, j'}, a_{i, j''} \leq 1/2$.
	\end{enumerate}
}	
	Note in case there there exists three or more medium items blocking a tiny item, that is handled in the previous case. Hence the only possible way a tiny item could be blocked after this is by a big item. We will now show that each of these cases occurs with probability $o(1)$.
	
	\emph{Case (1).} Consider a constraint $i$. Let $\cH_i$ denote the set of tiny items appearing in this constraint except item $j$. Note that $\mathbb{E}[\sum_{h \in \cH_i}A_{ih}X_{ih}] \leq \alpha/k$. Moreover we have,
		\[ \textstyle \Pr[\sum_{h \in \cH_i} A_{ih}X_{ih} > 1 - A_{ij}] \leq \Pr[\sum_{h \in \cH_i} A_{ih}X_{ih} > 1 - 1/\ell] = \Pr[\sum_{h \in \cH_i} \ell A_{ih}X_{ih} > \ell-1]. 
		\]
		\kaedit{Note that for every $h \in \cH_i$ we have $\ell A_{ih}X_{ih} \in [0, 1]$. Define $Y_h := \ell A_{ih}X_{ih}$. Thus, we have $\mathbb{E}\left[ \sum_{h \in \cH_i} Y_h \right] \leq \tfrac{\alpha \ell}{k}$. We want to upper-bound the quantity,
		\[
				\textstyle \Pr\left[ \sum_{h \in \cH_i} Y_h > \ell -1 \right].
		\]
		Define $\delta = \tfrac{\ell-1}{\ell} \cdot \tfrac{k}{2 \alpha}$. Note that $(1+ \delta)\tfrac{\alpha \ell}{k} \leq \ell - 1$. Thus, 
		\[
				\textstyle \Pr\left[ \sum_{h \in \cH_i} Y_h > \ell -1 \right] \leq \Pr\left[ \sum_{h \in \cH_i} Y_h > (1+ \delta)\tfrac{\alpha \ell}{k} \right].
		\]
		From Chernoff-Hoeffding bounds (Theorem~\ref{appx:chernoffmult}) we have that,
		\[
			\textstyle \Pr\left[ \sum_{h \in \cH_i} Y_h > (1+ \delta)\tfrac{\alpha \ell}{k} \right] \leq \exp\left[ - \tfrac{\delta^2}{2 + \delta} \cdot \tfrac{\alpha \ell}{k} \right].
		\]
		
		In what follows, we assume that $k \geq 2$, $\ell \geq 3$ and $\alpha \leq k$. Hence, $1 \geq \tfrac{\ell-1}{\ell} \geq \tfrac{2}{3} \geq \tfrac{1}{2}$.
		
		This implies that we have,
		\[
				\textstyle \tfrac{\delta^2}{2 + \delta} \cdot \tfrac{\alpha \ell}{k} \geq \left( \tfrac{\ell-1}{\ell} \right)^2 \cdot \tfrac{k^2}{4 \alpha^2} \cdot \tfrac{\alpha \ell}{k} \cdot \tfrac{1}{2 + \tfrac{\ell-1}{\ell} \cdot \tfrac{k}{2 \alpha} } \geq \tfrac{\ell}{40}.
		\]
		
		Combining the above arguments we obtain,
		\[
				\textstyle \Pr[\sum_{h \in \cH_i} \ell A_{ih}X_{ih} > \ell-1] \leq \exp\left[ - \tfrac{\ell}{40} \right].
		\]
		}
	 \\
	
	\emph{Case (2).} Consider a constraint $i$. Let $\cH_i$ denote the set of tiny items appearing in this constraint except the item $j$. Let $j'$ denote the medium item present in this constraint. Note that $\mathbb{E}[\sum_{h \in \cH_i}A_{ih}X_{ih}] \leq \alpha/k$. Moreover we have,
		\[\textstyle \Pr[\sum_{h \in \cH_i} A_{ih}X_{ih} > 1-A_{ij}-A_{ij'}] \leq \Pr[\sum_{h \in \cH_i} A_{ih}X_{ih} > 1/2-1/\ell] = \Pr[\sum_{h \in \cH_i} \ell A_{ih}X_{ih} > \ell/2-1].
		\]
		\kaedit{
			As in case (1) we invoke the Chernoff-Hoeffding bounds to find an upper-bound. Define $\delta := \tfrac{\ell/2 -1}{\ell} \cdot \tfrac{k}{2 \alpha}$. This implies that $(1+\delta) \tfrac{\ell \alpha}{k} \leq \ell/2 -1$. Therefore we obtain,
			\[
					\textstyle \Pr[\sum_{h \in \cH_i} \ell A_{ih}X_{ih} > \ell/2-1] \leq \exp\left[ - \tfrac{\delta^2}{2 + \delta} \cdot \tfrac{\alpha \ell}{k} \right].
			\]
			
			Moreover, the fact that $k \geq 2$ and $\ell \geq 3$ implies that $\tfrac{1}{2} \geq \tfrac{\ell/2-1}{\ell} \geq \tfrac{1}{6}$. Thus,
			\[
					\textstyle \tfrac{\delta^2}{2 + \delta} \cdot \tfrac{\alpha \ell}{k} \geq  \left( \tfrac{\ell/2-1}{\ell} \right)^2 \cdot \tfrac{k^2}{4 \alpha^2} \cdot \tfrac{\alpha \ell}{k} \cdot \tfrac{1}{2 + \tfrac{\ell/2-1}{\ell} \cdot \tfrac{k}{2 \alpha} } \geq \tfrac{\ell}{36}.
			\]
			
			Combining the above arguments we obtain that,
			\[
					\textstyle \Pr[\sum_{h \in \cH_i} \ell A_{ih}X_{ih} > \ell/2-1] \leq \exp\left[ - \tfrac{\ell}{36} \right].
			\] 
		}
	\emph{Case (3).} Consider a constraint $i$. Let $\cG_i$ denote the set of medium items in this constraint. The condition we want is $\sum_{g \in \cG_i} X_{ig} \geq 2$. Note that $\mathbb{E}[\sum_{g \in \cG_i} X_{ig}]\leq \alpha \ell /k$. Using the Chernoff bounds of the form for the medium case, we have that $\Pr[\sum_{g \in \cG_i} X_{ig} \geq 2] \leq O\Paren{\frac{\alpha^2 \ell^2}{k^2}}$.
	
	Note that taking a union bound over the $k$ constraints, setting \kaedit{$\ell = 80 \log(k/\alpha)$} and $\alpha = k^{0.4}$, we have that the probability of the tiny blocking event (Case (3)) occurring is $O(\frac{\alpha^2 \ell^2}{k})$ = $o(1)$. \kaedit{Likewise, the upper-bounds in Case (1) and Case (2) evaluates to $k^{-0.2}$ and $k^{-1/3}$ respectively. Each of these quantities are $o(1)$.}

\xhdr{\emph{Near}-negative correlation.} The natural approach to proving Lemma~\ref{lemma:Color} can introduce substantial positive correlation among the items included in $\cR_{F}$. However, by slightly modifying the coloring algorithm, we obtain \emph{near}-negative correlation in the upper direction among the items. \kaedit{In particular, we first color using the modified coloring scheme in Algorithm~\ref{alg:directedColoringNegCor}. Then, we choose $c$ uniformly at random from the set $[2d + d^{1-\epsilon}]$ and include all items with the color $c$ into the set $\cR_{F}$.}
	
	\IncMargin{2em}
	\begin{algorithm}[!h]
		\caption{Greedy algorithm to color bounded out-degree directed graphs using $2d + d^{1-\epsilon}$ colors, and with near-negative correlation}
		\label{alg:directedColoringNegCor}
		\DontPrintSemicolon
		\SetKwFunction{greedyColor}{Color-Dir-Graph-Neg-Corr}
		\Indm\nonl\greedyColor{$G, V, d, \epsilon, \chi$}\\
		\Indp
		\If{$V = \phi$}{\KwRet{$\chi$}}
		\Else{
			Let $v_{\min}$ denote the vertex with minimum total degree.\;
			$\chi$ = \greedyColor{$G, V \setminus \{v_{\min}\}, d, \epsilon, \chi$}.\;
			Among the \emph{smallest} $d^{1-\epsilon}$ colors in the set $[2d + d^{1-\epsilon}]$ that have not been used thus far to color any of the neighbors of $v_{\min}$, choose a color $c_r$ \emph{uniformly} at random. Let $\chi(v_{\min}) = c_r$. \;
			\KwRet{$\chi$}\;
		}
	\end{algorithm}			
	\DecMargin{2em}

\kaedit{Negative correlation among the events that various items are rounded up is a desirable property, since it reduces the overall variance of the objective function. Recall that in Theorem~\ref{mainthm:kcspip} the ratio holds in expectation. However, the significant positive correlation implies that the variance can be large. Thus, by achieving near negative correlation we can reduce the variance. In particular, by using $t=2$ in Theorem~\ref{thm:negativeCorrelation} we immediately obtain a bound on the variance. In fact, using prior work it also implies strong concentration bounds (see Remark~\ref{rem:concNeg} below).}

\begin{theorem}
	\label{thm:negativeCorrelation}
	Given any constant $\epsilon \in (0,1)$, there is an efficient randomized algorithm for rounding a fractional solution within the \kcspip polytope, such that 
	\begin{enumerate}
		\item For all items $j \in [n]$, $\Pr[j \in \cR_{F}] \geq \frac{x_j}{2k(1+o(1))}.$
		\item For any $t \in [n]$ and any $t$-sized subset $\{v_1, v_2, \ldots, v_t \}$ of items in $[n]$, we have (with $d =  \alpha + \sqrt{\alpha \log \alpha}$ \talgedit{and $\alpha=k^{0.4}$} as above)
			\begin{equation*}
			\Pr[v_1 \in \cR_{F} \wedge v_2 \in \cR_{F} \wedge \ldots \wedge v_t \in \cR_{F}] \leq (2 d^{\epsilon})^{t-1} \cdot \prod_{j = 1}^t \frac{x_{v_j}}{2k}. 
			\end{equation*}
	\end{enumerate}
\end{theorem}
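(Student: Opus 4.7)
The plan is to mirror the proof of \cref{mainthm:kcspip}, reusing \cref{lemma:Anomalous,lemma:LowProbability} verbatim and exploiting the uniform ``smallest $d^{1-\epsilon}$ available colors'' rule of \cref{alg:directedColoringNegCor} to get the joint bound. First I would verify the modified coloring is well defined: the same averaging argument as in \cref{lemma:Color} shows that each $v_{\min}$ picked by the recursion has total degree at most $2d$, so out of $2d + d^{1-\epsilon}$ colors at least $d^{1-\epsilon}$ are available at each step, and the uniform draw from the window of the smallest $d^{1-\epsilon}$ available colors is always well defined.

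For Part~(1), the same chain of conditional probabilities as in the proof of \cref{mainthm:kcspip} applies, the only change being that $\Pr[j \in \cR_F \mid j \in \cR_2] = \Pr[\chi(j) = c] = 1/(2d + d^{1-\epsilon})$, since $c$ is drawn independently and uniformly from $[2d + d^{1-\epsilon}]$. Because $d = \alpha(1+o(1))$ and $d^{1-\epsilon}/d \to 0$, this denominator equals $2\alpha(1+o(1))$, yielding $\Pr[j \in \cR_F] \geq x_j/(2k(1+o(1)))$ after combining with \cref{lemma:Anomalous,lemma:LowProbability}.

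The heart of the argument is Part~(2). The structural observation is that $\{v_1,\dots,v_t\} \subseteq \cR_F$ forces $\chi(v_1)=\chi(v_2)=\cdots=\chi(v_t)=c$ for a single color $c$. I would decompose this event into: (a) all $v_i$ survive into $\cR_2$; (b) the greedy coloring assigns them a common color; (c) the uniform $c$ matches that color. For (a), independence of the sampling in Step~\ref{ln:smapling} and $\cR_2 \subseteq \cR_0$ give $\Pr[v_1,\dots,v_t \in \cR_2] \leq \prod_{i=1}^t \alpha x_{v_i}/k$. For (b), I would condition on the realization of $\cR_2$; the greedy algorithm then visits vertices in a deterministic order, and relabeling the $v_i$'s in coloring order, for each $j \geq 2$ the color of $v_j$ is drawn uniformly from the smallest $d^{1-\epsilon}$ available colors, so for any fixed target color $c^\star$ (in particular $c^\star = \chi(v_1)$) we have $\Pr[\chi(v_j) = c^\star \mid \text{earlier color choices}] \leq 1/d^{1-\epsilon}$: either $c^\star$ lies in this window and the probability equals $1/d^{1-\epsilon}$, or it does not and the probability is $0$. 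Chain rule then yields $\Pr[\chi(v_1)=\cdots=\chi(v_t)\mid \cR_2] \leq (d^{1-\epsilon})^{-(t-1)}$. For (c), given a common color, the uniform draw matches it with probability $1/(2d+d^{1-\epsilon})$.

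Combining, $\Pr[v_1,\dots,v_t \in \cR_F] \leq \prod_{i=1}^t (\alpha x_{v_i}/k)\cdot (d^{1-\epsilon})^{-(t-1)} \cdot (2d+d^{1-\epsilon})^{-1}$; multiplying out and using $\alpha \leq d$, one checks $2\alpha^t \leq 2d^t + d^{t-\epsilon}$, so the bound simplifies to $(2d^\epsilon)^{t-1}\prod_{i=1}^t x_{v_i}/(2k)$ as desired. The main obstacle I expect is the careful handling of step~(b): the ``smallest $d^{1-\epsilon}$ available colors'' window depends on all previously chosen colors, so one has to verify that the pointwise bound $1/d^{1-\epsilon}$ on the probability that $v_j$ takes any specific color holds uniformly over the conditioning history and over the coloring order produced by the greedy recursion — which it does because the uniform draw is over exactly $d^{1-\epsilon}$ options regardless of how the window shifts.
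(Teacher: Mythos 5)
Your proposal is correct and follows essentially the same route as the paper: you condition on all $v_i$ surviving into $\cR_2$ (bounded by $\prod_i \alpha x_{v_i}/k$), exploit the uniform draw over a window of exactly $d^{1-\epsilon}$ colors to bound each successive color collision by $1/d^{1-\epsilon}$, and account for the final uniform choice of $c$ with a factor $1/(2d + d^{1-\epsilon})$ — which is precisely the paper's conditional chain $\Pr[v_1 \in \cR_F \mid U_{G'}] \cdot \prod_{j\ge 2}\Pr[v_j \in \cR_F \mid v_1,\dots,v_{j-1}\in\cR_F, U_{G'}]$, just presented as a three-event decomposition rather than a direct chain. The only cosmetic difference is the closing algebra: you verify $2\alpha^t \le 2d^t + d^{t-\epsilon}$, whereas the paper absorbs the same slack by loosening the per-step bounds to $1/(2\alpha)$ and $d^\epsilon/\alpha$ so that the $\alpha^t$ from the sampling step cancels exactly; both yield the identical stated inequality.
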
	
\begin{proof}
	\kaedit{Consider the coloring procedure in Algorithm~\ref{alg:directedColoringNegCor}. We will show that this induces near-negative correlation. The other steps in the proof remain the same and will directly imply the theorem.} Note that after coloring using Algorithm~\ref{alg:directedColoringNegCor}, we obtain the rounded items by first choosing one of the colors $c$ in the set $[2d + d^{1-\epsilon}]$ uniformly at random and then add all the vertices which received a color $c$ into the set $\cR_{F}$. Since $\alpha^{1-\epsilon} \leq o(\alpha)$, a similar analysis as before follows to give part (1) of Theorem~\ref{thm:negativeCorrelation}. We will now prove part (2). Fix
		an arbitrary $t \in [n]$ and any $t$-sized subset $U := \{v_1, v_2, \ldots, v_t \}$ of items in $[n]$. A necessary condition for these items to be present in $G'$ is that they were all chosen into $\cR_0$, which happens with probability 
        \begin{equation}
        \label{eqn:all-in-G'}
        \textstyle \prod_{j=1}^t \frac{x_{v_j} \alpha}{k};
        \end{equation}
        suppose that this indeed happens (all the remaining probability calculations are conditional on this). Note that Algorithm~\ref        {alg:directedColoringNegCor} first removes the vertex $v_{\min}$ and then recurses; i.e., it removes the vertices one-by-one, starting with $v_{\min}$. Let $\sigma$ be the \emph{reverse} of this order, and suppose that the order of vertices in $U$ according to $\sigma$ is, without loss of generality, $\{v_1, v_2, \ldots, v_t \}$. Recall again that our probability calculations now are conditional on all items in $U$ being present in $G'$; denote \kaedit{this event by $U_{G'}$ and $\mathbb{I}[U_{G'}]$ to be the corresponding indicator}. Note that $\kaedit{\Pr[v_1 \in \cR_{F}\given{\mathbb{I}[U_{G'}]}]} = \frac{1}{2d + d^{1-\epsilon}} \leq \frac{1}{2\alpha}$. \kaedit{This is because we choose one of the colors in $[2d + d^{1-\epsilon}]$ colors at random for the first vertex.} Next, a moment's reflection shows that for any $j$ with $2 \leq j \leq t$, 
        \begin{equation*}
		\kaedit{\Pr[v_j \in \cR_{F} \given{v_1 \in \cR_{F}, v_2 \in \cR_{F}, \ldots, v_{j-1} \in \cR_{F}, \mathbb{I}[U_{G'}]}]} \leq \frac{1}{d^{1 - \epsilon}}\leq \frac{2 d^{\epsilon}}{2\alpha}. 
		\end{equation*}
		\kaedit{
		Note that,
		\begin{align*}
			 &\Pr[v_1 \in \cR_{F} \wedge v_2 \in \cR_{F} \wedge \ldots \wedge v_t \in \cR_{F}\given{\mathbb{I}[U_{G'}]}]\\
			 &= \Pr[v_t \in \cR_{F}\given{v_1 \in \cR_{F} \wedge v_2 \in \cR_{F} \wedge \ldots \wedge v_{t-1} \in \cR_{F}, \mathbb{I}[U_{G'}]}] \times \\
			 & \Pr[v_{t-1} \in \cR_{F}\given{v_1 \in \cR_{F} \wedge v_2 \in \cR_{F} \wedge \ldots \wedge v_{t-2} \in \cR_{F}, \mathbb{I}[U_{G'}]}] \times \ldots \times \Pr[v_1 \in \cR_{F}\given{\mathbb{I}[U_{G'}]}]
		\end{align*}
		
		Chaining these together we obtain,
		\[
			\Pr[v_1 \in \cR_{F} \wedge v_2 \in \cR_{F} \wedge \ldots \wedge v_t \in \cR_{F}\given{\mathbb{I}[U_{G'}]}] \leq \left( \tfrac{d^{\epsilon}}{\alpha} \right)^{t-1} \tfrac{1}{2\alpha}.
		\]
		
		Combining this with Equation~\ref{eqn:all-in-G'} we get,
		\[
			\Pr[v_1 \in \cR_{F} \wedge v_2 \in \cR_{F} \wedge \ldots \wedge v_t \in \cR_{F}] \leq (2 d^{\epsilon})^{t-1} \cdot \prod_{j = 1}^t \frac{x_{v_j}}{2k}. \qedhere
		\]
		}
\end{proof}
\kaedit{
\begin{remark}
	\label{rem:concNeg}
	Let $X_j$ denote the indicator random variable for the event that $j \in \mathcal{R}_F$. Then, by the results of \cite{PS97,SSS95}, Theorem~\ref{thm:negativeCorrelation} yields upper-tail bounds for any non-negative linear combination of the $X_j$'s.
\end{remark}
}

	\xhdr{Implementation in the distributed setting.} We briefly give a high-level description on how to obtain distributed algorithms with the same approximation ratio for \kcspip. The algorithm described above cannot directly be implemented in the distributed model of computation. We now briefly describe how our algorithm can be modified to overcome this. Note that step~\ref{ln:color} in the current algorithm makes the algorithm inherently \emph{sequential}. \kaedit{The running time of our algorithm is determined by this step and runs in time $O(\poly n)$. However, we can obtain the coloring by using known distributed algorithms that obtain the same ratio and runs in time $O(k^{0.4} \poly \log n \poly \log k)$. In particular, we prove the following theorem.
	
		\begin{theorem}[Implementation in the distributed setting]
			\label{thm:distkcspip}
				There exists a rounding algorithm for the \kcspip problem in the distributed setting, that achieves an approximation ratio of $2k + o(k)$ in time $\tilde{O}(k^{0.4} \log n)$, where $\tilde{O}(.)$ hides $\poly \log k$ factors.
		\end{theorem}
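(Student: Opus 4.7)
The plan is to implement each step of Algorithm~\ref{alg:kcspip} in the standard LOCAL distributed model, treating each item and each constraint as a node of the natural bipartite item-constraint communication graph. The sampling in step~\ref{ln:smapling} is purely local (each item tosses its own coin). The blocking-event pruning in step~\ref{ln:discard1} only requires each item to inspect the sampled status of items that share a constraint with it, which is a $2$-hop query in the bipartite graph and executes in $O(1)$ rounds. Building the directed auxiliary graph $G$ (step~\ref{ln:G'}) and removing anomalous vertices (step~\ref{ln:discard2}) are similarly $O(1)$-local, since the out-edges of a vertex lie within its immediate constraint-neighborhood and checking out-degree against $d$ is a local count. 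The final independent-set extraction in step~\ref{ln:IS} is a single random choice of $c \in [2d+1]$ that can be broadcast in $O(\log n)$ rounds.

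The main obstacle is step~\ref{ln:color}: producing a $(2d+1)$-coloring of the undirected version of $G'$ in parallel. The key structural fact that made Lemma~\ref{lemma:Color} work sequentially, namely that $G'$ admits an orientation with max out-degree at most $d$, is precisely the statement that $G'$ has arboricity/degeneracy $O(d)$ and is already handed to us as part of the construction. This places us squarely in the regime of distributed coloring of low-arboricity graphs, for which efficient algorithms are well-studied. Concretely, the plan is to invoke a distributed degeneracy-based coloring routine in the style of Barenboim--Elkin: iteratively peel vertices of small in-degree in the given orientation (using MIS/forest decomposition on the oriented graph), and color each peeled layer greedily against already-colored neighbors using one of the $2d+1$ colors. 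On graphs of arboricity at most $d$ this runs in $\tilde O(d \log n)$ rounds and produces a proper $(2d+1)$-coloring.

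Plugging in $d = \alpha + \sqrt{\alpha \log \alpha}$ with $\alpha = k^{0.4}$ yields $\tilde O(k^{0.4} \log n)$ rounds for the coloring step, which dominates all other steps; the total round complexity is therefore $\tilde O(k^{0.4} \log n)$, as claimed. Since every step is executed as the exact same random process as in the centralized Algorithm~\ref{alg:kcspip}, the per-item success probability analysis in the proof of Theorem~\ref{mainthm:kcspip} is preserved verbatim, giving the same $2k + o(k)$ approximation ratio. The point to verify carefully, beyond the cited distributed coloring primitive, is that the color budget remains exactly $2d+1$ (rather than some $O(d)$); this holds because the peeling is performed against the \emph{supplied} out-orientation of $G'$, so the greedy argument of Lemma~\ref{lemma:Color} (a vertex has at most $2d$ already-colored neighbors when it gets colored) transfers one-to-one from the sequential to the distributed setting.
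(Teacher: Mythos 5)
Your proposal takes essentially the same approach as the paper: identify the coloring step as the sequential bottleneck and replace it with a Barenboim--Elkin style distributed coloring for low-arboricity graphs, leaving all other steps to execute locally in $O(1)$ rounds. The one place where you overclaim is the color budget. You assert that the peeling-against-the-supplied-orientation argument transfers \emph{one-to-one} to produce exactly $2d+1$ colors, but the distributed analogue does not reproduce the sequential averaging argument of Lemma~\ref{lemma:Color} verbatim: the distributed routine peels entire layers simultaneously rather than one minimum-total-degree vertex at a time, and its color guarantee is phrased in terms of arboricity, not directly in terms of the out-degree bound $d$. What you actually get, as the paper computes, is this: max out-degree $\le d$ in the supplied orientation implies pseudo-arboricity $\le d$ (via the Frank--Gy\'arf\'as / Picard--Queyranne equivalence), hence arboricity $A \le d+1$ (Nash--Williams), and running Arb-Color with $\epsilon = 1/(d+1)$ yields roughly $2A + 1 + A/(d+1) \le 2d+3$ colors in $O(A \log n) = \tilde O(k^{0.4}\log n)$ rounds. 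The extra two colors are harmless — $2d+3 = 2d + o(d)$, so the $2k + o(k)$ ratio is preserved — but since you explicitly flagged the exact budget as ``the point to verify carefully'' and then dismissed it, it is worth noting that the verification does not actually go through as stated; the correct statement is $2d + O(1)$, not $2d+1$.
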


		Before we prove the theorem, we recall the following definitions from graph theory.}
		
		\begin{definition}[pseudo-Forest]
			\label{def:pseudoForest}
				A graph is called a \emph{pseudo-forest} if every connected component has \emph{at most} one cycle.
		\end{definition}

		\begin{definition}[Arboricity]
			\label{def:arboricity}
				\emph{Arboricity} of a graph is the minimum number of forests that the edge set of the graph can be partitioned into.
		\end{definition}
		
		\begin{definition}[Pseudo-Arboricity]
			\label{def:pseudoArboricity}
				\emph{Pseudo-Arboricity} of a graph is the minimum number of pseudo-forests that the edge set of the graph can be partitioned into.
		\end{definition}
				
		\begin{definition}[Minimum out-degree orientation]
			\label{def:minOutdegreeOrientation}
				Given an undirected graph $G=(V, E)$ an orientation maps the set of undirected edges to a set of directed edges between the same set of vertices. The out-degree of a orientation is defined as the largest out-degree of any vertex under a given orientation. The \emph{minimum out-degree orientation} is defined as the orientation in which the out-degree is minimum across all possible orientations.
		\end{definition}
		
		\kaedit{
			We now prove Theorem~\ref{thm:distkcspip}.
			\begin{proof}
				Note that both the sampling step and the discarding low-probability events step in Algorithm~\ref{alg:kcspip} can be implemented in parallel in $O(1)$ time. Thus, the bottleneck step in the algorithm is the coloring step which is inherently sequential. Using results from prior-work we show that this can be parallelized as follows. Let $A$ denote the arboricity of the constructed directed graph in step \ref{ln:discard2} of Algorithm~\ref{alg:kcspip}.  Using the Arb-color algorithm from \cite{BE10}\footnote{Note, this algorithm can be implemented without knowing $A$.} with $\epsilon = \frac{1}{d+1}$, we obtain a $2A + 1 + \tfrac{A}{d+1}$ coloring to this directed graph in time $O(A \log n)$. It is known that (see \cite{NW64} for instance) the arboricity of a graph $A$ is at most one more than the pseudo-arboricity of that graph (denote by $\operatorname{PA}$). Thus, the number of colors used is $2\operatorname{PA} + 2 + \tfrac{\operatorname{PA}+1}{d+1}$. Note that the max out-degree in our constructed directed graph is at \talgedit{most} $d$; thus the max-degree in the minimum out-degree orientation is at most $d$. From theorems in \cite{FG78} and \cite{PQ82}, it follows that the pseudo-arboricity of a graph is the out-degree in the minimum out-degree orientation. Therefore, the total number of colors used is at most $2d +3$. Note that $A \leq PA + 1 \leq d + 1$. Thus, the running time of the algorithm is $O(d \log n) = \tilde{O}(k^{0.4} \log n)$. Given a coloring, we can choose an independent set (and thus the rounded items) in $O(1)$ time. Therefore, the entire rounding procedure runs in time $\tilde{O}(k^{0.4} \log n)$.
			\end{proof}
		}

\subsection{Extension to Submodular Objectives}\label{sec:kcs-sub}
				As described in the preliminaries, we can extend certain contention-resolution schemes to submodular objectives using prior work. We will now show that the above rounding scheme can be extended to submodular objectives; in particular, we will use the following definition and theorem from Chekuri~\etal~\cite{CVZ14} \footnote{Specifically, Definition 1.2 and Theorem 1.5 from~\cite{CVZ14}.}.
				
				\begin{definition}
					\label{def:bcCR}
						($bc$-BALANCED MONOTONE CR SCHEMES \cite{CVZ14}) Let $b, c \in [0, 1]$ and let $N:= [n]$ be a set of items. A $bc$-balanced CR scheme $\pi$ for $\cP_{\cI}$ (where $\cP_{\cI}$ denotes the convex relaxation of the set of feasible integral solutions $\cI \subseteq 2^N$) is a procedure which for every $\vec{y} \in b \cP_{\cI}$ and $A \subseteq N$, returns a set $\pi_{\vec{y}}(A) \subseteq A \cap support(\vec{y})$ with the following properties:
						\begin{OneLiners}
								\item[(a)] For all $A \subseteq N, \vec{y} \in b\cP_{\cI}$, we have that  $\pi_{\vec{y}}(A) \in \cI$ with probability 1.
								\item[(b)] For all $i \in support(\vec{y})$, $\vec{y} \in b \cP_{\cI}$, we have that $\Pr[i \in \pi_{\vec{y}}(R(\vec{y})) \given{i \in R(\vec{y})}] \geq c$, where $R(\vec{y})$ is the random set obtained by including every item $i \in N$ independently with probability $y_i$.
								\item[(c)] Whenever $i \in A_1 \subseteq A_2$, we have that $\Pr[i \in \pi_{\vec{y}}(A_1)] \geq \Pr[i \in \pi_{\vec{y}}(A_2)]$.
						\end{OneLiners}
				\end{definition}
				
				\begin{theorem}[Chekuri \etal~\cite{CVZ14}]
					\label{prelim:submodular}
					Suppose $\cP_{\cI}$ admits a $bc$-balanced monotone CR scheme and $\eta_f$ is the approximation ratio for $\max\{F(\vec{x}) : \vec{x} \in \cP_{\cI} \cap \{0, 1\}^n\}$ (here, $F(\vec{x})$ is the multi-linear extension of the sub-modular function $f$). Then there exists a randomized algorithm which gives an expected $1/(bc\eta_f)$-approximation to $\max_{S \in \cI} f(S)$, when $f$ is a non-negative submodular function.
				\end{theorem}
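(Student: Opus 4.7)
The plan is to combine three ingredients: (i) an approximate maximizer of the multilinear extension over $\cP_{\cI}$, (ii) the standard scaling/concavity behaviour of a non-negative submodular $F$ along rays from the origin, and (iii) the defining properties (a)--(c) of a $bc$-balanced monotone CR scheme, in order to lower-bound the expected value of the rounded set. Concretely, I would first invoke the hypothesis about $\eta_f$ to compute a point $\vec{x}^* \in \cP_{\cI}$ with $F(\vec{x}^*) \geq \eta_f \cdot \max_{S \in \cI} f(S)$; then form the scaled point $\vec{y} := b\vec{x}^* \in b\cP_{\cI}$, draw the independent random set $R(\vec{y})$ (each $i$ picked independently with probability $y_i = b x^*_i$), and finally output $S := \pi_{\vec{y}}(R(\vec{y}))$, which by property (a) is always in $\cI$.

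The analysis then needs two inequalities. The first is the ``scaling up'' bound $F(b\vec{x}^*) \geq b \cdot F(\vec{x}^*)$, which holds for any non-negative $F$ because the univariate restriction $t \mapsto F(t\vec{x}^*)$ is concave (multilinear extensions are concave along non-negative directions) and vanishes non-negatively at $t=0$; concavity plus $F(\vec{0}) \geq 0$ yields $F(b\vec{x}^*) \geq b F(\vec{x}^*) + (1-b) F(\vec{0}) \geq b F(\vec{x}^*)$. The second, and main, inequality is
\[
\mathbb{E}\bigl[f(\pi_{\vec{y}}(R(\vec{y})))\bigr] \;\geq\; c \cdot F(\vec{y}),
\]
from which chaining gives $\mathbb{E}[f(S)] \geq c \cdot F(b\vec{x}^*) \geq bc \cdot F(\vec{x}^*) \geq bc\,\eta_f \cdot \max_{S \in \cI} f(S)$, which is exactly the claimed $1/(bc\eta_f)$-approximation.

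The hard part is establishing this second inequality, and this is where the monotonicity property (c) of the CR scheme is essential. I would argue it as follows. Define $p_i(\vec{y}) := \Pr[i \in \pi_{\vec{y}}(R(\vec{y})) \mid i \in R(\vec{y})]$; property (b) guarantees $p_i(\vec{y}) \geq c$. Now imagine an equivalent two-stage sampler: first draw $R(\vec{y})$; then, independently for each $i \in R(\vec{y})$, keep $i$ with probability $c/p_i(\vec{y}) \leq 1$ (otherwise discard) before applying the CR scheme. By property (c) (monotonicity: discarding items can only increase the chance any surviving item is kept by $\pi_{\vec{y}}$) this modified procedure still produces a feasible set, and now every $i$ survives with probability exactly $c \cdot y_i$. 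The output set is therefore stochastically dominated (as a subset of $R(\vec{y})$) by a random subset whose marginals are $cy_i$ and which is ``correlation-gap-compatible'' with the multilinear extension: standard correlation-gap / convex-combination arguments for non-negative submodular functions (using that $F$ is a convex combination of $f$-values at random subsets and that scaling probabilities by $c$ multiplies $F$ by at least $c$ along the ray from the origin) then give $\mathbb{E}[f(S)] \geq c \cdot F(\vec{y})$. The most delicate point is handling non-monotone $f$, where one must use the full ray-concavity argument rather than pointwise monotonicity; I would isolate this in a short lemma about $F$'s behaviour under independent coin-flip subsampling, and then the theorem follows by combining the three bounds above.
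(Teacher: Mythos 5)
This theorem is quoted verbatim from Chekuri, Vondr\'ak, and Zenklusen (\cite{CVZ14}, Theorem 1.5) and is not reproved in the present paper, so there is no in-paper proof to compare against; what you are reconstructing is the CVZ argument. Your high-level pipeline is exactly the standard one and is correct: obtain $\vec{x}^* \in \cP_{\cI}$ with $F(\vec{x}^*) \geq \eta_f \cdot \max_{S \in \cI} f(S)$, scale to $\vec{y} := b\vec{x}^*$, sample $R(\vec{y})$ independently, output $S := \pi_{\vec{y}}(R(\vec{y}))$, and chain $\bE[f(S)] \geq c\,F(\vec{y}) \geq bc\,F(\vec{x}^*) \geq bc\,\eta_f \cdot \mathrm{OPT}$. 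Your concavity-along-rays derivation of $F(b\vec{x}^*) \geq b\,F(\vec{x}^*)$ is also correct.

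The genuine gap is in how you argue the key inequality $\bE[f(\pi_{\vec{y}}(R(\vec{y})))] \geq c\,F(\vec{y})$. First, your attenuation goes the wrong way: you thin $R(\vec{y})$ \emph{before} feeding it to the CR scheme. This changes the set the scheme acts on, and by monotonicity (property (c)) the per-element acceptance probabilities then shift upward by element-dependent, distribution-dependent amounts. Concretely, with $p_i := \Pr[i \in \pi_{\vec{y}}(R(\vec{y})) \mid i \in R(\vec{y})]$, thinning each $i$ with probability $c/p_i$ and then applying $\pi_{\vec{y}}$ gives $\Pr[i \in \text{output}] = y_i (c/p_i) p_i'$, where $p_i' \geq p_i$ is the acceptance probability on the thinned set — so you get only a lower bound $\geq c y_i$, not the exact calibration your argument requires. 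The CVZ reduction attenuates \emph{after} the scheme (thin $\pi_{\vec{y}}(R)$ with independent coins of bias $c/p_i$), which preserves monotonicity and yields an exactly $c$-balanced scheme. Second, and more fundamentally, even with the correct attenuation the output set is \emph{not} an independent coin-flip subsample of $R(\vec{y})$: the CR scheme's keep/discard decisions are jointly correlated and depend on the realized $R$. So the ``ray-concavity / correlation-gap'' fact about $F$ under independent subsampling does not apply and cannot by itself yield $\bE[f(S)] \geq c\,\bE[f(R)]$. The CVZ proof of this lemma is instead an element-by-element telescoping argument: fix an ordering, expand $f(S)$ and $f(R)$ as sums of marginal gains, use $S_{<i} \subseteq R_{<i}$ with submodularity to compare marginals, and use monotonicity of $\pi$ together with property (b) to extract the factor $c$, with extra care in the non-monotone case where marginals may be negative. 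Your instinct that non-monotone $f$ is the delicate part is right, but the lemma you propose to isolate (behaviour of $F$ under independent thinning) is not the lemma that is actually needed; the needed lemma is about the CR scheme's correlated output.
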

				
				For the case of monotone sub-modular functions, we have the optimal result $\eta_f = 1-1/e$ (Vondr\'ak~\cite{Vo08}). For non-monotone sub-modular functions, the best-known algorithms have $\eta_f \geq 0.372$ due to Ene and Nguyen~\cite{EN16} and more recently $\eta_f \geq 0.385$ due to Buchbinder and Feldman~\cite{BF16} (it is not known if these are tight: the best-known upper bound is $\eta_f \leq 0.478$ due to Oveis Gharan and Vondr\'ak~\cite{GV11}). 

				We will show that Algorithm~\ref{alg:kcspip} is a $1/(2k + o(k))$-balanced monotone CR scheme, for some $b, c$ such that $bc = 1/(2k + o(k))$. Hence, from \cref{prelim:submodular} we have a $(2k + o(k))/\eta_f$-approximation algorithm for \kcspip with sub-modular objectives. This yields Corollary~\ref{cor:kcspip}. 

				\kaedit{\kcspipcor*}
				
				For ease of reading, we will first restate notations used in Definition~\ref{def:bcCR} in the form stated in the previous sub-section. The polytope $\cP_{\cI}$ represents the \kcspip polytope defined by Eq.~\refeq{eq:addcons}. The vector $\vec{y}$ is defined as $y_i:= \alpha x_i/k$ which is used in the sampling step of the algorithm (\ie Step 1). The scheme $\pi_{\vec{y}}$ is the procedure defined by steps \ref{ln:smapling}, \ref{ln:discard1}, \ref{ln:alteration} of the algorithm. In other words, this procedure takes \kaedit{any} subset $A$ of items and returns a \emph{feasible} solution with probability 1 (and hence satisfying property (a) in the definition). Our goal then is to show that it further satisfies properties (b) and (c). 
				
				The set $R(\vec{y})$ corresponds to the set $\cR_0$, where every item $i$ is included into $\cR_0$ with probability $y_i$, independently. From the sampling step of the algorithm, we have that $b=\alpha/k$, since each item $i$ is included in the set $R(\vec{y})$ with probability $y_i:= x_i \alpha /k$ and $\vec{x} \in \cP_{\cI}$ and hence $\vec{y} \in (\alpha/k) \cP_{\cI}$. From the alteration steps we have that $c = (1-o(1))/(2\alpha + o(\alpha))$, since for any item $i$, we have $\Pr[i \in \cR_{F}\given{i \in \cR_0}] \geq  (1-o(1))/(2\alpha + o(\alpha))$. Thus, $\pi_{\vec{y}}$ satisfies property (b). 
				
				Now we will show that the rounding scheme $\pi_{\vec{y}}$ satisfies property (c) in Definition~\ref{def:bcCR}. Let $A_1$ and $A_2$ be two arbitrary subsets such that $A_1 \subseteq A_2$. Consider a $j \in A_1$. We will now prove the following.
					\begin{equation}
						\label{eq:propcsubmodular}	
						\Pr[j \in \RF \given{\cR_0 = A_1}] \geq \Pr[j \in \RF \given{\cR_0=A_2}].
					\end{equation}
					Note that for $i \in \{1, 2\}$ we have,
				\begin{equation}
				\label{eq:submodular}
				 \Pr[j \in \RF \given{\cR_0 = A_i}] = \Pr[j \in \RF \given{j \in \cR_2, \cR_0 = A_i}] \Pr[j \in \cR_2 \given{\cR_0 = A_i}].
				 \end{equation}
				  For both $i=1$ and $i=2$, the first term in the RHS of Eq.~\eqref{eq:submodular} (\ie $\Pr[j \in \RF \given{j \in \cR_2, \cR_0 = A_i}]$) is same and is equal to $1/(2d + 1)$. Note that the second term in the RHS of Eq.~\eqref{eq:submodular} (\ie $\Pr[j \in \cR_2 \given{\cR_0 = A_i}]$) can be rewritten as $\Pr[j \in \cR_2 \given{j \in \cR_0, \cR_0 = A_i}]$ since $j \in A_1$ and $\cR_0 = A_i$ for $i \in \{1, 2\}$. From steps \ref{ln:discard1} and \ref{ln:alteration}\ref{ln:discard2} of the algorithm we have that the event $j \in \cR_2$ conditioned on $j \in \cR_0$ occurs if and only if: 
				  \begin{enumerate}[label=(\roman*)]
				  	\item no medium or tiny blocking events occurred for $j$.
				  	\item vertex $j$ did not correspond to an anomalous vertex in $G$ (one with out-degree greater than $d:=\alpha + \sqrt{\alpha \log \alpha}$).
				  \end{enumerate}
				 Both (i) and (ii) are monotonically decreasing in the set $\cR_0$. \kaedit{In other words, if both the conditions satisfy for an item $j$ when $\cR_0=A_2$, then it also hold when $\cR_0=A_1$. This implies that $\Pr[j \in \cR_2 \given{\cR_0 = A_1}] \geq \Pr[j \in \cR_2 \given{\cR_0 = A_2}]$. Thus combining this with Equation~\eqref{eq:submodular} we obtain Equation~\eqref{eq:propcsubmodular}.}

\section{The Stochastic $k$-Set Packing Problem}
\label{sec:sksp}
Consider the stochastic $k$-set packing problem defined in the introduction. \kaedit{Recall that in this problem, the columns of the packing constraints (\ie size of items) are realized by a stochastic process on probing, and the goal is to choose the optimal order in which the columns (\ie items) have to probed.} We start with a second-chance-based algorithm yielding an improved ratio of $8k/5 + o(k)$. We then improve this to $k + o(k)$ via multiple chances. Recall that if we probe an item $j$, we have to add it irrevocably, as is standard in such stochastic-optimization problems; thus, we do not get multiple opportunities to examine $j$. Let $\vec{x}$ be an optimal solution to the benchmark \LP (\ref{eqn:sksp-lp}) and $\cC(j)$ be the set of constraints that $j$ participates in. 

Bansal \etal~\cite{BGLMNR12} presented Algorithm~\ref{alg:basic-k-set}, $\func{SKSP}(\alpha)$. They show that $\func{SKSP}(\alpha)$ will add each item $j$ with probability at least $\beta \Paren{x_j/k}$ where $\beta \geq \alpha(1-\alpha/2)$. By choosing $\alpha=1$, $\func{SKSP}(\alpha)$ yields a ratio of $2k$.\footnote{The terminology used in~\cite{BGLMNR12} is actually $1/\alpha$; however, we find the inverted notation $\alpha$ more natural for our calculations.}

\IncMargin{1em}
\begin{algorithm}[!h]
	\caption{$\func{SKSP}(\alpha)$~\cite{BGLMNR12}}
	\label{alg:basic-k-set}
	\DontPrintSemicolon
	Let $\cR$ denote the set of chosen items which starts out as an empty set. \;\label{ln:sksp1:init}
	For each $j \in [n]$, generate an independent Bernoulli random variable $Y_{j}$ with mean $\alpha x_j/k$. \;\label{ln:sksp1:sampling}
	Choose a uniformly random permutation $\pi$ over $[n]$ and follow $\pi$ to check each item $j$ one-by-one: add $j$ to $\cR$ if and only if $Y_{j}=1$ and $j$ is safe (i.e., each resource $i \in \cC(j)$ has at least one unit of budget available); otherwise skip $j$. \; \label{ln:sksp1:permutation}
	Return \cR as the set of chosen items.\; \label{ln:sksp1:return}
\end{algorithm}
\DecMargin{1em}

At a high level, our second-chance-based algorithm proceeds as follows with parameters $\{\alpha_1, \beta_1, \alpha_2\}$ to be chosen later. During the first chance, we set $\alpha = \alpha_1$ and run $\func{SKSP}(\alpha_1)$. Let $\ad_{1,j}$ denote the event that $j$ is added to $\cR$ in this first chance. From the analysis in~\cite{BGLMNR12}, we have that $\Pr[\ad_{1,j}] \ge \Paren{x_j/k}\alpha_1(1-\alpha_1/2)$. By applying simulation-based attenuation techniques, we can ensure that each item $j$ is added to \cR in the first chance with probability \emph{exactly equal} to $\beta_1x_j/k$ for a certain $\beta_1 \le \alpha_1(1-\alpha_1/2)$ of our choice.\footnote{See footnote \talgedit{10} in Section~\ref{sec:maintech}. Sampling introduces some small sampling error, but this can be made into a lower-order term with high probability. We thus assume for simplicity that such simulation-based approaches give us exact results.} In other words, suppose we obtain an estimate $\hat{E}_{1,j}:= \Pr[\ad_{1,j}]$. When running the original randomized algorithm, whenever $j$ can be added to $\cR$ in the first chance, instead of adding it with probability 1, we add it with probability $(\Paren{x_j/k}\beta_1)/\hat{E}_{1,j}$. 

In the second chance, we set $\alpha = \alpha_2$ and modify $\func{SKSP}(\alpha_2)$ as follows. We generate an independent Bernoulli random variable $Y_{2,j}$ with mean $\alpha_2 x_j/k$ for each $j$; let $Y_{1,j}$ denote the Bernoulli random variable from the first chance. Proceeding in a uniformly random order $\pi_2$, we add $j$ to \cR if and only if $j$ is safe, $Y_{1,j}=0$ and $Y_{2,j}=1$. Algorithm~\ref{alg:improved-k-set}, $\func{SKSP}(\alpha_1, \beta_1, \alpha_2)$, gives a formal description.

\IncMargin{1em}
\begin{algorithm}[!h]
	\caption{$\func{SKSP}(\alpha_1, \beta_1, \alpha_2)$}
	\label{alg:improved-k-set}
	\DontPrintSemicolon
	Initialize \cR as the empty set. \; \label{ln:sksp2:init}
	{\bf The first chance:}  Run $\func{SKSP}(\alpha_1)$ with simulation-based attenuation such that $\Pr[\ad_{1,j}]=\beta_{1}x_j/k$ for each $j \in [n]$, with $\beta_1 \le \alpha_1(1-\alpha_1/2)$. \cR now denotes the set of variables chosen in this chance. \; \label{ln:sksp2:first}
	{\bf The second chance:} Generate an independent Bernoulli random variable $Y_{2,j}$ with mean $\alpha_2 x_j/k$ for each $j$. Follow a uniformly random order $\pi_2$ over $[n]$ to check each item $j$ one-by-one: add $j$ to \cR if and only if $j$ is safe, $Y_{1,j}=0$, and $Y_{2,j}=1$; otherwise, skip it.\; \label{ln:sksp2:second}
	Return \cR as the set of chosen items.\; \label{ln:sksp2:return}
\end{algorithm}
\DecMargin{1em}

Lemma~\ref{lem:sksp-2} lower bounds the probability that an item gets added in the second chance. For each $j$, let $\ad_{2,j}$ be the event that $j$ is added to \cR in the second chance.

\begin{lemma}
	\label{lem:sksp-2}
	After running $\func{SKSP}(\alpha_1, \beta_1, \alpha_2)$ on an optimal solution $\vec{x}$ to the benchmark \LP (\ref{eqn:sksp-lp}), we have
	\[
	\textstyle \Pr[\ad_{2,j}] \ge \frac{x_j}{k} \alpha_2 \Big( 1-\frac{\alpha_1 x_j}{k}-\beta_1-\frac{\alpha_2}{2} \Big).
	\]
\end{lemma}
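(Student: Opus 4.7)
}

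The plan is to decompose the event $\ad_{2,j}$ into its three defining sub-events and bound them separately. By the description of the second chance in step \ref{ln:sksp2:second} of Algorithm~\ref{alg:improved-k-set}, $\ad_{2,j}$ occurs iff $Y_{2,j}=1$, $Y_{1,j}=0$, and $j$ is safe at the moment it is inspected in the second-chance pass. Since $Y_{2,j}$ is an independent Bernoulli generated fresh for the second chance and affects no decision other than whether to add $j$, it is independent of both $Y_{1,j}=0$ and the safety event. Hence
\[
\Pr[\ad_{2,j}] \;=\; \frac{\alpha_2 x_j}{k}\cdot \Pr\bigl[Y_{1,j}=0 \,\wedge\, j\text{ safe in 2nd chance}\bigr],
\]
and by the union bound it suffices to show
\[
\Pr\bigl[Y_{1,j}=1\bigr] + \Pr\bigl[j\text{ not safe in 2nd chance}\bigr] \;\le\; \frac{\alpha_1 x_j}{k} + \beta_1 + \frac{\alpha_2}{2}.
\]
The first term is exactly $\alpha_1 x_j/k$ by construction of $Y_{1,j}$. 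What remains is to prove $\Pr[j \text{ not safe in 2nd chance}]\le \beta_1 + \alpha_2/2$.

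For that, I would union-bound over the (at most $k$) constraints $i\in\cC(j)$: $j$ is not safe iff some $i\in\cC(j)$ already has enough usage that a unit added for $j$ would violate capacity $b_i$. Since sizes are in $\{0,1\}$, the ``blocking'' event for constraint $i$ means the current usage $U_i$ is at least $b_i$, and by Markov $\Pr[U_i\ge b_i]\le \mathbb{E}[U_i]/b_i$. For each competing item $j'\neq j$ participating in $i$, the expected contribution $\mathbb{E}[\SI_{i,j'}\cdot \mathbb{1}(j'\text{ added before }j)]$ splits into a first-chance part of at most $u_{i,j'}\cdot \beta_1 x_{j'}/k$ (using the attenuation guarantee of step \ref{ln:sksp2:first}) and a second-chance part of at most $u_{i,j'}\cdot(\alpha_2 x_{j'}/k)\cdot(1/2)$ (using independence of $Y_{2,j'}$ from the uniformly random order $\pi_2$, which puts $j'$ before $j$ with probability exactly $1/2$, and dropping the further restriction $Y_{1,j'}=0$ for an upper bound).

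Summing and plugging into Markov yields
\[
\Pr[j\text{ not safe}]\;\le\;\sum_{i\in\cC(j)}\frac{1}{b_i}\sum_{j'\neq j} u_{i,j'}\Bigl(\beta_1 + \frac{\alpha_2}{2}\Bigr)\frac{x_{j'}}{k}
\;\le\;\frac{\beta_1+\alpha_2/2}{k}\sum_{i\in\cC(j)}1
\;\le\;\beta_1+\frac{\alpha_2}{2},
\]
where the middle step uses the LP feasibility $\sum_{j'}u_{i,j'}x_{j'}\le b_i$ from~\eqref{eqn:sksp-lp} and the last step uses $|\cC(j)|\le k$. Combining the pieces gives the claim. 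The only delicate point is the accounting of $\SI_{i,j'}$: one must carefully observe that $\SI_{i,j'}$ is independent of all of the algorithm's coins (the $Y$'s and $\pi_2$), so that $\mathbb{E}[\SI_{i,j'}\cdot \mathbb{1}(\cdot)]$ factors as $u_{i,j'}$ times the addition probability; the rest is routine.
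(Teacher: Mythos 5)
Your proof is correct and essentially follows the same route as the paper's: factor out the Bernoulli(s), union-bound over the $\le k$ constraints in $\cC(j)$, bound the expected usage $U_{2,i}$ by decomposing into first-chance and second-chance contributions, and apply Markov together with LP feasibility. The only cosmetic difference is that you factor out only $Y_{2,j}$ and union-bound $\Pr[Y_{1,j}=1]$ directly, whereas the paper factors out $\Pr[Y_{1,j}=0\wedge Y_{2,j}=1]$ first and then uses $(1-a)(1-b)\ge 1-a-b$; both give the same bound.
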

\begin{proof}
Let us fix $j$. Note that ``$Y_{1,j}=0$ and $Y_{2,j}=1$" occurs with probability $(1-\alpha_1 x_j/k)\Paren{\alpha_2 x_j/k}$. Consider a given $i \in \cC(j)$ and let $U_{2,i}$ be the budget usage of resource $i$ when the algorithm reaches $j$ in the random permutation of the second chance. 
	\begin{align}
	\Pr[\ad_{2,j}] & =\Pr[Y_{1,j}=0 \wedge Y_{2,j}=1]\Pr[\ad_{2,j} \given{Y_{1,j}=0 \wedge Y_{2,j}=1}]\\
		& =\Paren{1-\frac{\alpha_1 x_j}{k}}\frac{\alpha_2 x_j}{k} \Pr\Bracket{\bigwedge_{i}(U_{2,i} \le b_i-1)\given{Y_{1,j}=0 \wedge Y_{2,j}=1}}\\
		& \ge \Paren{1-\frac{\alpha_1 x_j}{k}}\frac{\alpha_2 x_j}{k} \left(1-\sum_i \Pr\Bracket{U_{2,i} \ge b_i\given{Y_{1,j}=0}}  \right) \\
	& \label{ineq:k-set-two} \ge\frac{\alpha_2 x_j}{k}  \left(1-\frac{\alpha_1 x_j}{k}-\sum_i \Pr\Bracket{U_{2,i} \ge b_i}  \right). 
	\end{align}
	Let $X_{1,\ell}$ be the indicator random variable showing if $\ell$ is added to \cR in the first chance and $\indicator{2,\ell}$ indicate if item $\ell$ falls before $j$ in the random order $\pi_2$. Thus we have,
	\[
	\textstyle U_{2,i} \le \sum_{\ell \neq j} \SI_{i,\ell} \Paren{X_{1,\ell}+(1-Y_{1,\ell})Y_{2,\ell}\indicator{2,\ell}}.
	\]
	which implies
	\begin{equation*}
		\bE[U_{2,i}] \le \sum_{\ell \neq j} u_{i,\ell}\Paren{ \frac{\beta_1 x_\ell}{k}+\frac{1}{2} \Paren{ 1-\frac{\alpha_1 x_\ell}{k} } \frac{\alpha_2 x_\ell}{k} } \le \Paren{ \frac{\beta_1}{k}+\frac{1}{2}\frac{\alpha_2}{k} }b_i.
	\end{equation*}
	Plugging the above inequality into \eqref{ineq:k-set-two} and applying Markov's inequality, we complete the proof of \reflemma{lem:sksp-2}.	
\end{proof}

We can use Lemma~\ref{lem:sksp-2} to show that we get an approximation ratio of $8k/5$. Observe that the events $\ad_{1,j}$ and $\ad_{2,j}$ are mutually exclusive. Let $\ad_j$ be the event that $j$ has been added to \cR after the two chances. Then, by choosing $\alpha_1=1$, $\beta_1=1/2$, and $\alpha_2=1/2$, we have $\Pr[\ad_j]=\Pr[\ad_{1,j}]+\Pr[\ad_{2,j}] = \frac{5 x_j}{8k} - \frac{x_j^2}{2k^2}$. From the Linearity of Expectation, we get that the total expected weight of the solution is at least $(5/8k - o(k))\vec{w}.\vec{x}$.
\begin{theorem}
	\label{thm:sksp2}
	By choosing $\alpha_1=1$, $\beta_1=1/2$, and $\alpha_2=1/2$, $\func{SKSP}(\alpha_1, \beta_1, \alpha_2)$ achieves a ratio of $\frac{8k}{5}+o(k)$ for $\sksp$.
\end{theorem}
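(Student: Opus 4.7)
\textbf{Proof proposal for Theorem~\ref{thm:sksp2}.} The plan is to bundle together the first-chance guarantee (built into the attenuation of Step~\ref{ln:sksp2:first}) and the second-chance lower bound from Lemma~\ref{lem:sksp-2}, then average over items via linearity of expectation. First I would note that for any item $j$, the events $\ad_{1,j}$ and $\ad_{2,j}$ are mutually exclusive, since an item is added in the second chance only when $Y_{1,j}=0$ (so in particular $j \notin \cR$ after the first chance). Thus $\Pr[\ad_j] = \Pr[\ad_{1,j}] + \Pr[\ad_{2,j}]$, and it suffices to lower-bound each term separately with the chosen parameters $\alpha_1=1$, $\beta_1=1/2$, $\alpha_2=1/2$.

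For the first chance, the attenuation in Step~\ref{ln:sksp2:first} is set up precisely so that $\Pr[\ad_{1,j}] = \beta_1 x_j/k = x_j/(2k)$; I must only check that $\beta_1 = 1/2 \le \alpha_1(1-\alpha_1/2) = 1/2$, which holds with equality and hence is consistent with the algorithm specification. For the second chance, Lemma~\ref{lem:sksp-2} yields
\[
\Pr[\ad_{2,j}] \;\ge\; \frac{x_j}{k}\cdot \frac{1}{2}\Big(1 - \frac{x_j}{k} - \frac{1}{2} - \frac{1}{4}\Big) \;=\; \frac{x_j}{8k} - \frac{x_j^2}{2k^2}.
\]
Adding the two contributions gives $\Pr[\ad_j] \ge \frac{5 x_j}{8k} - \frac{x_j^2}{2k^2}$, as asserted in the paragraph preceding the theorem.

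Finally, since $x_j \in [0,1]$, the second (negative) term is bounded by $\frac{x_j}{2k^2}$, which is a lower-order correction relative to the main $\frac{5 x_j}{8k}$ term as $k\to\infty$. By linearity of expectation the expected total weight of $\cR$ is at least
\[
\sum_{j} w_j \Pr[\ad_j] \;\ge\; \Big(\tfrac{5}{8k} - \tfrac{1}{2k^2}\Big)\sum_j w_j x_j \;=\; \Big(\tfrac{5}{8k} - o(1/k)\Big)\,\vec{w}\cdot\vec{x},
\]
and since $\vec{w}\cdot\vec{x}$ upper-bounds $\OPT$, inverting gives approximation ratio $\bigl(\tfrac{5}{8k} - o(1/k)\bigr)^{-1} = \tfrac{8k}{5} + o(k)$. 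The only nontrivial step is the second-chance bound, which has already been handled in Lemma~\ref{lem:sksp-2}; everything else is arithmetic and the observation that $\ad_{1,j}$ and $\ad_{2,j}$ are disjoint. There is also a minor bookkeeping item—the attenuation in Step~\ref{ln:sksp2:first} only produces the target probability up to a $(1\pm\mathrm{poly}(1/n))$ factor via Monte Carlo sampling—but as noted in the footnote to Section~\ref{sec:maintech} this contributes only to the hidden $o(k)$ term.
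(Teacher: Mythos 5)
Your proof is correct and follows essentially the same route as the paper: mutual exclusivity of $\ad_{1,j}$ and $\ad_{2,j}$, plugging the stated parameters into Lemma~\ref{lem:sksp-2}, and then linearity of expectation. The only cosmetic difference is that the paper presents the general parametric lower bound and the optimization problem~\eqref{eqn:sksp-2} to explain how $\alpha_1=1,\ \beta_1=1/2,\ \alpha_2=1/2$ were chosen, whereas you substitute these values directly—which is all the theorem statement actually requires.
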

	\begin{proof}
		Consider a given item $j$. We have,
	\begin{align*}
	\textstyle \Pr[\ad_j] &\textstyle =\Pr[\ad_{1,j}]+\Pr[\ad_{2,j}] \\
	\textstyle & \textstyle \ge \frac{x_j}{k} \left( \beta_1+ \alpha_2 \Paren{ 1-\frac{\alpha_1 x_j}{k}-\beta_1-\frac{\alpha_2}{2} } \right)\\
	\textstyle & \textstyle \kaedit{\ge} \frac{x_j}{k} \left( \beta_1+ \alpha_2 \Paren{ 1-\beta_1-\frac{\alpha_2}{2} }-O(1/k) \right).
	\end{align*}
	To obtain the worst case, we solve the following optimization problem.
	\begin{equation}\label{eqn:sksp-2}
	 \max~\beta_1+ \alpha_2 \Paren{ 1-\beta_1-\frac{\alpha_2}{2} }, ~\mbox{s.t.}~0 \le \beta_1 \le \alpha_1(1-\alpha_1/2), \alpha_1 \ge 0, \alpha_2 \ge 0.
	\end{equation}
	Solving the above program, the optimal solution is $\alpha_1=1, \beta_1=1/2$ and $\alpha_2=1/2$ with a ratio of $(\frac{8}{5}+o(1))k$.
	\end{proof}

\subsection{Extension to $T$ Chances}
Intuitively, we can further improve the ratio by performing a third-chance probing and beyond. We present a natural generalization of $\talgedit{\func{SKSP}}(\alpha_1, \beta_1, \alpha_2)$ to $\talgedit{\func{SKSP}}(\{\alpha_t, \beta_t\given{t \in [T]}\})$ with $T$ chances, where $\{\alpha_t, \beta_t\given{t \in [T]}\}$ are parameters to be fixed later. Note that $\talgedit{\func{SKSP}}(\alpha_1, \beta_1, \alpha_2)$ is the special case wherein $T=2$. 

During each chance $t \le T$, we generate an independent \emph{Bernoulli} random variable $Y_{t,j}$ with mean $\alpha_t x_j/k$ for each $j$.  Then we follow a uniform random order $\pi_t$ over $[n]$ to check each item $j$ one-by-one: we add $j$ to \cR if and only if $j$ is safe, $Y_{t',j}=0$ for all $t'<t$ and $Y_{t,j}=1$; otherwise we skip it. Suppose for a chance $t$, we have that each $j$ is added to \cR with probability at least $\beta_t x_j/k$. As before, we can apply simulation-based attenuation to ensure that each $j$ is added to \cR in chance $t$ with probability exactly equal to $\beta_t x_j/k$. To achieve this goal we need to simulate our algorithm over all previous chances up to the current one $t$. Algorithm~\ref{alg:sksp-T}, $\func{SKSP}(\{\alpha_t, \beta_t| t \in [T]\})$, gives a formal description of the algorithm. Notice that during the last chance $T$, we do not need to perform simulation-based attenuation. For the sake of uniformity in presentation, we still describe it in the algorithm description.

\IncMargin{1em}
\begin{algorithm}[!h] 
	\caption{$\func{SKSP}(\{\alpha_t, \beta_t| t \in [T]\})$}
	\label{alg:sksp-T}
	\DontPrintSemicolon
	Initialize \cR as the empty set. \; \label{ln:sksp3:init}
	\For{t=1, 2, \ldots, T}{
	\talgedit{	Generate an independent Bernoulli random variable $Y_{t,j}$ with mean $\alpha_t x_j/k$ for each $j$. \; \label{ln:sksp3:permutation}
		Apply simulation-based attenuation such that for each $j$ is added to \cR in the $t^{th}$ chance (denote by an indicator random variable $Z_{t, j}$) with probability equal to $\beta_t x_j/k$.\; \label{ln:sksp3:attenuation}
	Follow a uniform random order $\pi_t$ over $[n]$ to check each item $j$ one-by-one: add $j$ to \cR if and only if $j$ is safe, $Z_{t',j}=0$ for all $t'<t$, and $Z_{t,j}=1$; otherwise, skip it.\; }
	}
	Return \cR as the set of chosen items.\; \label{ln:sksp3:return}
\end{algorithm}
\DecMargin{1em}


For each item $j$, let $\ad'_{t,j}$ be the event that $j$ is added to \cR in the $t^{th}$ chance before step \ref{ln:sksp3:permutation} of the algorithm for $t$ (\ie before the start of the \talgedit{$(t+1)^{th}$} iteration of the loop). Lemma~\ref{lem:sksp-T} lower bounds the probabilities of these events.

\begin{lemma}
\label{lem:sksp-T}
After running $\talgedit{\func{SKSP}}(\{\alpha_t, \beta_t\given{t \in [T]}\})$ on an optimal solution $\vec{x}$ to the benchmark \LP (\ref{eqn:sksp-lp}), we have
	\[
	\textstyle \Pr[\ad'_{t,j}]  \ge  \frac{x_j}{k}\left(\alpha_t \Big(1 -\sum_{t'<t} \beta_{t'}-\frac{\alpha_t}{2}\Big)-\frac{\alpha_t\sum_{t'<t} \alpha_{t'}}{k}\right).
	\]
\end{lemma}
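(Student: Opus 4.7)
The plan is to mimic the proof of Lemma~\ref{lem:sksp-2}, carrying the bookkeeping through a general chance $t$. I decompose
\[
\Pr[\ad'_{t,j}]\;=\;\Pr\big[Y_{t',j}=0\ \forall\, t'<t,\ Y_{t,j}=1\big]\;\cdot\;\Pr\big[j\text{ is safe when reached in }\pi_t\,\big|\, Y_{t',j}=0\ \forall\, t'<t,\ Y_{t,j}=1\big],
\]
and lower-bound the two factors separately.

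For the first factor, the $Y$-coins for item $j$ are independent Bernoullis with means $\alpha_{t'}x_j/k$, so Weierstrass' inequality gives $\prod_{t'<t}(1-\alpha_{t'}x_j/k)\ge 1-x_j\sum_{t'<t}\alpha_{t'}/k$. Multiplying by $\Pr[Y_{t,j}=1]=\alpha_t x_j/k$ and using $x_j\le 1$ to drop one copy of $x_j$ in the correction term (so the expression matches the target form) yields
\[
\Pr\big[Y_{t',j}=0\ \forall\, t'<t,\ Y_{t,j}=1\big]\;\ge\;\tfrac{\alpha_t x_j}{k}-\tfrac{\alpha_t x_j \sum_{t'<t}\alpha_{t'}}{k^2}.
\]

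For the safety factor, let $U_{t,i}$ be the usage of resource $i\in\cC(j)$ at the moment $j$ is examined in chance $t$. Items $\ell\neq j$ can contribute only if either they were actually added in some earlier chance $t'<t$ (call this indicator $Z_{t',\ell}$, which by the simulation-based attenuation satisfies $\bE[Z_{t',\ell}]=\beta_{t'}x_\ell/k$), or they are added in the current chance $t$ before $j$, whose indicator is bounded above by $Y_{t,\ell}\cdot\mathbf{1}[\ell\text{ precedes }j\text{ in }\pi_t]$ with expectation at most $\tfrac{1}{2}\cdot\tfrac{\alpha_t x_\ell}{k}$. Hence
\[
U_{t,i}\;\le\;\sum_{\ell\neq j}\SI_{i,\ell}\Big(\sum_{t'<t}Z_{t',\ell}+Y_{t,\ell}\cdot\mathbf{1}[\ell\prec_{\pi_t} j]\Big).
\]
Since the randomness for item $j$ is independent of that for $\ell\neq j$, conditioning on $Y_{t',j},Y_{t,j}$ does not change the right-hand expectation, and using the LP constraint $\sum_\ell u_{i,\ell}x_\ell\le b_i$ I obtain
\[
\bE\big[U_{t,i}\,\big|\, Y_{t',j}=0\ \forall\, t'<t,\ Y_{t,j}=1\big]\;\le\;\sum_{\ell\neq j}u_{i,\ell}\Big(\sum_{t'<t}\tfrac{\beta_{t'}x_\ell}{k}+\tfrac{\alpha_t x_\ell}{2k}\Big)\;\le\;\tfrac{b_i}{k}\Big(\sum_{t'<t}\beta_{t'}+\tfrac{\alpha_t}{2}\Big).
\]
Markov's inequality $\Pr[U_{t,i}\ge b_i]\le \bE[U_{t,i}]/b_i$ followed by a union bound over the $|\cC(j)|\le k$ constraints $j$ appears in gives
\[
\Pr\big[j\text{ is safe}\,\big|\, Y_{t',j}=0\ \forall\, t'<t,\ Y_{t,j}=1\big]\;\ge\;1-\sum_{t'<t}\beta_{t'}-\tfrac{\alpha_t}{2}.
\]

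Finally, multiplying the two lower bounds produces $(A-B)C$ with $A=\tfrac{\alpha_t x_j}{k}$, $B=\tfrac{\alpha_t x_j\sum_{t'<t}\alpha_{t'}}{k^2}$, and $C=1-\sum_{t'<t}\beta_{t'}-\tfrac{\alpha_t}{2}\in[0,1]$. Since $B\ge 0$ and $C\le 1$ imply $BC\le B$, we have $(A-B)C\ge AC-B$, which rearranges exactly to the claim of Lemma~\ref{lem:sksp-T}. The main obstacle is the safety step: one must carefully combine the already-attenuated indicators $Z_{t',\ell}$ (earlier chances, whose marginals are exactly $\beta_{t'}x_\ell/k$ by construction) with the raw Bernoullis $Y_{t,\ell}$ for the current chance (before safety and attenuation), and verify that conditioning on $j$'s own coins does not inflate the contribution from items $\ell\neq j$; once this bookkeeping is set up, the derivation is a routine generalization of the $T=2$ argument.
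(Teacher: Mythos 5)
Your proof mirrors the paper's argument step for step: the same decomposition of $\Pr[\ad'_{t,j}]$ into a coin-flip factor and a conditional safety factor, the same upper bound on $\bE[U_{t,i}]$ by splitting into earlier-chance contributions (the attenuated indicators with marginal $\beta_{t'}x_\ell/k$) and a current-chance contribution with the $\tfrac12$ random-order factor, the same Markov-plus-union-bound over the $\le k$ constraints $j$ appears in, and the same Weierstrass/$x_j\le 1$ manipulations (you simply rearrange as $(A-B)C\ge AC-B$ where the paper keeps the product form until the end and then expands; the inequalities used are identical).

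One caveat worth flagging, which applies equally to the paper's own proof: your justification that conditioning on $j$'s coins leaves $\bE[Z_{t',\ell}]$ unchanged --- ``the randomness for item $j$ is independent of that for $\ell\neq j$'' --- is not quite airtight. While $j$'s and $\ell$'s coin flips are indeed independent, the event $\{Z_{t',\ell}=1\}$ also incorporates $\ell$'s safety check in chance $t'$, which depends on whether $j$ itself was added in chance $t'$ and consumed capacity. Conditioning on $\bigwedge_{t''<t}\{Y_{t'',j}=0\}$ (so $j$ never takes capacity in chances $<t$) can only increase the conditional expectation of $Z_{t',\ell}$, so the replacement of the conditional expectation by the unconditional $\beta_{t'}x_\ell/k$ is not a priori valid. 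The paper makes the identical silent step (replacing $\Pr[U_{t,i}\ge b_i\mid\bigwedge_{t'<t}Y_{t',j}=0]$ by $\Pr[U_{t,i}\ge b_i]$ without comment), so this is a shared informality rather than a deviation of your proof from theirs; the effect is a genuinely lower-order correction, but neither treatment bounds it explicitly.
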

\begin{proof}
Consider an item $j$ and define \kaedit{indicator} random variable $Z_{t, j}=1$ iff $Y_{t',j}=0$ for all $t'<t$ and $Y_{t, j}=1$. Observe that $\bE[Z_{t, j}] =\frac{\alpha_t x_j}{k}\prod_{t'<t} \Paren{1-\frac{\alpha_{t'} x_j}{k}}$. Consider a given $i \in \cC(j)$ and let $U_{t,j}$ be the budget usage of resource $i$ when the algorithm reaches $j$ in the random permutation during chance $t$. Thus we have,
	\begin{align*}
	&\textstyle \Pr[\ad'_{t,j}] =\Pr[Z_{t,j}=1]\Pr[\ad'_{j,t}\given{Z_{t,j}=1}]\\
	&\textstyle =\frac{\alpha_t x_j}{k}\prod_{t'<t} \Paren{1-\frac{\alpha_{t'} x_j}{k}} \\
	& \textstyle \hspace{10mm} \Pr\Bracket{\bigwedge_{i}(U_{t,i} \le b_i-1)\given{Z_{t,j}=1}} \\
	& \textstyle \ge \frac{\alpha_t x_j}{k}\prod_{t'<t} \Paren{1-\frac{\alpha_{t'} x_j}{k}} \\
	& \textstyle \hspace{10mm}  \left(1-\sum_i \Pr\Bracket{U_{t,i} \ge b_i \given{\bigwedge_{t'<t}Y_{t',j}=0}}  \right)  \\
	& \textstyle \ge\frac{\alpha_t x_j}{k}  \left(\prod_{t'<t} \Paren{1-\frac{\alpha_{t'} x_j}{k}}-\sum_i \Pr\Bracket{U_{t,i} \ge b_i}  \right). \label{ineq:k-set-T}
	\end{align*}
	Notice that 
	\begin{equation*}
	\bE[U_{t,j}] \le  \sum_{\ell \neq j} u_{i,\ell} \left( \frac{ x_\ell}{k} \sum_{t'<t} \beta_{t'}+ \frac{1}{2} \prod_{t'<t}\Paren{1-\frac{\alpha_{t'} x_\ell}{k}} \frac{\alpha_t x_\ell}{k} \right) \le \left(\frac{\sum_{t'<t}\beta_{t'}}{k}+\frac{1}{2}\frac{\alpha_t}{k}\right)b_i.
	\end{equation*}
	By applying Markov's inequality, we get
	\begin{align*}
	\Pr[\ad'_{t,j}]  & \textstyle \ge \frac{x_j}{k}\alpha_t\left(\prod_{t'<t} \Paren{1-\frac{\alpha_{t'} x_j}{k}} -\Paren{\sum_{t'<t} \beta_{t'}+\frac{\alpha_t}{2}} \right)\\
	& \ge  \frac{x_j}{k}\left(\alpha_t \Paren{1 -\sum_{t'<t} \beta_{t'}-\frac{\alpha_t}{2}}-\frac{\alpha_t\sum_{t'<t} \alpha_{t'}}{k}\right). \qedhere
	\end{align*}	
\end{proof}

Combining Lemma~\ref{lem:sksp-T} and simulation-based attenuation, we have that for any given $\{\alpha_{t'} \given{t' \le t}\}$ and $\{\beta_{t'}\given{t'<t}\}$, each item $j$ is added to \cR in chance $t$ with probability equal to $\beta_t x_j/k$ for any \\
$\beta_t \le \alpha_t \Paren{1 -\sum_{t'<t} \beta_{t'}-\alpha_t/2 }-\alpha_t\sum_{t'<t} \alpha_{t'}/k $. For each $j$, let $E_{j,t}$ be the event that $j$ is added to \cR in chance $t$ 
and $E_{j}$ the event that $j$ is added to \cR after $T$ chances. From Algorithm~\ref{alg:sksp-T}, we have that the events $\{E_{j,t}\given{t \le T}\}$ are mutually exclusive. Thus, $\Pr[E_j]=\sum_{t \le T}\Pr[E_{j,t}]=\sum_{t \le T} \beta_t x_j/k$. Therefore, to maximize the final ratio, we will solve the following optimization problem.
\begin{equation}\label{eqn:sksp-T}
 \max \sum_{t \in [T]} \beta_t ~\mbox{s.t.}~\beta_t \le \alpha_t \Paren{1 -\sum_{t'<t} \beta_{t'}- \frac{\alpha_t}{2}} -\frac{\alpha_t\sum_{t'<t} \alpha_{t'}}{k} ~ \forall t \in [T], ~\alpha_t \ge 0 ~\forall t \in [T].
\end{equation}
Consider a simplified version of the maximization program~\eqref{eqn:sksp-T} by ignoring the $O(1/k)$ term as follows.
\begin{equation}\label{eqn:sksp-TT}
	\max \sum_{t \in [T]} \beta_t, ~\mbox{s.t.}~\beta_t \le \alpha_t \Paren{1 -\sum_{t'<t} \beta_{t'}-\frac{\alpha_t}{2}} ~\forall t \in [T], ~\alpha_t \ge 0 ~\forall t \in [T]. 
\end{equation}

\begin{lemma}\label{lem:sksp-}
	An optimal solution to the program~\eqref{eqn:sksp-TT} is
	\[
		 \beta^*_t=\frac{1}{2}\left( 1-\sum_{ t'<t} \beta^*_{t'} \right)^2, ~\forall t\ge 1, \alpha^*_t=1-\sum_{ t'<t}\beta^*_{t'},~ \forall t\ge 1.
	\]
	where $\beta^*_0=0$, $0 \le \alpha^*_t \le 1$\s for all $t \ge 1$ and $\lim_{T \rightarrow \infty}\sum_{t=1}^{T} \beta^*_t=1$. 
\end{lemma}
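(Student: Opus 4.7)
\textbf{Proof proposal for \reflemma{lem:sksp-}.} The plan is to introduce the slack variable $s_t := 1 - \sum_{t' < t} \beta_{t'}$, so that the $t$-th constraint becomes $\beta_t \le \alpha_t (s_t - \alpha_t/2)$. Since the right-hand side is concave in $\alpha_t$ and the $\beta_t$'s contribute additively to the objective, a natural strategy is to show that the \emph{greedy} policy -- at each round choose $\alpha_t$ to maximize the upper bound on $\beta_t$ given $s_t$, then saturate the constraint -- is globally optimal. Differentiating $\alpha_t(s_t - \alpha_t/2)$ in $\alpha_t$ gives $\alpha_t^\star = s_t$ and hence $\beta_t^\star = s_t^2/2$, matching the formula in the statement. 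This also yields the recursion $s_{t+1}^\star = s_t^\star - (s_t^\star)^2/2 = g(s_t^\star)$, where $g(s) := s(1 - s/2)$, with initial condition $s_1^\star = 1$.

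Next I would verify the easier range and limit claims. A quick induction on $t$ shows $s_t^\star \in [0, 1]$ (hence $\alpha_t^\star \in [0,1]$): the base case $s_1^\star = 1$ is immediate, and if $s_t^\star \in [0,1]$ then $1 - s_t^\star/2 \in [1/2, 1]$, so $s_{t+1}^\star = s_t^\star(1 - s_t^\star/2) \in [0, s_t^\star] \subseteq [0,1]$. This same computation shows $\{s_t^\star\}$ is monotone non-increasing and bounded below by $0$, hence converges to some $s^\infty \ge 0$ satisfying $s^\infty = g(s^\infty)$, i.e.\ $(s^\infty)^2 / 2 = 0$, forcing $s^\infty = 0$. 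Telescoping gives $\sum_{t=1}^T \beta_t^\star = \sum_{t=1}^T (s_t^\star - s_{t+1}^\star) = 1 - s_{T+1}^\star \to 1$, yielding the last claim.

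The heart of the argument -- and the only step I expect to require real care -- is global optimality of the greedy choice, since a priori one might hope to accept a smaller $\beta_t$ now in order to preserve slack for later rounds. I plan to handle this by a coupling/monotonicity argument. Fix any feasible solution $(\alpha_t, \beta_t)_{t \in [T]}$ and set $\tilde s_t := 1 - \sum_{t'<t} \beta_{t'}$. Since $\beta_t \le \alpha_t (\tilde s_t - \alpha_t/2) \le \max_{\alpha \ge 0} \alpha(\tilde s_t - \alpha/2) = \tilde s_t^2/2$, we get $\tilde s_{t+1} \ge g(\tilde s_t)$. The function $g$ satisfies $g'(s) = 1 - s \ge 0$ on $[0,1]$, so $g$ is monotone non-decreasing there. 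A straightforward induction, using $\tilde s_1 = s_1^\star = 1$, then gives $\tilde s_t \ge s_t^\star$ for every $t \le T+1$; therefore $\sum_{t \in [T]} \beta_t = 1 - \tilde s_{T+1} \le 1 - s_{T+1}^\star = \sum_{t \in [T]} \beta_t^\star$, establishing optimality of the greedy solution. Combined with the previous paragraph, this proves the lemma.
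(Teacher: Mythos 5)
Your proof is correct and complete, but it takes a genuinely different route from the paper. The paper argues by \emph{backward} dynamic programming: with $\{\beta_{t'}\}_{t'<T}$ fixed, the last-chance constraint is saturated at $\alpha_T^* = 1 - \sum_{t'<T}\beta_{t'}$, giving $\beta_T^* = \tfrac12\bigl(1-\sum_{t'<T}\beta_{t'}\bigr)^2$; substituting this back, the full objective collapses to $\tfrac12\bigl(1 + (\sum_{t<T}\beta_t)^2\bigr)$, which is increasing in $\sum_{t<T}\beta_t$, so the $T$-chance maximization reduces to the $(T-1)$-chance maximization and the reduction is iterated. The paper then studies the recursion $\gamma^*_t = \tfrac12(1+(\gamma^*_{t-1})^2)$ for the optimal partial sums to get the limit. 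You instead run a \emph{forward} greedy argument in the slack variable $s_t = 1 - \sum_{t'<t}\beta_{t'}$ and prove global optimality by a coupling/dominance step: any feasible trajectory satisfies $\tilde s_{t+1} \ge g(\tilde s_t)$ with $g(s)=s(1-s/2)$, and $g$ is monotone on $[0,1]$, so $\tilde s_t \ge s_t^*$ inductively and telescoping gives $\sum_t \beta_t \le \sum_t \beta_t^*$. The two recursions are the same after the change of variables $\gamma^*_t = 1 - s^*_{t+1}$, so they prove the identical quantitative claim; what differs is how optimality is established. Your forward coupling makes the ``greedy does not need to save slack for later'' claim explicit, directly defusing the worry you raise, whereas the paper's terse ``repeating the above analysis $T$ times'' packages the same content as a backward reduction. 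One shared (and benign) implicit assumption: both proofs tacitly take $\beta_t \ge 0$ (otherwise program~\eqref{eqn:sksp-TT} is unbounded, since a very negative $\beta_1$ inflates the later slack arbitrarily), which is what guarantees $\tilde s_t \in [0,1]$ so that $g$ is monotone on the relevant range and $\max_{\alpha\ge 0}\alpha(\tilde s_t - \alpha/2) = \tilde s_t^2/2$.
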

\begin{proof}
For any given $\{\beta_t\given{1 \le t<T}\}$, we have $\beta_T \le \alpha_T \Paren{1 -\sum_{t'<T} \beta_{t'}-\alpha_T/2}$. Thus, in the optimal solution we have that $\beta_T^*=\frac{1}{2}(1-\sum_{ 1\le t'<T} \beta^*_{t'})^2$ and $\alpha_T=1-\sum_{1 \le t'<T}\beta^*_{t'}$. Thus,
	\[
	\textstyle \sum_{t \in [T]} \beta_t=\sum_{t<T}\beta_t+\beta_T^*=\frac{1}{2}\Paren{1+\big(\ts\sum_{t<T} \beta_t\big)^2 }.
	\]
	Therefore the maximization of $\sum_{t \le T}\beta_t$ is reduced to that of $\sum_{t \le T-1} \beta_t$. Repeating the above analysis $T$ times, we get our claim for $\beta^*_t$ and $\alpha_t^*$. Let $\gamma^*_T=\sum_{t \in [T]} \beta_t^*$. From the above analysis, we have
	\[
	\textstyle
	\gamma^*_1=\frac{1}{2},~~\gamma^*_{t}=\frac{1}{2}\Big(1+(\gamma_{t-1}^*)^2\Big),~\forall t \ge 2.
	\]
	Since $\gamma^*_1 \le 1$, we can prove that $\gamma^*_t \le 1$ for all $t$ by induction. Notice that $\gamma^*_{t}-\gamma^*_{t-1}=\frac{1}{2}(1-\gamma^*_{t-1})^2 \ge 0$, which implies that $\{\gamma_t\}$ is a non-decreasing. Since $\{\gamma_t\}$ is non-decreasing and bounded, it has a limit $\ell$. The only solution to the equation $\ell = (1 + \ell^2)/2$ is $\ell = 1$, and hence $\lim_{T \rightarrow \infty} \gamma^*_T=1$. 	
\end{proof}


\begin{theorem}
	\label{thm:skspthm2}
	Let $T$ be some slowly-growing function of $k$, e.g., $T = \log k$. For each $t \in [T]$, set $\bar{\alpha}_t=\alpha^*_t, ~\bar{\beta_t}= \beta^*_{t}-\frac{\alpha_t^*\Paren{\sum_{t'<t} \alpha^*_{t'}}}{k}$. Then we have (1) $\{\balp_t, \bbeta_t\given{t \in [T]}\}$ is feasible for the program~\eqref{eqn:sksp-T} and (2) $\sum_{t \in [T]} \bbeta_t=1+o(1)$ where $o(1)$ goes to $0$ when $k$ goes to infinity. Thus, $\func{SKSP}(\{\bar{\alpha}_t, \bar{\beta}_t| t \in [T]\})$ achieves a ratio of $k + o(k)$ for \sksp.
\end{theorem}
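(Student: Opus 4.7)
The plan is to verify the two items claimed in the theorem separately, and then translate the resulting per-item probability bound into the approximation guarantee.

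For part (1), feasibility of $\{\balp_t,\bbeta_t\}$ in program~\eqref{eqn:sksp-T}, I would proceed directly from Lemma~\ref{lem:sksp-}. Since $\beta^*_t=\alpha^*_t\bigl(1-\sum_{t'<t}\beta^*_{t'}-\alpha^*_t/2\bigr)$, and since $\bbeta_{t'}\le \beta^*_{t'}$ for every $t'$ (the correction term is nonnegative), we have $\sum_{t'<t}\bbeta_{t'}\le \sum_{t'<t}\beta^*_{t'}$, which gives
\[
\beta^*_t\;\le\;\alpha^*_t\Paren{1-\sum_{t'<t}\bbeta_{t'}-\tfrac{\alpha^*_t}{2}}.
\]
Subtracting $\alpha^*_t\bigl(\sum_{t'<t}\alpha^*_{t'}\bigr)/k$ from both sides yields $\bbeta_t\le \balp_t\bigl(1-\sum_{t'<t}\bbeta_{t'}-\balp_t/2\bigr)-\balp_t(\sum_{t'<t}\balp_{t'})/k$, which is exactly the constraint of~\eqref{eqn:sksp-T}.

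For part (2), I split $\sum_t \bbeta_t=\sum_t\beta^*_t - \tfrac{1}{k}\sum_t \alpha^*_t\sum_{t'<t}\alpha^*_{t'}$ and bound the two pieces. For the first piece, set $\gamma^*_T=\sum_{t\le T}\beta^*_t$ and $\epsilon_T=1-\gamma^*_T$. From the recurrence in Lemma~\ref{lem:sksp-}, $\epsilon_T=\tfrac{1}{2}(1+\gamma^*_{T-1})\epsilon_{T-1}=(1-\epsilon_{T-1}/2)\epsilon_{T-1}$, whose standard analysis (comparison with the ODE $\dot\epsilon=-\epsilon^2/2$) gives $\epsilon_T=O(1/T)$; hence $\sum_t\beta^*_t=1-O(1/T)$. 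For the second piece, since $\alpha^*_t=\epsilon_{t-1}$ and $\sum_t\epsilon_{t-1}=O(\log T)$, the double sum is $O\bigl((\log T)^2\bigr)$, so the whole correction is $O((\log T)^2/k)$. Taking $T=\log k$ yields $\sum_t\bbeta_t=1-O(1/\log k)-O((\log\log k)^2/k)=1-o(1)$, which is the claim (stated as $1+o(1)$ with $o(1)$ possibly negative).

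Finally, for part (3), Algorithm~\ref{alg:sksp-T} with simulation-based attenuation adds each $j$ to $\cR$ in chance $t$ with probability exactly $\bbeta_t x_j/k$ by Lemma~\ref{lem:sksp-T} combined with part~(1); since the events across chances are mutually exclusive by construction, $\Pr[E_j]=\Paren{\sum_t\bbeta_t}x_j/k=(1-o(1))x_j/k$. Linearity of expectation against $\vec w\cdot\vec x\ge \OPT$ gives $\bE[\vec w\cdot \vec X]\ge (1-o(1))\OPT/k$, i.e.\ a $k+o(k)$-approximation.

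The main obstacle I anticipate is the rate calculation for $\epsilon_T$: one has to be careful that $T=\log k$ really drives $\epsilon_T$ to $o(1)$ given the quadratic (slow) decay of the recurrence, and simultaneously keeps the aggregated sampling-error term $(\log T)^2/k$ vanishing. Once that $T$ lies in the sweet spot (any $T$ with $T\to\infty$ and $T=o(\sqrt k/\log k)$ works, and $\log k$ is a convenient choice), the rest is straightforward bookkeeping on top of Lemmas~\ref{lem:sksp-T} and~\ref{lem:sksp-}.
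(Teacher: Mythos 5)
Your proposal is correct and follows essentially the same route as the paper's own proof: part~(1) uses $\bbeta_{t'}\le\beta^*_{t'}$ to pass from the equality defining $\beta^*_t$ to the constraint of~\eqref{eqn:sksp-T}, and part~(2) splits $\sum_t\bbeta_t$ into $\gamma^*_T$ minus the $1/k$ correction and invokes Lemma~\ref{lem:sksp-}. The one place you add value is the explicit rate $\epsilon_T = O(1/T)$ (by comparison with $\dot\epsilon=-\epsilon^2/2$) and the tighter estimate $\sum_t\alpha^*_t\sum_{t'<t}\alpha^*_{t'}=O((\log T)^2)$; the paper just uses $\gamma^*_T\to 1$ (no rate) together with the crude bound $\alpha^*_t\le 1$ giving $T^2/k=\log^2 k/k$. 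Both suffice for the $o(1)$ conclusion. One small inconsistency in your remark: given your finer bound $(\log T)^2/k$, the permissible range of $T$ is much wider than the $T=o(\sqrt k/\log k)$ you quote — that restriction only arises under the paper's cruder $T^2/k$ bound. This is immaterial since $T=\log k$ lies well inside either regime.
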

\begin{proof}
Observe that for each $t \in [T]$, $\bbeta_t \le \beta_t^*$. Also,
	\begin{align*}
	\textstyle \bbeta_t&\textstyle =\frac{1}{2}\Paren{1-\sum_{t'<t} \beta^*_{t'}}^2-\frac{\alpha_t^*\Paren{\sum_{t'<t} \alpha^*_{t'}}}{k}\\
	&\textstyle = \alpha^*_t \Paren{1 -\sum_{t'<t} \beta^*_{t'}-\frac{\alpha^*_t}{2}}-\frac{\alpha_t^*\Paren{\sum_{t'<t} \alpha^*_{t'}}}{k}  \\
	&\textstyle \le  \balp_t \Paren{1 -\sum_{t'<t} \bbeta_{t'}-\frac{\balp_t}{2}}-\frac{\balp_t\Paren{\sum_{t'<t} \balp_{t'}}}{k}.
	\end{align*}
	Thus, we claim that $\{\balp_t, \bbeta_t\given{t \in [T]}\}$ is feasible for the program~\eqref{eqn:sksp-T}. Notice that 
	\begin{equation*}
	 \sum_{t \in [T]} \bbeta_t=\sum_{t \in [T]}\beta^*_t-\sum_{t \in [T]}\frac{\alpha_t^*\big(\sum_{t'<t} \alpha^*_{t'}\big)}{k} \ge \gamma^*_T-\frac{T^2}{k} =1-(1-\gamma^*_T+\frac{\log^2 k}{k}).
	\end{equation*}
	From Lemma~\ref{lem:sksp-}, we have that $(1-\gamma^*_T)=o(1)$ thus proving the theorem.	
\end{proof}

\section{Hypergraph Matching}
\label{sec:hypermatch}

In this section, we give a non-uniform attenuation approach to the hypergraph matching problem which leads to improved competitive ratios. Additionally as stated in the introduction, this takes us ``half the remaining distance'' towards resolving the stronger Conjecture~\ref{conj:fks-ours}.

Consider a hypergraph $\cH=(\cV, \cE)$. Assume each $e \in \cE$ has  cardinality $|e|=k_e$. Let $\vec{x}=\{x_e\}$ be an optimal solution to the \LP (\ref{eqn:hm-lp}). We start with a warm-up algorithm due to Bansal \etal~\cite{BGLMNR12}\footnote{Similar to our algorithm for \sksp, we use the notation $\alpha$ while \cite{BGLMNR12} use $1/\alpha$.}. \kaedit{Note that \cite{BGLMNR12} study the more general $\sksp$ problem. The algorithm they describe is the one described in Algorithm~\ref{alg:basic-k-set} in the previous section. Here, we show that when restricted to the special setting of hypergraph matching, the same algorithm yields an approximation ratio of $k+1 + o(k)$. We use this as a starting point to obtain an algorithm with improved competitive ratio. We recall their approach, denoted by $\HM(\alpha)$, in Algorithm~\ref{alg:hm-basic}.}

\IncMargin{1em}
\begin{algorithm}[!h]
\caption{$\HM(\alpha)$}
\label{alg:hm-basic}
\DontPrintSemicolon
Initialize \cR to be the empty set. We will add edges to this set during the algorithm and return it at the end as the matching. \; \label{ln:hm1:init}
For each $e \in \cE$, generate an independent Bernoulli random variable $Y_{e}$ with mean $\alpha x_e$. \; \label{ln:hm1:sampling}
\kaedit{For every edge $e \in \cE$ with $Y_e=1$, choose a random number $x_e \in [0, 1]$ uniformly at random and independent of the other edges. Let $\pi$ denote the ordering over $\cE$ such that the realized values $(x_e)_{e \in \cE}$ is sorted in ascending order.} Follow $\pi$ to check each edge one-by-one: add $e$ to \cR if and only if $Y_{e}=1$ and $e$ is safe (i.e., none of the vertices in $e$ are matched); otherwise skip it. \;	\label{ln:hm1:permutation}
Return \cR as the matching. \; \label{ln:hm1:return}
\end{algorithm}
\DecMargin{1em}

\begin{lemma}\label{lem:hm-basic}
Each edge $e$ is added to \cR with probability at least $\frac{x_e}{k_e+1 \kaedit{+ o(k_e)}}$ in $\HM(\alpha)$ with $\alpha=1$.
\end{lemma}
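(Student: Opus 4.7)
Fix an edge $e$. Since $\Pr[e \in \cR] = \Pr[Y_e=1]\cdot \Pr[e \in \cR \mid Y_e=1]$ and $\Pr[Y_e=1] = \alpha x_e$, it suffices to lower-bound $\Pr[e \in \cR \mid Y_e=1]$. The plan is to identify a simple sufficient event for $e$ being added to $\cR$, compute its probability in closed form, and then invoke Jensen's inequality.

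The key observation is this: if, among all sampled edges that share at least one vertex with $e$ (including $e$ itself), $e$ happens to have the smallest random key $\tau_e$, then when the algorithm processes $e$, none of its neighboring edges have been processed yet. Consequently every vertex of $e$ is still unmatched, so $e$ is safe and is added to $\cR$. This event is a sufficient (not necessary) condition for $e \in \cR$, and it is particularly convenient because its probability, conditioned on $Y_e=1$ and on the number $N$ of sampled competitors, equals $1/(N+1)$ by symmetry of the i.i.d.\ uniform keys.

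Let $N := \sum_{e' \neq e,\, e' \cap e \neq \emptyset} Y_{e'}$ denote the number of sampled edges that share a vertex with $e$. Because the $\{Y_{e'}\}$ are independent, $N$ is independent of $Y_e$, and
\[
\E[N] = \sum_{e' \neq e,\ e' \cap e \neq \emptyset} \alpha x_{e'} \;\le\; \alpha \sum_{v \in e} \sum_{e' \ni v,\ e' \neq e} x_{e'} \;\le\; \alpha \sum_{v \in e} (1 - x_e) \;=\; \alpha\, k_e (1 - x_e),
\]
where the inner sum is bounded by $1 - x_e$ via the LP constraint $\sum_{e' \ni v} x_{e'} \le 1$ from \eqref{eqn:hm-lp}. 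Thus, by the tower property,
\[
\Pr[e \in \cR \mid Y_e = 1] \;\ge\; \E\!\left[\tfrac{1}{N+1}\right].
\]

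Since $x \mapsto 1/(x+1)$ is convex on $[0,\infty)$, Jensen's inequality gives $\E[1/(N+1)] \ge 1/(\E[N]+1) \ge 1/(\alpha k_e(1-x_e)+1)$. Setting $\alpha = 1$ yields $\Pr[e \in \cR \mid Y_e=1] \ge 1/(k_e+1 - k_e x_e)$, and multiplying by $\Pr[Y_e=1]=x_e$ gives the desired bound $\Pr[e \in \cR] \ge x_e/(k_e+1)$, with the $o(k_e)$ slack in the statement absorbing any small error (e.g., from simulation-based attenuation applied elsewhere). The main ingredients are just (i) the ``first-sampled-neighbor wins'' sufficient event and (ii) Jensen on the convex function $1/(x+1)$; no obstacle is anticipated.
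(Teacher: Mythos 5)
Your proof is correct, and it takes a genuinely different route from the paper's. The paper conditions on the (continuous) permutation position $\pi(e)=t$, lower-bounds the safety event for each vertex $v\in e$ by the product $\prod_{f\ni v}(1-t\alpha x_f)$, applies the pointwise inequality $1-t\alpha x_f\ge(1-t\alpha)^{x_f}$ (Lemma~\ref{lem:HMtech}), and then integrates to get $\alpha x_e\int_0^1(1-t\alpha)^{k_e}\,dt=\frac{x_e}{k_e+1}(1-(1-\alpha)^{k_e+1})$. You instead use the discrete sufficient event that $e$ has the smallest key among sampled edges intersecting $e$, obtain $\Pr[e\in\cR\mid Y_e=1]\ge\E[1/(N+1)]$, and finish with Jensen's inequality via convexity of $x\mapsto 1/(x+1)$. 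Your route is more elementary (no integral, no auxiliary pointwise inequality) and in fact yields the slightly sharper bound $x_e/(k_e+1-k_ex_e)$ at $\alpha=1$. What the paper's formulation buys, and what your argument does not directly give, is the explicit integral $\int_0^1\prod_{e'\sim e}(1-tg(x_{e'}))\,dt$, which is the object the authors then optimize over the choice of attenuation function $g$ in Lemma~\ref{lem:hm-cF}; the Jensen step discards exactly the higher-order structure that the non-uniform-attenuation analysis exploits. So as a self-contained proof of this lemma your argument is fine (and cleaner), but it is not a drop-in replacement for the setup the paper reuses in the proof of Theorem~\ref{thm:hm}.

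One small presentational point: you write the sufficient event as ``$e$ has the smallest key among sampled edges sharing a vertex with $e$,'' and then condition on $Y_e=1$ and $N$. To make the $1/(N+1)$ step airtight it is worth saying explicitly that, conditioned on $Y_e=1$ and on the identities of the $N$ sampled neighbors, the $N+1$ keys are i.i.d.\ uniform and hence each is equally likely to be the minimum; and that $N$ is independent of $Y_e$ because the $Y_{e'}$ are mutually independent. You gesture at both, but they are the two hinges of the argument and deserve to be stated rather than implied.
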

\begin{proof}
The proof is very similar to that in~\cite{BCNSX15} (for instance see Lemma 10 in \cite{BCNSX15}). For each $e\in \cE$, let $\ad_e$ be the event that $e$ is added to \cR in $\HM(\alpha)$. 
\kaedit{
Consider an edge $e$ and for each vertex $v \in e$, let $L_v$ be the event that $v$ is unmatched when considering $e$ in the permutation $\pi$. Let $\pi(e)=t \in (0,1)$. Thus, $\ad_e$ occurs if $Y_e=1$ and every vertex $v \in e$ is unmatched when considering $e$ in the permutation. Therefore, we have the following.
	\begin{align}
	\textstyle \Pr[\ad_e]&\textstyle =\alpha x_e\int_0^1  \Pr\Bracket{ \bigwedge_{v \in e} L_v\given{\pi(e)=t}} dt \nonumber \\
	&\textstyle = \alpha x_e\int_0^1 \prod_{v \in e} \prod_{f: f\ni v}(1-t \alpha x_f) dt \label{eq:HM2}\\
	&  \textstyle \ge \alpha x_e\int_0^1 (1-t \alpha)^{\sum_{v \in e} \sum_{f: f\ni v} x_f} dt \label{eq:HM3}\\ 
	&\textstyle \ge  \alpha x_e\int_0^1 (1-t\alpha)^{k_e} dt \label{eq:HM4} \\
	& \textstyle =\frac{x_e}{k_e+1}\Big(1-(1-\alpha)^{k_e+1} \Big). \nonumber
	\end{align}
	Equation~\eqref{eq:HM2} can be obtained as follows. For a given vertex $v \in e$ to be safe, we want that none of the edges incident to $v$ be matched when considering $e$. This is precisely the negation of the probability that an edge incident to $v$ is matched earlier in the permutation. A similar argument is also made in Lemma 10 of \cite{BCNSX15}. Equation~\eqref{eq:HM3} is obtained as follows. Note that $x_f \leq 1$ and $\alpha=1$. This implies that $t \alpha \leq 1$ when $t \in [0, 1]$. From Lemma~\ref{lem:HMtech} in the appendix we have that $(1-t \alpha x_f) \geq (1-t \alpha)^{x_f}$. Equation~\eqref{eq:HM4} is obtained as follows. From the LP constraints we have that $\sum_{f: f\ni v} x_f \leq 1$. Moreover the number of vertices $v \in e$ is precisely $k_e$. Combining this with the fact that $1-t \alpha \leq 1$, we obtain the inequality. This completes the proof of the lemma.
}
\end{proof}

It can be shown that in $\HM(\alpha=1)$, the worst case occurs for the edges $e$ with $x_e \leq \epsilon \approx 0$ (henceforth referred to as ``tiny'' edges). In contrast, for the edges with $x_e \gg \epsilon$ (henceforth referred to as ``large'' edges), the ratio is much higher than the worst case bound. This motivates us to balance the ratios among tiny and large edges. Hence, we modify Algorithm~\ref{alg:hm-basic} as follows: we generate an independent Bernoulli random variable $Y_e$ with mean $g(x_e)$ for each $e$, where $g: [0,1] \rightarrow [0,1]$ is \kaedit{once differentiable and satisfying $g(0)=0$ and $g'(0)=1$}. Function $g$ will be fixed in the analysis. Algorithm~\ref{alg:hm-f} gives a formal description of this modified algorithm.

\IncMargin{1em}
\begin{algorithm}[!h]
\caption{$\HM(g)$}
\label{alg:hm-f}
\DontPrintSemicolon
Initialize \cR to be the empty set. \; \label{ln:hm2:init}
For each $e \in \cE$, generate an independent Bernoulli random variable $Y_{e}$ with mean $g(x_e)$. \; \label{ln:hm2:sampling}
\kaedit{For every edge $e \in \cE$ with $Y_e=1$, choose a random number $x_e \in [0, 1]$ uniformly at random and independent of the other edges. Let $\pi$ denote the ordering over $\cE$ such that the realized values $(x_e)_{e \in \cE}$ is sorted in ascending order.} Follow $\pi$ to consider each edge one by one: add $e$ to \cR if $Y_{e}=1$ and $e$ is safe; otherwise skip it. \;	\label{ln:hm2:permutation}		
Return $\cR$ as the matching \; \label{ln:hm2:return}
\end{algorithm}
\DecMargin{1em}

Observe that $\HM(\alpha)$ is the special case wherein $g(x_e)=\alpha x_e$. We now consider the task of finding the optimal $g$ such that the resultant ratio achieved by $\HM(g)$ is maximized. Consider a given $e$ with $x_e=x$. For any $e' \neq e$, we say $e'$ is a neighbor of $e$ (denoted by $e' \sim e$) if $e' \ni v$ for some $v \in e$. From the LP  (\ref{eqn:hm-lp}), we have $\sum_{e' \sim e} x_{e'} \le k_e(1-x)$. Let $\ad_{e}$ be the event that $e$ is added to \cR. By applying an analysis similar to the proof of Lemma~\ref{lem:hm-basic}, we get the probability of $\ad_{e}$ is at least
\begin{equation}
\label{fks:main-rhs}
\textstyle \Pr[\ad_{e}] \ge g(x) \int_0^1 \prod_{e' \sim e} (1-t g(x_{e'})) dt.
\end{equation}
Therefore our task of finding an optimal $g$ to maximize the r.h.s.\ of (\ref{fks:main-rhs}) is equivalent to finding $\max_g \cF(g)$, where $\cF(g)$ is defined in equation \eqref{eqn:hm-main}.
\begin{equation}\label{eqn:hm-main}
\textstyle \cF(g) \doteq \kaedit{\min_{x \in (0,1)}}
 \left[\frac{g(x)}{x} \times  r(x)\right].
\end{equation}
In equation~\eqref{eqn:hm-main}, $r(x)$ is defined as 
\begin{equation*}
\textstyle r(x) \doteq \min \int_0^1 \prod_{e' \sim e} \big(1-t g(x_{e'})\big) dt, \textstyle ~\text{where} \sum_{e' \sim e} x_{e'} \le k_e(1-x), x_{e'} \in [0,1], \forall e'. 
\end{equation*}
\vspace{-3mm}
\begin{lemma}\label{lem:hm-cF}
By choosing $g (x)= x(1-\frac{x}{2})$, we have that the minimum value of $\cF(g)$ in Eq.~\eqref{eqn:hm-main} is $\cF(g)=\frac{1}{k_e}(1-\exp(-k_e))$.
\end{lemma}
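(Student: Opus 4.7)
The problem naturally decomposes into two nested optimizations: first compute $r(x)$ (the inner minimum over the $x_{e'}$), then minimize $(g(x)/x)\cdot r(x)$ over $x \in (0,1)$. The plan is to handle each in turn, using a convenient algebraic identity satisfied by $g(x) = x - x^2/2$ to tame the inner problem.

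\textbf{Step 1 (inner minimization).} I will show that
\[
r(x) \;=\; \frac{1 - \exp(-k_e(1-x))}{k_e(1-x)},
\]
achieved in the limit where the mass $k_e(1-x)$ is split into arbitrarily many tiny pieces $x_{e'}$. The key algebraic observation is the identity
\[
g(a) + g(b) - g(a+b) \;=\; ab \qquad \text{for } g(x) = x - x^2/2.
\]
Expanding gives
\[
(1 - tg(a+b)) - (1-tg(a))(1-tg(b)) \;=\; t\,ab - t^2 g(a)g(b) \;\ge\; t\bigl(ab - g(a)g(b)\bigr) \;\ge\; 0,
\]
since $g(a) \le a$ and $g(b) \le b$ for $a,b\in[0,1]$ and $t\in[0,1]$. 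Thus \emph{splitting} any $x_{e'}$ into two pieces of the same total only decreases the product, so the infimum is approached by splitting into infinitely many tiny pieces.

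\textbf{Step 2 (pointwise lower bound on the product).} To confirm that the limiting value $\exp(-t\sum_{e'} x_{e'})$ is in fact a lower bound (and hence the infimum), I will establish
\[
\log(1-tg(x)) + tx \;\ge\; 0 \qquad \text{for all } x,t \in [0,1].
\]
This holds at $x=0$, and the derivative in $x$ equals $\frac{t(1 - tg(x) - g'(x))}{1-tg(x)} = \frac{t\, x(1-t+tx/2)}{1-tg(x)} \ge 0$, so the inequality follows. Summing over $e'$ gives $\prod_{e'}(1-tg(x_{e'})) \ge \exp(-t\sum_{e'} x_{e'}) \ge \exp(-t k_e(1-x))$. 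Integrating in $t$ yields the stated formula for $r(x)$, and combined with Step 1 we have equality.

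\textbf{Step 3 (outer minimization).} Set $F(x) := \frac{g(x)}{x}\, r(x) = \frac{(1-x/2)\bigl(1 - e^{-k_e(1-x)}\bigr)}{k_e(1-x)}$. A direct logarithmic-derivative computation gives
\[
\frac{F'(x)}{F(x)} \;=\; \frac{1}{(1-x)(2-x)} \;+\; \frac{k_e}{e^{k_e(1-x)}-1},
\]
where both terms are strictly positive for $x \in [0,1)$. Hence $F$ is strictly increasing on $[0,1)$, its infimum is attained as $x \downarrow 0$, and the value is $F(0) = \frac{1-\exp(-k_e)}{k_e}$, proving the lemma.

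\textbf{Main obstacle.} The only nontrivial issue is making the inner minimization rigorous: the feasible set is of unbounded dimension (arbitrary number of neighbors), so one cannot simply appeal to compactness or vertex enumeration. This is handled cleanly by the identity $g(a)+g(b)-g(a+b) = ab$, which makes the ``splitting reduces the product'' monotonicity explicit without requiring convexity gymnastics. Once that identity is in hand, both the reduction to the limit and the matching lower bound in Step 2 are short, and the outer optimization is routine calculus.
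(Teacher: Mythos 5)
Your Steps 1 and 2 are a genuinely nice alternative to the paper's argument for the inner minimization. The paper proves convexity of $\ln(1-tg(x))$ via a separate appendix lemma and invokes the extremal structure of convex optimization over a simplex; you instead use the clean algebraic identity $g(a)+g(b)-g(a+b)=ab$, which makes the ``splitting decreases the product'' monotonicity transparent and yields the lower bound $\prod_{e'}(1-tg(x_{e'}))\ge e^{-t\sum x_{e'}}$ directly by a one-variable derivative check. That is more elementary and self-contained.

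However, Step 3 contains a sign error that breaks the argument exactly at the place where the claim needs the most care. We have $\log F(x) = \log(1-x/2) + \log\bigl(1-e^{-k_e(1-x)}\bigr) - \log\bigl(k_e(1-x)\bigr)$, and
\[
\frac{d}{dx}\log\bigl(1-e^{-k_e(1-x)}\bigr) \;=\; \frac{-k_e\,e^{-k_e(1-x)}}{1-e^{-k_e(1-x)}} \;=\; -\,\frac{k_e}{e^{k_e(1-x)}-1},
\]
which is \emph{negative}, not positive. The correct logarithmic derivative is therefore
\[
\frac{F'(x)}{F(x)} \;=\; \frac{1}{(1-x)(2-x)} \;-\; \frac{k_e}{e^{k_e(1-x)}-1},
\]
a difference of two positive terms, so ``both terms are strictly positive'' is false and the monotonicity of $F$ does not follow immediately. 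Establishing $F'\ge 0$ requires showing the first term dominates the second; writing $u=k_e(1-x)$ and using $k_e=u/(1-x)$ this reduces to $\tfrac{1}{2-x}\ge \tfrac{u}{e^u-1}$. As $x\to 1$ (so $u\to 0$) both sides tend to $1$, and a first-order expansion gives $1-(1-x)$ versus $1-k_e(1-x)/2$, which requires $k_e\ge 2$ (and indeed for $k_e=1$ one can check $G(0.9)<G(0)$, so $F$ is not monotone and the stated formula is not the minimum). So the step you labeled ``routine calculus'' is in fact the delicate part: the paper asserts $G'\ge 0$ without proof, and your attempted one-line justification is incorrect. You should replace it with the correct derivative and a genuine comparison argument, restricted to the intended regime $k_e\ge 2$.
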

\begin{proof}
\kaedit{Note that $g(0)= 0$. Additionally, $g'(x) = 1-x$ and thus $g'(0) = 1$.}

 Consider a given $x_e=x$ with $|e|=k_e$. Notice that for each given $t \in (0,1)$,
	\[
	\textstyle \prod_{e' \sim e} \big(1-t g(x_{e'})\big) =\exp\Big( \sum_{e' \sim e} \ln(1-t g(x_{e'}))\Big).
	\]
	Note that $g(x) =x(1-x/2)$ satisfies the condition of Lemma~\ref{lem:hm-convex} in the Appendix and hence, for each given $t \in (0,1)$, the function $\ln(1-t g(x))$ is convex over $x \in [0,1]$. Thus to minimize $ \sum_{e' \sim e} \ln(1-t g(x_{e'}))$ subject to $0 \le x_{e'} \le 1$ and $\sum_{e'} x_{e'} \le \kappa$ with $\kappa=k_e(1-x)$, an adversary will choose the following worst case scenario: create $\kappa/\epsilon$ neighbors for $e$ with each $x_{e'}=\epsilon$ and let $ \epsilon\rightarrow 0$. Thus,
	\begin{equation*}
 	\min \prod_{e' \sim e} \big(1-t g(x_{e'})\big) =\min \exp\Big( \sum_{e' \sim e} \ln(1-t g(x_{e'}))\Big) = \lim_{\epsilon \rightarrow 0} (1-tg(\epsilon))^{\kappa/\epsilon}=\exp(-t\kappa).
	\end{equation*}
	\kaedit{The last inequality is obtained as follows. Let $y := \lim_{\epsilon \rightarrow 0} (1-tg(\epsilon))^{\kappa/\epsilon}$. Taking $\ln$ on both sides, we obtain $\ln y = \lim_{\epsilon \rightarrow 0} \tfrac{\kappa}{\epsilon} \ln (1-t g(\epsilon) )$. Using the L'Hopital's rule, we obtain $\ln y = -\kappa t \left( \lim_{\epsilon \rightarrow 0} \tfrac{g'(\epsilon)}{1-tg(\epsilon)} \right)$. Since $g'(0) = 1$ and $g(0) = 0$, thus the limit evaluates to $-\kappa t$. Taking exponentials on both sides, we obtain $y = \exp(-t \kappa)$.
	}
	Therefore, for each fixed $x_e=x$, the optimal value to the inner minimization program in (\ref{eqn:hm-main}) has the following analytic form.
	\begin{equation}\label{eqn:hm-main-proof}
	\min \int_0^1 \prod_{e' \sim e} \big(1-t g(x_{e'})\big) dt= \int_0^1 \exp(-t\kappa) dt=
	\frac{1-\exp(-\kappa)}{\kappa}.
	\end{equation}
	Plugging this back into (\ref{eqn:hm-main}), we obtain
	$\cF(g)=\min_{x \in [0,1]} G(x)$, where
	\[
	\textstyle G(x) \doteq \Big(1-\frac{x}{2}\Big) \frac{1}{k_e(1-x)}\Big(1-\exp(-k_e(1-x)) \Big).
	\]
	Note that $G'(x) \ge 0$ whenever $x\in [0,1]$ and thus, the minimum value of $G(x)$ in the range $x \in (0,1)$ is $G(0) = \frac{1}{k_e}(1-\exp(-k_e))$. \kaedit{Moreover, for the class of functions $g$ with $g(0)=0$ and $g'(0)=1$, this analysis is tight.}
\end{proof}

We now prove the main result, Theorem~\ref{thm:hm}.

\begin{proof}
Consider $\HM(g)$ as shown in Algorithm~\ref{alg:hm-f} with $g(x)=x(1-x/2)$. Let $\cR$ be the random matching returned. From Lemma~\ref{lem:hm-cF}, we have that each $e$ will be added to \cR with probability at least $x_e \cF(g)=\frac{x_e}{k_e}(1-\exp(-k_e))$. 
\end{proof}

\section{More Applications}
\label{sec:otherapp}
		In this section, we briefly describe how a simple simulation-based attenuation can lead to improved contention resolution schemes for \ufp with unit demands. This version of the  problem was studied by Chekuri \etal~\cite{CMS07} where they gave a $4$-approximation for the linear objective case. They also described a simple randomized algorithm that obtains a  $27$-approximation\footnote{This can be obtained by maximizing over all $0  < b < 1/3e$ in Lemma 4.19 of~\cite{CVZ14}, which yields approximately $1/27$.}. Later, Chekuri \etal~\cite{CVZ14} developed the machinery of contention resolution schemes, through which they extended it to a $27/ \eta_f$-approximation algorithm for non-negative submodular objective functions (where $\eta_f$ denotes the approximation ratio for maximizing non-negative submodular functions\footnote{See the section on \kcspip for a discussion on various values of $\eta_f$ known.}). We show that using simple attenuation ideas can further improve this $27/\eta_f$-approximation to an $8.15/\eta_f$-approximation. We achieve this by improving the $1/27$-balanced CR scheme\footnote{See the section on extension to sub-modular objectives in \kcspip for defintion of a balanced CR scheme.} to a $1/8.15$-balanced CR scheme and hence, from Theorem 1.5 of \cite{CVZ14}, the approximation ratio follows.
		
	Consider the natural packing \LP relaxation. Associate a variable $x_i$ with every demand pair. Our constraint set is: for every edge $e$, $\sum_{i: e \in \cP_i} x_i \leq u_e$, where $u_e$ is the capacity of $e$. Our algorithm (formally described in Algorithm~\ref{alg:ufp}) proceeds similar to the one described in \cite{CMS07}, except at line 3, where we use our attenuation ideas.
		
		\IncMargin{1em}
		\begin{algorithm}[!h]
			\caption{Improved contention resolution Scheme \ufp with unit demands}
			\label{alg:ufp}
			\DontPrintSemicolon
				Root the tree $T$ arbitrarily. Let $\lca(s_i, t_i)$ denote the least common ancestor of $s_i$ and $t_i$.\; \label{ln:ufp:init}
				Construct a random set $\cR$ of demand pairs, by including each pair in $\cR$ independently with probability $\alpha x_i$.\; \label{ln:ufp:sampling}
				Consider the demand pairs in increasing distance of their $\lca$ from the root. Let $\cR_{\text{final}}$ denote the set of pairs included in the rounded integral solution. For every demand pair $i$, simulate the run of the algorithm from beginning (\ie produce many samples of $\cR$ and separately run the algorithm on these samples) to obtain the estimate $\eta_i$ of the probability of $i$ to be safe (\ie none of the edges in the path has exhausted capacities). Suppose $i$ is safe, add it to $\cR_{\text{final}}$ with probability $\beta/\eta_i$.\; \label{ln:ufp:attenuation}
				Return $\cR_{\text{final}}$ as the set of demands chosen from routing.\; \label{ln:ufp:return}
		\end{algorithm}
		\DecMargin{1em}
		
		\xhdr{Analysis.} For the most part, the analysis is similar to the exposition in Chekuri \etal~\cite{CVZ14}. We will highlight the part where \emph{attenuation} immediately leads to improved bounds.
		
		Consider a fixed pair $(s_{i^*}, t_{i^*})$ and let $\ell := \lca(s_{i^*}, t_{i^*})$ in $T$. Let $P$ and $P'$ denote the unique path in the tree from $\ell$ to $s_{i^*}$ and $\ell$ to $t_{i^*}$ respectively. As in \cite{CVZ14}, we will upper bound the probability of ${i^*}$ being unsafe due to path $P$ and a symmetric argument holds for $P'$. Let $e_1, e_2, \ldots, e_\lambda$ be the edges in $P$ from $\ell$ to $s_{i^*}$. Let $E_j$ denote the event that ${i^*}$ is not safe to be added in line \ref{ln:ufp:attenuation} of Algorithm~\ref{alg:ufp}, because of overflow at edge $e_j$. Note that for $j>h$ and $u_{e_j} \geq u_{e_h}$, event $E_j$ implies $E_h$ and hence $\Pr[E_j] \leq \Pr[E_h]$. Note, this argument does not change due to attenuation since the demands are processed in increasing order of the depth and any \emph{chosen} demand pair using edge $e_j$ also has to use $e_h$ up until the time ${i^*}$ is considered. Thus, we can make a simplifying assumption similar to \cite{CVZ14} and consider a strictly decreasing sequence of capacities $u_{e_1} > u_{e_2} > \ldots > u_{e_\lambda} \geq 1$. Let $\cS_j$ denote the set of demand pairs that use edge $e_j$. The following steps is the part where our analysis differs from \cite{CVZ14} due to the introduction of attenuation. 
		
		Define, $\beta := 1-2\alpha e/(1- \alpha e)$ and $\gamma := \alpha \beta$. Note that without attenuation, we have $\eta_i \geq \beta$ for all $i$ from the analysis in \cite{CVZ14}.
		
		Let $E'_j$ denote the event that at least $u_{e_j}$ demand pairs out of $\cS_j$ are included in the final solution. Note that $\Pr[E_j] \leq \Pr[E'_j]$. From the LP constraints we have $\sum_{i \in \cS_j} x_i \leq u_{e_j}$. 
		
		Let $X_i$ denote the indicator random variable for the following event: \{$i \in \cR \wedge i \in \cR_{\text{final}}$\}. We define $X := \sum_{i \in \cS_j} X_i$. 
		
		Note that, event $E'_j$ happens if and only if $X \geq u_{e_j}$ and hence we have, $\Pr[E'_j] = \Pr[X \geq u_{e_j}]$.
		Additionally, we have that the $X_i$'s are ``cylindrically negatively correlated" and hence we can apply the Chernoff-Hoeffding bounds due to Panconesi and Srinivasan~\cite{PS97}. Observe that $\bE[X] \leq \gamma u_{e_j}$(since each $i$ is included in $\cR$ independently with $\alpha x_i$ and then included in $\cR_{\text{final}}$ with probability exactly $\beta$) and for $1 + \delta = 1/\gamma$, we have $\Pr[X \geq u_{e_j}] \leq (e/(1+\delta))^{(1+\delta)\mu} \leq (\gamma e)^{u_{e_j}}$. Hence, taking a union bound over all the edges in the path, we have the probability of ${i^*}$ being unsafe due to an edge in $P$ to be at most $\sum_{q=1}^{\infty} (\gamma e)^{\ell} = (\gamma e)/(1-\gamma e)$ (We used the fact that $u_{e_1} > u_{e_2} > \ldots > u_{e_\lambda} \geq 1$). Combining the symmetric analysis for the other path $P'$, we have the probability of ${i^*}$ being unsafe to be at most $2\gamma e/(1-\gamma e)$. Note that we used the fact that $\gamma e < 1$ in the geometric series. Additionally, since $\gamma \leq 1$, we have that $2\gamma e/(1-\gamma e) \leq 2 \alpha e/(1-\alpha e)$. Hence, using $\eta_i \geq \beta$ is justified. 
		
		Now to get the claimed approximation ratio, we solve the following maximization problem: 
		\[
			\max_{0 \leq \alpha \leq 1} \{ \alpha \cdot (1-2\gamma e/(1-\gamma e)) : \beta=1-2\alpha e/(1- \alpha e), \gamma=\alpha \beta, 0 \leq \gamma e < 1/3 \}.
		\]
		 which yields a value of $1/8.15$.
		
\section{Conclusion and Open Problems}
\label{sec:conclusion}
		In this work, we described  two unifying ideas, namely non-uniform attenuation and multiple-chance probing to get bounds matching integrality gap (up to lower-order terms) for the \kcspip and its stochastic counterpart \sksp. We generalized the conjecture due to F\"uredi \etal~\cite{FKS93} (FKS conjecture) and went ``halfway'' toward resolving this generalized form using our ideas. Finally, we showed that we can improve the contention resolution schemes for \ufp with unit demands. Our algorithms for \kcspip can be extended to non-negative submodular objectives via the machinery developed in Chekuri \etal~\cite{CVZ14} and the improved contention resolution scheme for \ufp with unit demands leads to improved approximation ratio for submodular objectives via the same machinery.
		
		This work leaves a few open directions. The first concrete problem is to completely resolve the FKS conjecture and its generalization. We believe non-uniform attenuation and multiple-chances combined with the primal-dual techniques from \cite{FKS93} could give the machinery needed to achieve this. Other open directions are less well-formed. One is to obtain stronger \LP relaxations for the \kcspip and its stochastic counterpart \sksp such that the integrality gap is reduced. The other is to consider improvements to related packing programs, such as column-restricted packing programs or general packing programs.

\section*{Acknowledgements}
	 The authors would like to thank David Harris, the anonymous reviewers of SODA 2018 and the anonymous reviewers of TALG for useful suggestions which led to this improved version of the article. In particular, the authors would like to thank David Harris for pointing out the related works for the distributed implementation of graph coloring and showing the equivalence between the Conjecture~\ref{conj:fks} and Conjecture~\ref{conj:fks-ours}. Additionally, the authors are also grateful to Marek Adamcyzk who communicated his new (independent) result for the \sksp problem to us.

\bibliographystyle{ACM-Reference-Format}
\bibliography{refs}

\appendix
\section{Technical Lemmas}
\label{sec:techlemmas}
\kaedit{
In the main section, we use the following two variants of the Chernoff-Hoeffding bounds in the analysis of \kcspip algorithm. Theorem~\ref{appx:chernoffmult} is the standard multiplicative form while Theorem~\ref{appx:chernoff} can be derived from the standard form.

\begin{theorem}[Multiplicative form of Chernoff-Hoeffding bounds]
	\label{appx:chernoffmult}
		Let $X_1, X_2, \ldots, X_n \in [0, 1]$ be independent random variables such that $\mathbb{E}\left[ \sum_{i \in [n]} X_i \right] \leq \mu$. Then for every $\delta > 0$ we have,
		\[
			 \textstyle \Pr[\sum_{i \in [n]} X_i \geq (1+ \delta)\mu] \leq \left( \frac{e^{\delta}}{(1+\delta)^{(1+\delta)}} \right)^{\mu} \leq \exp\left[ -\tfrac{\delta^2 \mu}{2 + \delta} \right].
		\]
\end{theorem}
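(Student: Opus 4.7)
The plan is the standard exponential-moment / Chernoff approach. Let $S := \sum_{i=1}^{n} X_i$ and fix an auxiliary parameter $t > 0$ to be optimized later. First I would apply Markov's inequality to the nonnegative variable $e^{tS}$:
\[
\Pr[S \geq (1+\delta)\mu] \;=\; \Pr\bigl[e^{tS} \geq e^{t(1+\delta)\mu}\bigr] \;\leq\; \frac{\mathbb{E}[e^{tS}]}{e^{t(1+\delta)\mu}}.
\]
By independence of the $X_i$, the numerator factors as $\prod_i \mathbb{E}[e^{tX_i}]$, reducing the task to bounding each factor.

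The key per-variable estimate uses that $X_i \in [0,1]$ and that $x \mapsto e^{tx}$ is convex: the secant bound gives $e^{tx} \leq 1 + (e^t-1)x$ for $x \in [0,1]$, so $\mathbb{E}[e^{tX_i}] \leq 1 + (e^t-1)\mathbb{E}[X_i] \leq \exp\bigl((e^t-1)\mathbb{E}[X_i]\bigr)$, where the last step uses $1+y \leq e^y$. Multiplying over $i$ and using $\sum_i \mathbb{E}[X_i] \leq \mu$ gives $\mathbb{E}[e^{tS}] \leq \exp\bigl((e^t-1)\mu\bigr)$. Combining this with the Markov step,
\[
\Pr[S \geq (1+\delta)\mu] \;\leq\; \exp\bigl((e^t - 1)\mu - t(1+\delta)\mu\bigr).
\]
Optimizing by setting $t = \ln(1+\delta)$ (the unique critical point in $t > 0$) yields the first inequality $\bigl(e^\delta/(1+\delta)^{1+\delta}\bigr)^\mu$.

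For the second inequality I would show the deterministic bound $e^\delta/(1+\delta)^{1+\delta} \leq \exp\bigl(-\delta^2/(2+\delta)\bigr)$ for all $\delta > 0$, which after taking logarithms is equivalent to
\[
h(\delta) \;:=\; (1+\delta)\ln(1+\delta) - \delta - \frac{\delta^2}{2+\delta} \;\geq\; 0.
\]
I would verify $h(0) = 0$ and then show $h'(\delta) \geq 0$ for $\delta \geq 0$; a direct computation gives $h'(\delta) = \ln(1+\delta) - \delta(\delta+4)/(2+\delta)^2$, and then a second differentiation shows $h''(\delta) \geq 0$ (equivalently, that $(2+\delta)^3 \geq (1+\delta)(8 + 4\delta)$, which simplifies to a trivial polynomial inequality). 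This single-variable calculus check is the only real work beyond the textbook Chernoff derivation, and it is the step I expect to be the main (albeit routine) obstacle: choosing how to present the monotonicity cleanly so as not to bury the argument in algebra. Once $h \geq 0$ is established, chaining with the first bound completes the proof.
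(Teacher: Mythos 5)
Your proof is the standard Chernoff derivation (Markov on $e^{tS}$, the secant bound $e^{tx} \le 1 + (e^t-1)x$ valid for $x\in[0,1]$, optimize at $t=\ln(1+\delta)$, then a one-variable calculus check for the second inequality), and it is correct. The paper itself states Theorem~\ref{appx:chernoffmult} without proof, treating it as a textbook fact, so there is no paper argument to compare against; your write-up is a valid proof of the cited statement. One small slip worth noting: a careful computation gives $h''(\delta) = \frac{1}{1+\delta} - \frac{8}{(2+\delta)^3}$, so the condition $h''(\delta)\ge 0$ is \emph{equivalent} to $(2+\delta)^3 \ge 8(1+\delta)$, not to $(2+\delta)^3 \ge (1+\delta)(8+4\delta)$ as you wrote. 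Your stated inequality is strictly stronger (it differs by $4\delta^2(1+\delta)$ on the right) and happens to be true as well, since $(2+\delta)^3 - (1+\delta)(8+4\delta) = \delta^3 + 2\delta^2 \ge 0$, so the logical chain still closes; but the word ``equivalently'' is inaccurate. Either form gives $h''\ge 0$, hence $h'\ge h'(0)=0$, hence $h\ge h(0)=0$, completing the proof.
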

\begin{theorem}[Chernoff-Hoeffding bounds]
	\label{appx:chernoff}
	Suppose $c_1, c_2$, and $k$ are positive values with $c_2 \geq \frac{c_1}{k}$. Let $X_1, X_2, \ldots, X_n \in [0,1]$ be independent random variables satisfying $\mathbb{E}[\sum_{i}X_i] \leq \frac{c_1}{k}$. Then,
		\[
			\textstyle \Pr[\sum_{i}X_i \geq c_2] \leq \left( \frac{c_1 e}{k c_2} \right) ^{c_2}
		\]
\end{theorem}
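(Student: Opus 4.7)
The plan is to derive this bound as a direct consequence of the multiplicative Chernoff--Hoeffding bound already recorded as Theorem~\ref{appx:chernoffmult}, via a one-line change of parameterization. Let $\mu := \mathbb{E}[\sum_i X_i]$, so that $\mu \le c_1/k$ by hypothesis. The degenerate case $\mu = 0$ is trivial since then every $X_i$ is almost surely zero and the left-hand probability is $0$; so I would dispose of it first and henceforth assume $\mu > 0$.

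The key step is to choose $\delta := c_2/\mu - 1$, which is nonnegative because $c_2 \ge c_1/k \ge \mu$. With this $\delta$, one has $(1+\delta)\mu = c_2$, so Theorem~\ref{appx:chernoffmult} yields
\[
\Pr\!\left[\sum_i X_i \ge c_2\right] \le \left(\frac{e^{\delta}}{(1+\delta)^{1+\delta}}\right)^{\!\mu} = \frac{e^{\,\delta\mu}}{(1+\delta)^{(1+\delta)\mu}} = \frac{e^{\,c_2-\mu}}{(c_2/\mu)^{c_2}} = \left(\frac{\mu e}{c_2}\right)^{\!c_2} e^{-\mu}.
\]

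To conclude, I would drop the harmless factor $e^{-\mu}\le 1$ and exploit the fact that the map $a\mapsto (ae/c_2)^{c_2}$ is strictly increasing for $a>0$ (since $c_2>0$), so I can replace $\mu$ by its upper bound $c_1/k$:
\[
\Pr\!\left[\sum_i X_i \ge c_2\right] \le \left(\frac{\mu e}{c_2}\right)^{\!c_2} \le \left(\frac{c_1 e}{k c_2}\right)^{\!c_2},
\]
which is exactly the claimed bound. There is no real obstacle: the entire argument is a reparameterization of the standard multiplicative form, and the hypothesis $c_2 \ge c_1/k$ is used in exactly one place, namely to guarantee $\delta \ge 0$ so that Theorem~\ref{appx:chernoffmult} is applicable.
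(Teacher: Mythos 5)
Your proof is correct and follows essentially the same route as the paper's: both reparameterize the multiplicative Chernoff--Hoeffding bound (Theorem~\ref{appx:chernoffmult}) by choosing $\delta$ so that $(1+\delta)\mu = c_2$ and then simplify. Your version is slightly more careful in explicitly dispatching the $\mu=0$ case and in justifying the replacement of $\mu$ by its upper bound $c_1/k$ via monotonicity, steps the paper glosses over, but the substance is identical.
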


\begin{proof}
	The standard form of the Chernoff-Hoeffding bounds (Theorem~\ref{appx:chernoffmult}) yields $\Pr[\sum_{i} X_i \geq (1+ \delta)\mu] \leq \left( \frac{e^{\delta}}{(1+\delta)^{(1+\delta)}} \right)^{\mu}$, for $\delta \geq 0$. 	
	Note that we want $(1+\delta)(c_1/k) = c_2$, hence giving $1+\delta = \frac{kc_2}{c_1}$. Plugging this into the standard form of the Chernoff-Hoeffding bounds gives us the desired bound.
\end{proof}
}

\begin{lemma}[Convexity]\label{lem:hm-convex}
	Assume $f:[0,1] \rightarrow [0,1]$ and it has second derivatives over $[0,1]$. Then we have that $\ln(1-t f(x))$ is a convex function of $x\in [0,1]$ for any given $t\in (0,1)$ iff $(1-f)(-f'') \ge f'^2$ for all $x \in [0,1]$. 
\end{lemma}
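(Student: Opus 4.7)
The plan is to reduce convexity of $h_t(x) := \ln(1-tf(x))$ to the sign of an explicit expression, and then to exploit the fact that the resulting inequality is \emph{affine} in $t$. First I would differentiate twice, obtaining
\[
h_t''(x) = \frac{-t\,f''(x)\bigl(1-tf(x)\bigr) - t^2\,f'(x)^2}{\bigl(1-tf(x)\bigr)^{2}}.
\]
Since $f(x)\in[0,1]$ and $t\in(0,1)$ imply $1-tf(x) > 0$, the denominator is positive and $t>0$, so convexity of $h_t$ is equivalent to
\[
\phi_t(x) \;:=\; -f''(x)\bigl(1-tf(x)\bigr) - t\,f'(x)^2 \;\ge\; 0 \qquad \forall x\in[0,1].
\]
Expanding gives $\phi_t(x) = -f''(x) - t\bigl(f'(x)^2 - f(x)f''(x)\bigr)$, which is affine in $t$.

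For the $(\Leftarrow)$ direction, I would assume $(1-f)(-f'')\ge f'^{2}$ pointwise. This is precisely $\phi_1(x)\ge 0$. Moreover, since $1-f\ge 0$ and $f'^{2}\ge 0$, the hypothesis forces $-f''\ge 0$, i.e.\ $\phi_0(x)=-f''(x)\ge 0$. Because $\phi_t(x)$ is affine in $t$, nonnegativity at the two endpoints $t=0$ and $t=1$ gives $\phi_t(x)\ge 0$ for every $t\in[0,1]$, so $h_t$ is convex for every $t\in(0,1)$.

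For the $(\Rightarrow)$ direction, assume $h_t$ is convex for every $t\in(0,1)$, i.e.\ $\phi_t(x)\ge 0$ for every such $t$ and every $x$. By continuity of $\phi_t(x)$ in $t$, letting $t\to 1^-$ yields $\phi_1(x)\ge 0$, which is exactly $(1-f(x))(-f''(x))\ge f'(x)^{2}$.

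There is no serious obstacle here; the argument is essentially a bookkeeping exercise once one differentiates. The only point worth flagging is the interpretation of ``for any given $t\in(0,1)$'' in the statement: since the right-hand condition is independent of $t$, the natural reading is ``for every $t\in(0,1)$,'' and the affinity of $\phi_t$ in $t$ is the structural fact that reduces an a priori $t$-parametrized family of inequalities to the single worst case at $t=1$.
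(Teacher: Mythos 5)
Your approach---differentiate twice, clear the positive denominator and the factor of $t$, and reduce convexity to the sign of $\phi_t(x) := -f''(x)\bigl(1-tf(x)\bigr) - t\,f'(x)^2$---is the same route the paper's one-line ``we can verify'' is gesturing at, but you make explicit the structural fact that actually makes the biconditional work: $\phi_t$ is affine in $t$, $\phi_t = (1-t)\phi_0 + t\phi_1$ with $\phi_0 = -f''$ and $\phi_1 = (1-f)(-f'') - f'^2$, so ``convex for every $t\in(0,1)$'' collapses to the two endpoint checks $\phi_0\ge 0$ and $\phi_1\ge 0$. Your reading of ``for any given $t\in(0,1)$'' as a universal quantifier is also the only one under which the lemma is true: for a single fixed $t<1$, convexity of $h_t$ is the genuinely $t$-dependent inequality $(1-tf)(-f'')\ge t f'^2$, not the stated $t$-free one.

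One step needs a patch. In the $(\Leftarrow)$ direction you deduce $-f''\ge 0$ from ``$(1-f)(-f'')\ge f'^2\ge 0$ and $1-f\ge 0$,'' but the inference ``$ab\ge 0$ and $a\ge 0$ imply $b\ge 0$'' fails when $a=0$: at a point with $f(x)=1$ the hypothesis constrains only $f'(x)$ (forcing $f'(x)=0$), not $f''(x)$. The fix is short: since $f\le 1$ on $[0,1]$, any such $x$ is a global maximizer of $f$, so $f''(x)\le 0$ by the second-order necessary condition in the interior, or by a one-sided Taylor expansion at an endpoint; hence $\phi_0(x)\ge 0$ there too. (In the paper's application the lemma is invoked with $f=g$ where $g(x)=x(1-x/2)\le 1/2$, so this corner case is vacuous, but a self-contained proof of the lemma as stated should address it.)
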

\begin{proof}
	Consider a given $t \in (0,1)$ and let $F(x)=\ln(1-t f(x))$. $F(x)$ is convex over $[0,1]$ iff $F'' \ge 0$ for all $x \in [0,1]$. We can verify that it is equivalent to the condition that $(1-f)(-f'') \ge f'^2$ for all $x \in [0,1]$.
\end{proof}
\kaedit{
\begin{lemma}
	\label{lem:HMtech}
	Let $0\leq a \leq 1$ be a given real number. Then for every $x \in [0, 1]$ we have that
	\[
			(1-a)^x \leq 1-ax.
	\]
\end{lemma}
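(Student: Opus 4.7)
The plan is to prove the inequality by invoking the convexity of the function $g(x) := (1-a)^x$ viewed as a function of $x \in [0,1]$ with $a \in [0,1]$ fixed. Writing $g(x) = e^{x \ln(1-a)}$, one computes $g''(x) = (1-a)^x (\ln(1-a))^2 \geq 0$ for $x \in [0,1]$, so $g$ is convex on $[0,1]$. (The boundary case $a=1$ can be handled separately or by a continuity argument, since then $(1-a)^x = 0$ for $x > 0$ and $= 1$ at $x = 0$; the inequality $0 \leq 1-x$ still holds, and at $x = 0$ both sides equal $1$.)

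Once convexity is in hand, the inequality is immediate from the chord bound. Namely, for any $x \in [0,1]$, write $x = (1-x)\cdot 0 + x\cdot 1$, so that by convexity
\begin{equation*}
(1-a)^x = g(x) \leq (1-x)\, g(0) + x\, g(1) = (1-x)\cdot 1 + x\cdot(1-a) = 1 - ax,
\end{equation*}
which is exactly the desired bound.

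There is essentially no obstacle here; the only minor care needed is the degenerate case $a = 1$ (where $\ln(1-a)$ is undefined), which is handled by direct inspection as above. Alternatively, one could avoid any case split by defining $h(x) := 1 - ax - (1-a)^x$, noting that $h(0) = h(1) = 0$, and observing that $h$ is concave since $h''(x) = -(1-a)^x (\ln(1-a))^2 \leq 0$; a concave function that vanishes at both endpoints of an interval is nonnegative on the interval, yielding $h(x) \geq 0$ and therefore $(1-a)^x \leq 1 - ax$ for all $x \in [0,1]$.
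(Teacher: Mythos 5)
Your proof is correct, and it takes a genuinely different (and in fact sounder) route than the paper's. The paper attempts to show that $g(x) := 1 - ax - (1-a)^x$ is non-negative by arguing that $g(0) = 0$ and that $g$ is increasing on $[0,1]$, via the claimed derivative $g'(x) = (1-a) - (1-a)^x \ln(1-a)$. That computation is incorrect: the true derivative is $g'(x) = -a - (1-a)^x \ln(1-a)$, and this is \emph{not} non-negative throughout $[0,1]$ (for example, $g'(1) = -a + (1-a)\ln\bigl(\tfrac{1}{1-a}\bigr) < 0$ for all $a \in (0,1)$, since $\ln\bigl(\tfrac{1}{1-a}\bigr) < \tfrac{a}{1-a}$). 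So the paper's monotonicity argument fails, even though the lemma itself is true. Your approach sidesteps this entirely: you observe that $x \mapsto (1-a)^x$ is convex (second derivative $(1-a)^x(\ln(1-a))^2 \geq 0$) and simply apply the chord bound on $[0,1]$, noting that the line through $(0,1)$ and $(1,1-a)$ is exactly $1-ax$. Equivalently, as you note, $h(x) := 1 - ax - (1-a)^x$ is concave and vanishes at both endpoints, hence is non-negative on $[0,1]$. This is the correct repair of the paper's argument, and your handling of the boundary case $a=1$ (where $\ln(1-a)$ is undefined) is the right extra care to take.
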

\begin{proof}
	Consider the function $g(x) := 1-ax-(1-a)^x$. We will show that when $x \in [0, 1]$ we have $g(x) \geq 0$. This will complete the proof. Note that $g(0) = 0$. We will now show that $g(x)$ is increasing in $x \in [0, 1]$. We have that $g'(x) = (1-a) - (1-a)^x \ln (1-a) = (1-a) + (1-a)^x \ln \left( \tfrac{1}{1-a} \right) \geq 0$.	
\end{proof}

}
\section{Submodular Functions}
\label{appx:submodular}

In this section, we give the required background needed for submodular functions. 

\begin{definition}[Submodular functions]
A function $f: 2^{[n]} \rightarrow \bR_+^{+}$ on a ground-set of elements $[n] := \{1, 2, \ldots, n\}$ is called submodular if for every $A, B \subseteq [n]$, we have that $f(A \cup B) + f(A\cap B) \leq f(A) + f(B)$. Additionally, $f$ is said to be monotone if for every $A \subseteq B \subseteq [n]$, we have that $f(A) \leq f(B)$. 
\end{definition}

For our algorithms, we assume a \emph{value-oracle} access to a submodular function. This means that, there is an oracle which on querying a subset $T \subseteq [n]$, returns the value $f(T)$.

\begin{definition}[Multi-linear extension]
		The multi-linear extension of a submodular function $f$ is the continuous function $F: [0, 1]^n \rightarrow \bR_+^+$ defined as 
		\[
				\textstyle	F(x) := \sum_{T \subseteq [n]} (\prod_{j \in T} x_j \prod_{j \not \in T}(1-x_j)) f(T)
		\]
\end{definition}

Note that the multi-linear function $F(x) = f(x)$ for every $x \in \{0, 1\}^n$. The multi-linear extension is a useful tool in maximization of submodular objectives. In particular, the above has the following probabilistic interpretation. Let $S \subseteq [n]$ be a random subset of items where each item $i \in [n]$ is added into $S$ with probability $x_i$. We then have $F(x) = \bE_{S \sim x}[f(S)]$. It can be shown that the two definitions of $F(x)$ are equivalent. Hence, a lower bound on the value of $F(x)$ directly leads to a lower bound on the expected value of $f(S)$.

\end{document}